\def\e#1{\emph{#1}}
\newcommand{\eat}[1]{}
\newcommand{\set}[1]{\{#1\}}
\def\eqdef{\stackrel{\textsf{\tiny def}}{=}}
\newcommand{\algname}[1]{{\sf #1}}
\def\myrulewidth{3.20in}
\def\therule{\rule{\myrulewidth}{0.2pt}}
\newenvironment{algseries}[2]
{\centering\begin{figure}[#1]\begin{center}\def\thecaption{\caption{#2}}
			\vskip-0.8em\begin{tabular}{p{\myrulewidth}}\therule\end{tabular}\vskip0.2em}
		{\thecaption \end{center}\end{figure}}
\newenvironment{insidealg}[2]
{\normalsize\begin{insidecode}{#1}{#2}{Algorithm}}
	{\end{insidecode}}
\newenvironment{insidecode}[3]
{
	\begin{tabular}{p{\myrulewidth}}
		\multicolumn{1}{c}{\rule{0mm}{3mm}{\bf #3} $\algname{#1}(\mbox{#2})$\vspace{-0.6em}}\\
		\therule\vskip-0.8em\therule
		\vspace{-1em}
		\begin{algorithmic}[1]}
		{\end{algorithmic}
		\vskip-0.3em\therule
	\end{tabular}}
	\def\arity{\mathord{\mathrm{arity}}}
	\def\consts{\mathsf{Const}}
	\newcommand{\sch}[1]{\mathbf{#1}}
	\newcommand{\scs}{\sch{S}}
	\newcommand{\graph}{\mathcal{H}}
	\newcommand{\relset}{\mathcal{R}}
	\newcommand{\depset}{\mathrm{\Delta}}
	\newcommand{\tup}[1]{\mathbf{#1}}
	\newcommand{\angs}[1]{\mathord{\langle #1 \rangle}}
	\def\phi{\varphi}
	\def\H{\mathcal{H}}
	\newtheorem{theorem}{Theorem}[section]
	\newtheorem{corollary}[theorem]{Corollary}
	\newtheorem{proposition}[theorem]{Proposition}
	\newtheorem{lemma}[theorem]{Lemma}
	\newtheorem{problem}[theorem]{Problem}
	\newtheorem{definitionthm}[theorem]{Definition}
	\newenvironment{repeatresult}[2]
	{\vskip0.5em\par\textsc{#1 #2.}\em}
	{\vskip1em}
	\newenvironment{reptheorem}[1]{\begin{repeatresult}{Theorem}{#1}}{\end{repeatresult}}
	\newenvironment{replemma}[1]{\begin{repeatresult}{Lemma}{#1}}{\end{repeatresult}}
	\newtheorem{examplethm}[theorem]{Example}
	\newenvironment{example}{\begin{examplethm}\em}
		{\qed\end{examplethm}}
	\newtheorem{commentthm}[theorem]{Comment}
	\newenvironment{commentqed}{\begin{commentthm}\em}{\qed\end{commentthm}}
	\newenvironment{citedtheorem}[1]
	{\begin{theorem}\hskip-0.2em\e{\cite{#1}}\,\,}
		{\end{theorem}}
	\newenvironment{citedproposition}[1]
	{\begin{proposition}\hskip-0.2em\e{\cite{#1}}\,\,}
		{\end{proposition}}
	\def\piptwo{\Pi_2^{\mathsf{p}}}
	\def\inds#1{\llbracket #1 \rrbracket}
	\def\partitle#1{\vskip1em\par\noindent\textbf{#1.}\,\,}
	\def\ititle#1{\textbf{#1}\ }
	\def\val#1{\mathsf{#1}}
	\def\rep#1{\mathrm{Rep}(#1)}
	\def\prep#1{\mathrm{PRep}(#1)}
	\def\grep#1{\mathrm{GRep}(#1)}
	\def\crep#1{\mathrm{CRep}(#1)}
	\def\succs{\succ^{\!\!+}}
	\def\succssubs{\succs}
	\def\rel#1{\mathit{#1}}
	\def\att#1{\mathrm{#1}}
	\def\qsattwo{\mathsf{QCNF}_2}
	\newenvironment{citemize}
	{\begin{compactitem}}
		{\end{compactitem}}
	\newenvironment{cenumerate}
	{\begin{compactenum}}
		{\end{compactenum}}
\begin{document}

\conferenceinfo{??}{??}

\pagestyle{plain}
\pagenumbering{arabic}

\title{Unambiguous Prioritized Repairing of Databases\titlenote{This
    research is supported by the Israeli Science Foundation, Grant
    \#1295/15.}}  \numberofauthors{3} \author{
  \alignauthor Benny Kimelfeld\titlenote{Taub Fellow, supported by the Taub Foundation.}\\
  \affaddr{Technion}\\
  \affaddr{Haifa 32000, Israel} \email{bennyk@cs.technion.ac.il}
\alignauthor Ester Livshits\\
\affaddr{Technion}\\
\affaddr{Haifa 32000, Israel}
\email{esterliv@cs.technion.ac.il}
\alignauthor Liat Peterfreund\\
\affaddr{Technion}\\
\affaddr{Haifa 32000, Israel} \email{liatpf@cs.technion.ac.il} }


\maketitle

\begin{abstract}
  In its traditional definition, a repair of an inconsistent database
  is a consistent database that differs from the inconsistent one in a
  ``minimal way.'' Often, repairs are not equally legitimate, as it is
  desired to prefer one over another; for example, one fact is
  regarded more reliable than another, or a more recent fact should be
  preferred to an earlier one. Motivated by these considerations,
  researchers have introduced and investigated the framework of
  preferred repairs, in the context of denial constraints and subset
  repairs. There, a priority relation between facts is lifted towards
  a priority relation between consistent databases, and repairs are
  restricted to the ones that are optimal in the lifted sense. Three
  notions of lifting (and optimal repairs) have been proposed: Pareto,
  global, and completion.

  In this paper we investigate the complexity of deciding whether the
  priority relation suffices to clean the database unambiguously, or
  in other words, whether there is exactly one optimal repair. We show
  that the different lifting semantics entail highly different
  complexities. Under Pareto optimality, the problem is coNP-complete,
  in data complexity, for every set of functional dependencies (FDs),
  except for the tractable case of (equivalence to) one FD per
  relation. Under global optimality, one FD per relation is still
  tractable, but we establish $\piptwo$-completeness for a relation
  with two FDs. In contrast, under completion optimality the problem
  is solvable in polynomial time for every set of FDs. In fact, we
  present a polynomial-time algorithm for arbitrary conflict
  hypergraphs. We further show that under a general assumption of
  transitivity, this algorithm solves the problem even for global
  optimality. The algorithm is extremely simple, but its proof of
  correctness is quite intricate.
\end{abstract}


\section{Introduction}
Managing database inconsistency has received a lot of attention in the
past two decades. Inconsistency arises for different reasons and in
different applications. For example, in common applications of Big
Data, information is obtained from imprecise sources (e.g., social
encyclopedias or social networks) via imprecise procedures (e.g.,
natural-language processing). It may also arise when integrating
conflicting data from different sources (each of which can be
consistent).  Arenas, Bertossi and
Chomicki~\cite{DBLP:conf/pods/ArenasBC99} introduced a principled
approach to managing of inconsistency, via the notions of \e{repairs}
and \e{consistent query answering}.  Informally, a \emph{repair} of an
inconsistent database $I$ is a consistent database $J$ that differs
from $I$ in a ``minimal'' way, where \e{minimality} refers to the
\e{symmetric difference}. In the case of anti-symmetric integrity
constraints (e.g., denial constraints and the special case of
functional dependencies), such a repair is a \e{subset repair} (i.e.,
$J$ is a consistent subinstance of $I$ that is not properly contained
in any consistent subinstance of $I$).

Various computational problems around database repairs have been
extensively investigated.  Most studied is the problem of computing
the \emph{consistent answers} of a query $q$ on an inconsistent
database $I$; these are the tuples in the intersection $\bigcap
\{q(J): \mbox{$J$ is a repair of
	$I$}\}$~\cite{DBLP:conf/pods/ArenasBC99,DBLP:conf/pods/KoutrisW15}. Hence,
in this approach inconsistency is handled at query time by returning
the tuples that are guaranteed to be in the result no matter which
repair is selected.  Another well studied question is that of
\e{repair checking}~\cite{DBLP:conf/icdt/AfratiK09}: given instances
$I$ and $J$, determine whether $J$ is a repair of $I$.  Depending on
the type of repairs and the type of integrity constraints, these
problems may vary from tractable to highly intractable complexity
classes.  See~\cite{Bertossi2011} for an overview of results.

In the above framework, all repairs of a given database instance are
taken into account, and they are treated on a par with each
other. There are situations, however, in which it is natural to prefer
one repair over
another~\cite{DBLP:journals/tods/FanGW12,DBLP:conf/sigmod/CaoFY13,DBLP:conf/edbtw/StaworkoCM06,DBLP:journals/amai/StaworkoCM12}. For
example, this is the case if one source is regarded to be more
reliable than another (e.g., enterprise data vs.~Internet harvesting,
precise vs.~imprecise sensing equipment, etc.)  or if available
timestamp information implies that a more recent fact should be
preferred over an earlier fact. Recency may be implied not only by
timestamps, but also by evolution semantics; for example, ``divorced''
is likely to be more updated than ``single,'' and similarly is
``Sergeant'' compared to
``Private.''\footnote{See~\cite{DBLP:series/synthesis/2012Fan} for a
	survey on aspects of data quality.}  Motivated by these
considerations, Staworko, Chomicki and
Marcinkowski~\cite{DBLP:conf/edbtw/StaworkoCM06,DBLP:journals/amai/StaworkoCM12}
introduced the framework of \e{preferred} repairs. The main
characteristic of this framework is that it uses a \e{priority}
relation between conflicting facts of an inconsistent database to
define a notion of \e{preferred} repairs.

Specifically, the notion of \e{Pareto optimality} and that of  \e{global
	optimality} are based on two different notions of
\e{improvement}---the property of one consistent subinstance being
preferred to another. Improvements are basically lifting of the
priority relation from facts to consistent subinstances; $J$ is an
improvement of $K$ if $J$ contains a fact that is better than all
those in $K\setminus J$ (in the Pareto semantics), or if for every
fact in $K\setminus J$ there exists a better fact in $J\setminus K$
(in the global semantics). In each of the two semantics, an \e{optimal
	repair} is a repair that cannot be improved. A third semantics
proposed by Staworko et al.~\cite{DBLP:journals/amai/StaworkoCM12} is
that of a \e{completion-optimal} repair, which is a globally optimal
repair under some extension of the priority relation into a \e{total}
relation. In this paper, we refer to these preferred repairs as
\e{p-repair}, \e{g-repair} and \e{c-repair}, respectively.

Fagin et al.~\cite{DBLP:conf/pods/FaginKRV14} have built on the concept of preferred
repairs (in conjunction with the framework of \e{document
	spanners}~\cite{DBLP:journals/jacm/FaginKRV15}) to devise a
language for declaring \e{inconsistency cleaning} in text
information-extraction systems. They have shown there that preferred
repairs capture ad-hoc cleaning operations and strategies of some
prominent existing systems for text
analytics~\cite{TIPSTER98,ChiticariuKLRRV10}. 

Staworko et al.~\cite{DBLP:journals/amai/StaworkoCM12} have proved
several results about preferred repairs. For example, every c-repair
is also a g-repair, and every g-repair is also a p-repair. They also
showed that p-repair and c-repair checking are solvable in polynomial
time (under data complexity) when constraints are given as denial
constraints, and that there is a set of functional dependencies (FDs)
for which g-repair checking is coNP-complete. Later, Fagin et
al.~\cite{DBLP:conf/pods/FaginKK15} extended that hardness result to a
full dichotomy in complexity over all sets of FDs: g-repair checking
is solvable in polynomial time whenever the set of FDs is equivalent
to a single FD or two key constraints per relation; in every other
case, the problem is coNP-complete.

While the classic complexity problems studied in the theory of repairs
include repair checking and consistent query answering, the presence
of repairs gives rise to the \e{cleaning problem}, which Staworko et
al.~\cite{DBLP:journals/amai/StaworkoCM12} refer to as
\e{categoricity}: determine whether the provided priority relation
suffices to clean the database unambiguously, or in other words,
decide whether there is exactly one optimal repair.  The problem of
repairing uniqueness (in a different repair semantics) is also
referred to as \e{determinism} by Fan et
al.~\cite{DBLP:journals/jdiq/FanM0Y14}. In this paper, we study the
three variants of this computational problem, under the three
optimality semantics Pareto, global and completion, and denote them as
\e{p-categoricity}, \e{g-categoricity} and \e{c-categoricity},
respectively.

It is known that under each of the three semantics there is always at
least one preferred repair, and Staworko et
al.~\cite{DBLP:journals/amai/StaworkoCM12} present a polynomial-time
algorithm for finding such a repair. (We recall this algorithm in
Section~\ref{sec:categoricity}.) Hence, the categoricity problem is
that of deciding whether the output of this algorithm is the only
possible preferred repair. As we explain next, it turns out that each
of the three variants of the problem entails quite a unique picture of
complexity.

For the problem of p-categoricity, we focus on integrity constraints
that are FDs, and establish the following dichotomy in data
complexity, assuming that $\mbox{P}\neq\mbox{NP}$.  For a relational
schema with a set $\Delta$ of FDs:
\begin{itemize}
	\item If $\Delta$ associates (up to equivalence) a single FD with
	every relation symbol, then p-categoricity is solvable in polynomial
	time.
	\item In \e{any other case}, p-categoricity is coNP-complete.
\end{itemize}
For example, with the relation symbol $R(A,B,C)$ and the FD
$A\rightarrow B$, p-categoricity is solvable in polynomial time; but
if we add the dependency $B\rightarrow A$ then it becomes
coNP-complete. Our proof uses a reduction technique from past
dichotomies that involve
FDs~\cite{DBLP:conf/pods/Kimelfeld12,DBLP:conf/pods/FaginKK15}, but
requires some highly nontrivial additions.

We then turn to investigating c-categoricity, and establish a far more
positive picture than the one for p-categoricity. In particular, the
problem is solvable in polynomial time for every set of FDs. In fact,
we present an algorithm for solving c-categoricity in polynomial time,
assuming that constraints are given as an input \e{conflict
	hypergraph}~\cite{DBLP:journals/iandc/ChomickiM05}. (In particular,
we establish polynomial-time data complexity for other types of
integrity constraints, such as \e{conditional
	FDs}~\cite{DBLP:conf/icde/BohannonFGJK07} and \e{denial
	constraints}~\cite{DBLP:journals/jiis/GaasterlandGM92}.)
The algorithm is extremely simple, yet its proof of correctness is
quite intricate.

Finally, we explore g-categoricity, and focus first on FDs. We show
that in the tractable case of p-categoricity (equivalence to a single
FD per relation), g-categoricity is likewise solvable in polynomial
time.  For example, $R(A,B,C)$ with the dependency $A\rightarrow B$
has polynomial-time g-categoricity.  Nevertheless, we prove that if
the we add the dependency $\emptyset\rightarrow C$ (that is, the
attribute $C$ should have the same value across all tuples), then
g-categoricity becomes $\piptwo$-complete.  We do not complete a
dichotomy as in p-categoricity, and leave that open for future work.
Lastly, we observe that in our proof of $\piptwo$-hardness, our
reduction constructs a non-transitive priority relation, and we ask
whether transitivity makes a difference. The three semantics of
repairs remain different in the presence of transitivity. In
particular, we show such a case where there are globally-optimal
repairs that are not completion optimal repairs. Nevertheless, quite
interestingly, we are able to prove that g-categoricity and
c-categoricity are actually \e{the same problem} if transitivity is
assumed. In particular, we establish that in the presence of
transitivity, g-categoricity is solvable in polynomial time, even when
constraints are given as a conflict hypergraph.

\eat{
	The rest of the paper is organized as follows. In
	Section~\ref{sec:preliminaries} we give preliminary definitions and
	notation.  We define the problem of categoricity in
	Section~\ref{sec:categoricity}.  In Section~\ref{sec:insights} we give
	some preliminary insights and recall relevant results from the
	literature. The complexity of the problems p-categoricity,
	c-categoricity and g-categoricity is investigated in
	Sections~\ref{sec:p}, \ref{sec:c} and~\ref{sec:g},
	respectively. Finally, we conclude and discuss future directions in
	Section~\ref{sec:conclusions}.
}

\section{Preliminaries}\label{sec:preliminaries}
We now present some general terminology and notation that we use
throughout the paper.

\subsection{Signatures and Instances}

A (\e{relational}) \e{signature} is a finite set
$\relset=\set{R_1,\dots,R_n}$ of \e{relation symbols}, each with a
designated positive integer as its \e{arity}, denoted
$\arity(R_i)$. We assume an infinite set $\consts$ of \e{constants},
used as database values. An \e{instance} $I$ over a signature
$\relset=\set{R_1,\dots,R_n}$ consists of finite relations
$R_i^I\subseteq\consts^{\arity(R_i)}$, where $R_i\in\relset$. We write
$\inds{R_i}$ to denote the set $\set{1,\dots,\arity(R_i)}$, and we
refer to the members of $\inds{R_i}$ as \e{attributes} of $R_i$.  If
$I$ is an instance over $\relset$ and $\tup t$ is a tuple in $R_i^I$,
then we say that $R_i(\tup t)$ is a \e{fact of $I$}. By a slight abuse
of notation, we identify an instance $I$ with the set of its
facts. For example, $R_i(\tup t)\in I$ denotes that $R_i(\tup t)$ is a
fact of $I$.  As another example, $J\subseteq I$ means that
$R_i^J\subseteq R_i^I$ for every $R_i\in\relset$; in this case, we say
that $J$ is \e{subinstance} of $I$.

\def\compceo{\rel{CompCEO}}
\def\comp{\att{company}}
\def\ceo{\att{ceo}}

In our examples, we often name the attributes and refer to them by
their names. For instance, in Figure~\ref{fig:companyceo-instance} we
refer to the relation symbol as $\compceo(\comp,\ceo)$ where $\comp$
and $\ceo$ refer to Attributes~1 and~2, respectively. In the case of
generic relation symbols, we implicitly name their attributes by
capital English letters with the corresponding numeric values; for
instance, we may refer to Attributes $1$, $2$ and $3$ of $R/3$ by $A$,
$B$ and $C$, respectively. We stress that attribute names are not part
of our formal model, but are rather used for readability.

\def\fgpi{f^{\val{g}}_{\val{pi}}}
\def\fgpa{f^{\val{g}}_{\val{pa}}}
\def\fgbr{f^{\val{g}}_{\val{br}}}
\def\fapa{f^{\val{a}}_{\val{pa}}}
\def\fapi{f^{\val{a}}_{\val{pi}}}

\subsection{Integrity and Inconsistency}
Let $\relset$ be a signature, and $I$ an instance over $\relset$. In
this paper we consider two representation systems for integrity
constraints. The first is \e{functional dependencies} and the second
is \e{conflict hypergraphs}.

\subsubsection{Functional Dependencies}
Let $\relset$ be a signature. A \e{Functional Dependency} (\e{FD} for
short) over $\relset$ is an expression of the form $R:X\rightarrow Y$,
where $R$ is a relation symbol of $\relset$, and $X$ and $Y$ are
subsets of $\inds R$. When $R$ is clear from the context, we may
omit it and write simply $X\rightarrow Y$.  A
special case of an FD is a \e{key constraint}, which is an FD of the
form $R:X\rightarrow Y$ where $X\cup Y=\inds R$. An FD
$R:X\rightarrow Y$ is \e{trivial} if $Y\subseteq X$; otherwise, it is
\e{nontrivial}.

When we are using the alphabetic attribute notation, we may write $X$
and $Y$ by simply concatenating the attribute symbols. For example, if
we have a relation symbol $R/3$, then $A\rightarrow BC$ denotes the FD
$R:\set{1}\rightarrow\set{2,3}$. An instance $I$ over $R$
\e{satisfies} an FD $R:X\rightarrow Y$ if for every two facts $f$ and
$g$ over $R$, if $f$ and $g$ agree on (i.e., have the same values for)
the attributes of $X$, then they also agree on the attributes of $Y$.
We say that $I$ satisfies a set $\Delta$ of FDs if $I$ satisfies every
FD in $\Delta$; otherwise, we say that $I$ \e{violates} $\Delta$. Two
sets $\Delta$ and $\Delta'$ of FDs are \e{equivalent} if for every
instance $I$ over $\relset$ it holds that $I$ satisfies $\Delta$ if
and only if it satisfies $\Delta'$. For example, for $R/3$ the sets
$\set{A\rightarrow BC,C\rightarrow A}$ and $\set{A\rightarrow
	C,C\rightarrow AB}$ are equivalent.

In this work, a \e{schema} $\scs$ is a pair $(\relset,\Delta)$, where
$\relset$ is a signature and $\Delta$ is a set of FDs over
$\relset$. If $\scs=(\relset,\depset)$ is a schema and $R\in\relset$,
then we denote by $\depset_{|R}$ the restriction of $\depset$ to the
FDs $R:X\rightarrow Y$ over $R$.

\begin{figure}[t]
	\centering
	\def\arraystretch{1.1}%
	\begin{tabular}{r|c|c|}
		\multicolumn{1}{c}{}& \multicolumn{2}{c}{$\compceo$} \\ \cline{2-3}
		& $\comp$ & $\ceo$ \\ \cline{2-3}
		$\fgpi$ & $\val{Google}$ &  $\val{Pichai}$ \\
		$\fgpa$ & $\val{Google}$  & $\val{Page}$ \\
		$\fgbr$ & $\val{Google}$  & $\val{Brin}$ \\
		$\fapa$ & $\val{Alphabet}$ & $\val{Page}$ \\
		$\fapi$ & $\val{Alphabet}$ & $\val{Pichai}$ \\ \cline{2-3}
	\end{tabular}
	\caption{Inconsistent database of the company-CEO running example}
	\label{fig:companyceo-instance}
\end{figure}

\begin{example}\label{example:ceo}
	In our first running example, we use the schema
	$\scs=(\relset,\Delta)$, defined as follows. The signature $\relset$
	consists of a single relation $\compceo(\comp,\ceo)$, which
	associates companies with their Chief Executive Officers
	(CEO). Figure~\ref{fig:companyceo-instance} depicts an instance $I$
	over $\relset$.  We define $\Delta$ as the following set of FDs over
	$\relset$.
	\[\Delta\eqdef\set{\comp\rightarrow\ceo\,,\, \ceo\rightarrow\comp}\]
	Hence, $\Delta$ states that in $\compceo$, each company has a single
	CEO and each CEO manages a single company. Observe that $I$ violates
	$\Delta$. For example, $\val{Google}$ has three CEOs, $\val{Alphabet}$
	has two CEOs, and each of $\val{Pichai}$ and $\val{Page}$ is the CEO
	of two companies.
\end{example}

\subsubsection{Conflict Hypergraphs}

While FDs define integrity logically, at the level of the signature, a
\e{conflict hypergraph}~\cite{DBLP:journals/iandc/ChomickiM05}
provides a direct specification of inconsistencies at the instance
level, by explicitly stating sets of tuples that cannot co-exist. In
the case of FDs, the conflict hypergraph is a graph that has an edge
between every two facts that violate an FD.  Formally, for an instance
$I$ over a signature $\relset$, a \e{conflict hypergraph} $\graph$
(\e{for $I$}) is a hypergraph that has the facts of $I$ as its node
set. A subinstance $J$ of $I$ is \e{consistent} with respect to
(w.r.t.)  $\graph$ if $J$ is an \e{independent set} of $\graph$; that
is, no hyperedge of $\graph$ is a subset of $J$.  We say that $J$ is
\e{maximal} if $J\cup\set{f}$ is inconsistent for every $f\in
I\setminus J$. When all the edges of a conflict hypergraph are of size
two, we may call it a \e{conflict graph}.

Recall that conflict hypergraphs can represent inconsistencies for
various types of integrity constraints, including FDs, the more
general \e{conditional FDs}~\cite{DBLP:conf/icde/BohannonFGJK07}, and
the more general \e{denial
	constraints}~\cite{DBLP:journals/jiis/GaasterlandGM92}.  In fact,
every constraint that is anti-monotonic (i.e., where subsets of
consistent sets are always consistent) can be represented as a
conflict hypergraph. In the case of denial constraints, the
translation from the logical constraints to the conflict hypergraph
can be done in polynomial time under \e{data complexity} (i.e., when
the signature and constraints are assumed to be fixed).

Let $\scs=(\relset,\Delta)$ be a schema, and let $I$ be an instance
over $\scs$. Recall that $\scs$ is assumed to have only FDs. We denote
by $\graph_{\scs}^I$ the conflict graph for $I$ that has an edge
between every two facts that violate some FD of $\scs$. Note that a
subinstance $J$ of $I$ satisfies $\Delta$ if and only if $J$ is
consistent w.r.t.~$\graph_{\scs}^I$. As an example, the left graph of
Figure~\ref{fig:ceo-completions} depicts the graph $\graph_{\scs}^I$
for our running example; for now, the reader should ignore the
directions on the edges, and view the graph as an undirected one. The
following example involves a conflict hypergraph that is not a graph.

\begin{example}\label{example:followers-instance}
	In our second running example, we use the toy scenario where the
	signature has a single relation symbol $\rel{Follows}/2$, where
	$\rel{Follows}(x,y)$ means that person $x$ follows person $y$ (e.g.,
	in a social network). We have two sets of people: $\val{a_i}$ for
	$i=1,2,3$, and $\val{b_j}$ for $j=1,\dots,5$. All the facts have the
	form $\rel{Follows}(\val{a_i},\val{b_j})$; we denote such a fact by
	$f_{ij}$. The instance $I$ has the following facts:
	\begin{center}$f_{11}$, $f_{12}$, $f_{21}$, $f_{22}$, $f_{23}$, $f_{24}$, $f_{31}$, $f_{32}$, $f_{34}$, $f_{35}$
	\end{center}
	The hypergraph $\graph$ for $I$ encodes the following rules:
	\begin{citemize}
		\item Each $\val{a_i}$ can follow at most $i$ people.
		\item Each $\val{b_j}$ can be followed by at most $j$ people.
	\end{citemize}
	Specifically, $\graph$ contains the following hyperedges:
	\begin{citemize}
		\item $\set{f_{11},f_{12}}$, $\set{f_{21},f_{22},f_{23}}$, $\set{f_{21},f_{22},f_{24}}$, \\$\set{f_{21},f_{23},f_{24}}$,
		$\set{f_{22},f_{23},f_{24}}$, $\set{f_{31},f_{32},f_{34},f_{35}}$
		\item $\set{f_{11},f_{21}}$, $\set{f_{11},f_{31}}$, $\set{f_{21},f_{31}}$, $\set{f_{12},f_{22},f_{32}}$
	\end{citemize}
	An example of a consistent subinstance $J$ is
	\[\set{f_{11},f_{22},f_{23},f_{32},f_{34},f_{35}}\,.\]
	The reader can verify that $J$ is maximal.
\end{example}

{
	\subsection{Prioritizing Inconsistent Databases}
	We now recall the framework of preferred repairs by Staworko et
	al.~\cite{DBLP:journals/amai/StaworkoCM12}. Let $I$ be an instance
	over a signature $\relset$.  A \e{priority} relation $\succ$ over $I$
	is an acyclic binary relation over the facts in $I$.  By \e{acyclic}
	we mean that $I$ does not contain any sequence $f_1,\dots,f_k$ of
	facts such that $f_i\succ f_{i+1}$ for all $i=1,\dots,k-1$ and
	$f_k\succ f_1$. If $\succ$ is a priority relation over $I$ and $K$ is
	a subinstance of $I$, then $\max_\succ(K)$ denotes the set of tuples
	$f\in K$ such that no $g\in K$ satisfies $g\succ f$.

	An \e{inconsistent prioritizing instance} over $\relset$ is a triple
	$(I,\graph,\succ)$, where $I$ is an instance over $\relset$, $\graph$
	is a conflict hypergraph for $I$, and $\succ$ a priority relation over
	$I$ with the following property: for every two facts $f$ and $g$ in
	$I$, if $f\succ g$ then $f$ and $g$ are neighbors in $\graph$ (that
	is, $f$ and $g$ co-occur in some hyperedge).\footnote{This requirement
		has been made with the introduction of the
		framework~\cite{DBLP:journals/amai/StaworkoCM12}. Obviously, the
		lower bounds we present hold even without this requirement.
		Moreover, our main upper bound,
		Theorem~\ref{thm:c-categoricity-ptime}, holds as well without this
		requirement. We defer to future work the thorough investigation of
		the impact of relaxing this requirement.}  For example, if
	$\H=\H_\scs^I$ (where all the constraints in $\scs$ are FDs), then
	$f\succ g$ implies that $\set{f,g}$ violates at least one FD.
	
	\begin{example}\label{example:ceo-priority}
		We continue our running company-CEO example. We define a priority
		relation $\succ$ by $\fgpi\succ \fgpa$, $\fgpa\succ \fgbr$ and
		$\fapa\succ \fapi$. We denote $\succ$ by corresponding arrows on the
		left graph of Figure~\ref{fig:ceo-completions}. (Therefore, some of the
		edges are directed and some are undirected.) We then get the
		inconsistent prioritizing instance $(I,\H_\scs^I,\succ)$ over
		$\relset$. Observe that the graph does not contain directed cycles,
		as required from a priority relation.
	\end{example}
	
	\begin{example}\label{example:followers-priority}
		Recall that the instance $I$ of our followers example is defined in
		Example~\ref{example:followers-instance}. The priority relation
		$\succ$ is given by $f_{il}\succ f_{jk}$ if one of the following
		holds: \e{(a)} $i=j$ and $k=l+1$, \e{or (b)} $j=i+1$ and $l=k$.  For
		example, we have $f_{11}\succ f_{12}$ and $f_{12}\succ f_{22}$. But
		we do not have $f_{11}\succ f_{22}$ (hence, $\succ$ is not
		transitive).
	\end{example}

	Let $(I,\graph,\succ)$ be an inconsistent prioritizing instance over a
	signature $\relset$. We say that $\succ$ is \e{total} if for every two
	facts $f$ and $g$ in $I$, if $f$ and $g$ are neighbors then either
	$f\succ g$ or $g\succ f$. A priority $\succ_c$ over $I$ is a
	\e{completion} of $\succ$ (w.r.t.~$\graph$) if $\succ$ is a subset of
	$\succ_c$ and $\succ_c$ is total. As an example, the middle and right
	graphs of Figure~\ref{fig:ceo-completions} are two completions of the
	priority relation $\succ$ depicted on the left side.  A \e{completion}
	of $(I,\graph,\succ)$ is an inconsistent prioritizing instance
	$(I,\graph,\succ_c)$ where $\succ_c$ is a completion of $\succ$.

	\begin{figure}[t]
		\centering
		\input{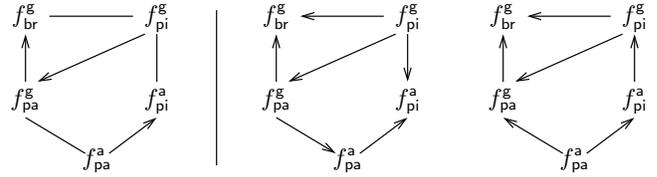}
		\caption{\label{fig:ceo-completions} The conflict graph $\graph_{\scs}^I$ and
		the priority relation $\succ$ for the company-CEO running example
		(left), and two completions of $\succ$ (middle and right)}
	\end{figure}

	\subsection{Preferred Repairs}
	
	Let $D=(I,\graph,\succ)$ be an inconsistent prioritizing instance over
	$\relset$. As defined by Arenas et
	al.~\cite{DBLP:conf/pods/ArenasBC99}, $J$ is a \e{repair} of $D$ if
	$J$ is a maximal consistent subinstance of $I$. Staworko et
	al.~\cite{DBLP:journals/amai/StaworkoCM12} define three different
	notions of \e{preferred} repairs: \e{Pareto optimal}, \e{globally
		optimal}, and \e{completion optimal}. The first two notions are
	based on checking whether a repair $J$ of $I$ can be improved by
	replacing a set of facts in $J$ with a more preferred set of facts
	from $I$. They differ by the way they define when one set of facts is
	considered more preferred than another one. The last notion is based
	on the notion of completion. Next we give the formal definitions.
	
	\begin{definitionthm}[Improvement]
		Let $(I,\graph,\succ)$ be an inconsistent prioritizing instance over
		a signature $\relset$, and $J$ and $J'$ two distinct consistent
		subinstances of $I$.
		\begin{itemize}
			\item $J$ is a \e{Pareto improvement} of $J'$ if there exists a fact
			$f\in J\setminus J'$ such that $f\succ f'$ for all facts $f'\in
			J'\setminus J$.
			\item $J$ is a \e{global improvement} of $J'$ if for every fact $f'\in
			J'\setminus J$ there exists a fact $f\in J\setminus J'$ such that
			$f\succ f'$.
		\end{itemize}
	\end{definitionthm}
	
	That is, $J$ is a Pareto improvement of $J'$ if, in order to obtain
	$J$ from $J'$, we insert and delete facts, and one of the inserted
	facts is preferred to all deleted facts. And $J$ is a global
	improvement of $J'$ if, in order to obtain $J$ from $J'$, we insert
	and delete facts, and every deleted fact is preferred to by some
	inserted fact.
	
	\begin{example}\label{example:ceo-improvement}
		We continue the company-CEO running example. We define three
		consistent subinstances of $I$.
		\begin{gather*}
			J_1\eqdef\set{\fgbr,\fapi}\quad J_2\eqdef\set{\fgpa,\fapi}\\
			J_3\eqdef\set{\fgbr,\fapa}\quad J_4\eqdef\set{\fgpi,\fapa}
		\end{gather*}
		Note the following. First, $J_2$ is a Pareto improvement of $J_1$,
		since $\fgpa\in J_2\setminus J_1$ and $\fgpa\succ f$ for every fact in
		$J_1\setminus J_2$ (where in this case there is only one such an
		$f$, namely $\fgbr$). Second, $J_4$ is a global improvement of $J_2$
		because $\fgpi \succ \fgpa$ and $\fapa\succ \fapi$. (We refer to $J_3$
		in later examples.)
	\end{example}
	
	We then get the following variants of \e{preferred repairs}.
	
	\begin{definitionthm}[p/g/c-repair]
		Let $D$ be an inconsistent prioritizing instance $(I,\graph,\succ)$,
		and let $J$ be a consistent subinstance of $I$. Then $J$ is a:
		\begin{itemize}
			\item \e{Pareto-optimal repair of $D$} if there is no Pareto
			improvement of $J$.
			\item \e{globally-optimal repair of $D$} if there is no global
			improvement of $J$.
			\item \e{completion-optimal repair of $D$} if there exists a completion
			$D_c$ of $D$ such that $J$ is a globally-optimal repair of $D_c$.
		\end{itemize}
		We abbreviate ``Pareto-optimal repair,'' ``globally-optimal repair,'' 
		and ``completion-optimal repair'' by \e{p-repair}, \e{g-repair} and
		\e{c-repair}, respectively.
	\end{definitionthm}
	
	We remark that in the definition of a completion-optimal repair, we
	could replace ``globally-optimal'' with ``Pareto-optimal'' and obtain
	an equivalent definition~\cite{DBLP:journals/amai/StaworkoCM12}.
	
	Let $D=(I,\graph,\succ)$ be an inconsistent prioritizing instance over
	a signature $\relset$. We denote the set of all the repairs,
	p-repairs, g-repairs and c-repairs of $D$ by $\rep{D}$, $\prep{D}$,
	$\grep{D}$ and $\crep{D}$, respectively. The following was shown by
	Staworko et al.~\cite{DBLP:journals/amai/StaworkoCM12}.
	
	\begin{citedproposition}{DBLP:journals/amai/StaworkoCM12}\label{prop:containments}
		For all inconsistent prioritizing instances $D$ we have
		$\crep{D}\neq\emptyset$, and moreover,
		\[\crep{D}\subseteq\grep{D}\subseteq \prep{D} \subseteq \rep{D}\,.\]
	\end{citedproposition}
	
	\begin{example}\label{example:ceo-repairs}
		We continue our company-CEO example. Recall the instances $J_i$
		defined in Example~\ref{example:ceo-improvement}. We have shown that
		$J_1$ has a Pareto improvement, and therefore, $J_1$ is \e{not} a
		p-repair (although it is a repair in the ordinary sense). The reader
		can verify that $J_2$ has no Pareto improvements, and therefore, it
		is a p-repair. But $J_2$ is not a g-repair, since $J_4$ is a global
		improvement of $J_2$. The reader can verify that $J_3$ is a g-repair
		(hence, a p-repair). Finally, observe that $J_4$ is a g-repair
		w.r.t.~the left completion of $\succ$ in
		Figure~\ref{fig:ceo-completions} (and also w.r.t.~the right
		one). Hence, $J_4$ is a c-repair (hence, a g-repair and a
		p-repair). In constrast, observe that $J_3$ has a global improvement
		(and a Pareto improvement) in both completions; but it does not
		prove that $J_3$ is not a c-repair (since, conceptually, one needs
		to consider all possible completions of $\succ$).
	\end{example}
	
	\begin{example}\label{example:follows-repair}
		We now continue the follower example. The inconsistent
		prioritizing instance $(I,\graph,\succ)$ is defined in
		Examples~\ref{example:followers-instance}
		and~\ref{example:followers-priority}. Consider the following
		instance.
		\[J_1\eqdef\set{f_{11},f_{22},f_{23},f_{32},f_{34},f_{35}}\] The
		reader can verify that $J_1$ is a c-repair (e.g., by completing
		$\succ$ through the lexicographic order). The subinstance
		$J_2=\set{f_{12},f_{21},f_{22},f_{34},f_{35}}$ is a repair but not a
		p-repair, since we can add $f_{11}$ and remove both $f_{12}$ and
		$f_{21}$, and thus obtain a Pareto improvement.
	\end{example}

}

\section{Categoricity}\label{sec:categoricity}
In this section we define the computational problem of
\e{categoricity}, which is the main problem that we study in this
paper.  Proposition~\ref{prop:containments} states that, under each of
the semantics of preferred repairs, at least one such a repair
exists. In general, there can be many possible preferred repairs. The
problem of \e{categoricity}~\cite{DBLP:journals/amai/StaworkoCM12} is
that of testing whether there is \e{precisely} one such a repair; that
is, there do not exist two distinct preferred repairs, and therefore,
the priority relation contains enough information to clean the
inconsistent instance unambiguously.
\begin{problem}
	Problems \e{p-catego\-ricity}, \e{g-catego\-ricity},\,\,  and
	\e{c-categoricity} are those of testing whether $|\prep{D}|=1$,
	$|\grep{D}|=1$ and $|\crep{D}|=1$, respectively, given a signature
	$\relset$ and an inconsistent prioritizing instance $D$ over
	$\relset$.
\end{problem}

As defined, categoricity takes as input both the signature $\relset$
and the inconsistent prioritizing instance $D$, where constraints are
represented by a conflict hypergraph. We also study this problem from
the perspective of \e{data complexity}, where we fix a schema
$\scs=(\relset,\Delta)$, where $\Delta$ is a set of FDs. In that case,
the input consists of an instance $I$ over $\relset$ and a priority
relation $\prec$ over $I$. The conflict hypergraph is then implicitly
assumed to be $\graph_{\scs}^I$. We denote the corresponding variants
of the problem by p-categoricity$\angs{\scs}$,
g-categoricity$\angs{\scs}$ and c-categoricity$\angs{\scs}$,
respectively.

\begin{example}\label{example:ceo-categoricity}
	Continuing our company-CEO example, we showed in
	Example~\ref{example:ceo-repairs} that there are at least two
	g-repairs and at least three p-repairs. Hence, a solver for
	g-categoricity$\angs{\scs}$ should return false on $(I,\succ)$, and
	so is a solver for p-categoricity$\angs{\scs}$. In contrast, we will
	later show that there is precisely one c-repair
	(Example~\ref{example:ccat-ceo}); hence, a solver for
	c-categoricity$\angs{\scs}$ should return true on $(I,\succ)$. If,
	on the other hand, we replaced $\succ$ with any of the completions
	in Figure~\ref{fig:ceo-completions}, then there would be precisely
	one p-repair and one g-repair (namely, the current single
	c-repair). This follows from a result of Staworko et
	al.~\cite{DBLP:journals/amai/StaworkoCM12}, stating that
	categoricity holds in the case of total priority relations.
\end{example}

\section{Preliminary Insights}\label{sec:insights}

We begin with some basic insights into the different variants of the
categoricity problem.

\subsection{Greedy Repair Generation}

We recall an algorithm by Staworko et
al.~\cite{DBLP:journals/amai/StaworkoCM12} for greedily constructing a
c-repair.  This is the algorithm $\algname{FindCRep}$ of
Figure~\ref{alg:ccat-opt-alg}. The algorithm takes as input an
inconsistent prioritizing instance $(I,\graph,\succ)$ and returns a
c-repair $J$.  It begins with an empty $J$, and incrementally inserts
tuples to $J$, as follows.  In each iteration of lines~3--6, the
algorithm selects a fact $f$ from $\max_\succ(I)$ and removes it from
$I$.  Then, $f$ is added to $J$ if it does not violate consistency,
that is, if $\graph$ does not contain any hyperedge $e$ such that
$e\subseteq J\cup\set{f}$. The specific way of choosing the fact $f$
among all those in $\max_\succ(I)$ is (deliberately) left unspecified,
and hence, different executions may result in different c-repairs. In
that sense, the algorithm is nondeterministic.  Staworko et
al.~\cite{DBLP:journals/amai/StaworkoCM12} proved that the possible
results of these different executions are \e{precisely} the c-repairs.

\begin{citedtheorem}{DBLP:journals/amai/StaworkoCM12}\label{thm:cgreedy}
	Let $(I,\graph,\succ)$ be an inconsistent prioritizing instance over
	$\relset$.  Let $J$ be a consistent subinstance of $I$. Then $J$ is
	a c-repair if and only if there exists an execution of
	$\algname{FindCRep}(I,\graph,\succ)$ that returns $J$.
\end{citedtheorem}

\begin{algseries}{t}{\label{alg:ccat-opt-alg}Finding a
		c-repair~\cite{DBLP:journals/amai/StaworkoCM12}}
	\begin{insidealg}{FindCRep}{$I,\graph,\succ$}
		\STATE $J:=\emptyset$
		\WHILE{$\max_\succ(I) \neq\emptyset$}
		\STATE choose a fact $f$ in $\max_\succ(I)$
		\STATE $I:=I\setminus \set{f}$
		\IF{$J\cup\set{f}$ is consistent w.r.t.~$\graph$}
		\STATE $J:=J\cup\set{f}$
		\ENDIF
		\ENDWHILE
		\STATE \textbf{return} $J$
	\end{insidealg}
\end{algseries}

Due to Theorem~\ref{thm:cgreedy}, we often refer to a c-repair as a
\e{greedy} repair. This theorem, combined with
Proposition~\ref{prop:containments}, has several implications for us.
First, we can obtain an x-repair (where x is either p, g or c) in
polynomial time. Hence, if a solver for x-categoricity determines that
there is a single x-repair, then we can actually generate that
x-repair in polynomial time. Second, c-categoricity is the problem of
testing whether $\algname{FindCRep}(I,\graph,\succ)$ returns the same
instance $J$ on every execution.  Moreover, due to
Proposition~\ref{prop:containments}, p-categoricity
(resp.~g-categoricity) is the problem of testing whether every
p-repair (resp.~g-repair) is equal to the one that is obtained by some
execution of the algorithm.

\begin{example}
	We consider the application of the algorithm $\algname{FindCRep}$ to
	the instance of our company-CEO example (where
	$\graph=\graph_{\scs}^I$). The following are two different
	executions. We denote inclusion in $J$ (i.e., the condition of
	line~5 is true) by plus and exclusion from $J$ by minus.
	\begin{citemize}
		\item $+\fgpi$, $-\fgpa$, $-\fgbr$, $+\fapa$, $-\fapi$.
		\item $+\fapa$, $-\fapi$, $+\fgpi$, $-\fgpa$, $-\fgbr$.
	\end{citemize}
	Observe that both executions return $J_4=\set{\fgpi,\fapa}$. This is in
	par with the statement in Example~\ref{example:ceo-categoricity}
	that in this running example there is a single c-repair.
\end{example}

\subsection{Complexity Insights}
Our goal is to study the complexity of x-categoricity (where x is g, p
and c). This problem is related to that of \e{x-repair checking},
namely, given $D=(I,\graph,\succ)$ and $J$, determine whether $J$ is
an x-repair of $D$. The following is known about this problem.

\begin{citedtheorem}{DBLP:journals/amai/StaworkoCM12,DBLP:conf/pods/FaginKK15}\label{thm:repairchecking}
	The following hold.
	\begin{itemize}
		\item p-repair checking and c-repair checking are solvable in
		polynomial time; g-repair checking is in
		coNP~\cite{DBLP:journals/amai/StaworkoCM12}.
		\item Let $\scs=(\relset,\depset)$ be a fixed schema.  If
		$\depset_{|R}$ is equivalent to either a single FD or two key
		constraints for every $R\in\relset$, then g-repair checking is
		solvable in polynomial time; otherwise, g-repair checking is coNP-complete~\cite{DBLP:conf/pods/FaginKK15}.
	\end{itemize}
\end{citedtheorem}

Recall from Proposition~\ref{prop:containments} that there is always
at least one x-repair. Therefore, given $(I,\graph,\succ)$ we can
solve the problem using a coNP algorithm with an oracle to x-repair
checking: for all two distinct subinstances $J_1$ and $J_2$, either
$J_1$ or $J_2$ is not an x-repair. Therefore, from
Theorem~\ref{thm:repairchecking} we conclude the following.

\begin{corollary}\label{cor:upperbounds}
	The following hold.
	\begin{itemize}
		\item p-categoricity and c-categoricity are in coNP.
		\item For all fixed schemas $\scs=(\relset,\depset)$,
		g-categoricity$\angs{\scs}$ is in $\piptwo$, and moreover, if
		$\depset_{|R}$ is equivalent to either a single FD or two key
		constraints for every $R\in\relset$ then g-categoricity$\angs{\scs}$
		is in coNP.
	\end{itemize}
\end{corollary}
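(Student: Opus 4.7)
The plan is to reduce each variant of x-categoricity to its complement problem, \emph{there exist two distinct x-repairs of $D$}. By Proposition~\ref{prop:containments}, the set of x-repairs is always nonempty, so $|\prep{D}|=1$ (respectively $|\grep{D}|=1$ or $|\crep{D}|=1$) holds if and only if no two distinct subinstances $J_1,J_2 \subseteq I$ are both x-repairs. Crucially, any x-repair is a subinstance of $I$ and so has polynomial size in the input; hence a pair $(J_1,J_2)$ of distinct candidate x-repairs serves as a polynomial-size certificate for the complement.

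For the first bullet, I would observe that both p-repair checking and c-repair checking are in polynomial time by Theorem~\ref{thm:repairchecking}. A nondeterministic machine can therefore guess distinct subinstances $J_1\neq J_2$ of $I$ and verify in polynomial time that each is an x-repair. This places the complement in $\NP$, hence p-categoricity and c-categoricity in $\coNP$.

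For the second bullet, g-repair checking is in $\coNP$ in general (Theorem~\ref{thm:repairchecking}), so the same guess-and-verify scheme runs on an $\NP$ machine with a $\coNP$ oracle for the two verifications. This places the complement in $\NP^{\coNP}=\sigptwo$, hence g-categoricity$\angs{\scs}$ is in $\piptwo$. When $\depset_{|R}$ is equivalent to either a single FD or to two key constraints for every $R\in\relset$, g-repair checking becomes polynomial-time by Theorem~\ref{thm:repairchecking}, and then the same guess-and-verify scheme executes entirely in $\NP$, yielding the sharper $\coNP$ bound.

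I do not anticipate any real obstacle here: the argument is mechanical once one combines the nonemptiness guarantee of Proposition~\ref{prop:containments} with the known complexities of x-repair checking from Theorem~\ref{thm:repairchecking}. The only point to verify is the polynomial-size certificate, which is immediate from the fact that every x-repair is a subset of the input instance $I$.
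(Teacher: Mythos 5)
Your proposal is correct and follows essentially the same route as the paper: the paper likewise observes that, by Proposition~\ref{prop:containments}, non-categoricity is witnessed by two distinct subinstances $J_1,J_2$ that are both x-repairs, and then invokes the x-repair-checking complexities of Theorem~\ref{thm:repairchecking} to place the complement in $\NP$ (or $\sigptwo$ when g-repair checking is only in $\coNP$). Your accounting of $\NP^{\coNP}=\sigptwo$ and of the polynomial-size certificate matches the paper's (more tersely stated) argument.
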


We stress here that if x-categoricity is solvable in polynomial time,
then x-categoricity$\angs{\scs}$ is solvable in polynomial time for
\e{all} schemas $\scs$; this is true since for every fixed schema
$\scs$ the hypergraph $\graph_{\scs}^I$ can be constructed in
polynomial time, given $I$. Similarly, if x-categoricity$\angs{\scs}$
is coNP-hard (resp.~$\piptwo$-hard) for \e{at least one} $\scs$, then
x-categoricity is coNP-hard (resp.~$\piptwo$-hard).

When we are considering x-categoricity$\angs{\scs}$, we assume that
all the integrity constraints are FDs. Therefore, unlike the general
problem of x-categoricity, in x-categoricity$\angs{\scs}$ conflicting
facts always belong to the same relation. It thus follows that our
analysis for x-categoricity$\angs{\scs}$ can restrict to
single-relation schemas. Formally, we have the following.

\begin{proposition}\label{prop:single-relation}
	Let $\scs=(\relset,\depset)$ be a schema and x be one of 
	p, g and c. For each relation $R\in\relset$, let 
	$\scs_{|R}$ be the schema $(\set{R},\depset_{|R})$.
	\begin{citemize}
		\item If x-categoricity$\angs{\scs_{|R}}$ is solvable in polynomial
		time for every $R\in\relset$, then x-categoricity$\angs{\scs}$ is
		solvable in polynomial time.
		\item If x-categoricity$\angs{\scs_{|R}}$ is coNP-hard
		(resp.~$\piptwo$-hard) for at least one $R\in\relset$, then
		x-categoricity$\angs{\scs}$ is coNP-hard (resp.~$\piptwo$-hard).
	\end{citemize}
\end{proposition}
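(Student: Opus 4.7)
The plan is to exploit the fact that under FDs all conflicts are intra-relational: no two facts over distinct relation symbols can jointly violate an FD, so the requirement that $\succ$-comparable facts be neighbors in $\graph_{\scs}^I$ forces the priority relation itself to decompose cleanly per relation. Writing $I_R\eqdef I\cap R$, $\succ_R\eqdef\succ\cap(I_R\times I_R)$, and $D_R\eqdef(I_R,\graph_{\scs_{|R}}^{I_R},\succ_R)$, the crux of the proof is a decomposition lemma stating that, for x among p, g, c, the x-repairs of $D$ are exactly the sets $\bigcup_{R\in\relset}J_R$ where each $J_R$ is an x-repair of $D_R$.

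First I would prove this decomposition lemma. Consistency and maximality split trivially across relations. For p-repairs, any Pareto improvement $J'$ of $J=\bigcup_R J_R$ supplies a witness $f\in J'\setminus J$ in some single relation $R$, and $f\succ f'$ forces $f$ and $f'$ into a common hyperedge, hence into the same relation, so $J'_R$ is a Pareto improvement of $J_R$; conversely, every single-relation Pareto improvement lifts to one for $D$. For g-repairs, an analogous argument shows that a global improvement $J'$ of $J$ must, for some $R$ with $J'_R\neq J_R$, contain witnesses $f\in J'_R\setminus J_R$ for every $f'\in J_R\setminus J'_R$, so that $J'_R$ is itself a global improvement of $J_R$; in particular, the degenerate situation $J'_R\subsetneq J_R$ with $J_R\setminus J'_R\neq\emptyset$ is excluded, since no candidate witness would exist in the empty $J'_R\setminus J_R$. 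For c-repairs, I would observe that completions of $\succ$ correspond bijectively to tuples of completions of the $\succ_R$ (cross-relation pairs are not neighbors and therefore need no completion), and then invoke the g-repair decomposition inside each completion.

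Given the lemma, both bullets follow directly. For tractability, an algorithm for x-categoricity$\angs{\scs}$ computes each $D_R$ in polynomial time (possible because $\relset$ and $\depset$ are fixed under data complexity) and invokes the x-categoricity$\angs{\scs_{|R}}$ solver on it; since the number of x-repairs of $D$ equals the product of the numbers of x-repairs of the $D_R$, the algorithm answers true iff every subcall answers true. For hardness, given a hard instance $(I_R,\succ_R)$ for x-categoricity$\angs{\scs_{|R}}$, I would extend it to an input over $\scs$ by setting every other relation to $\emptyset$; each such $D_{R'}$ has the unique x-repair $\emptyset$, so the product collapses to the single nontrivial factor, and the reduction is trivially in logspace. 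The main obstacle is the careful bookkeeping in the g-repair direction, where one must rule out the ``one-sided'' component $J'_R\subsetneq J_R$ that cannot supply a witness for a deletion from $J_R$; once this case is dispatched, the remainder of the proof is routine.
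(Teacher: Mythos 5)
Your proposal is correct and takes essentially the same approach as the paper: the paper states this proposition as an immediate consequence of the observation that, under FDs, conflicting facts (and hence $\succ$-comparable facts) always lie within a single relation, which is exactly the per-relation decomposition lemma you spell out, and your handling of the improvement witnesses and of completions matches the intended argument. The product formula for the number of x-repairs and the padding-with-empty-relations reduction are the right way to conclude both bullets.
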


Observe that the phenomenon of Proposition~\ref{prop:single-relation}
\e{does not} hold for x-categoricity, since the given conflict
hypergraph may include hyperedges that cross relations.

In the following sections we investigate each of the three variants of
categoricity: 
p-categoricity (Section~\ref{sec:p}), 
c-categoricity (Section~\ref{sec:c}) and
g-categoricity (Section~\ref{sec:g}).

\newcommand{\pc}{p-categoricity$\angs{\scs}$\xspace}

\section{p-Categoricity}\label{sec:p}
In this section we prove a dichotomy in the complexity of \pc over all
schemas $\scs$ (where $\depset$ consists of FDs). This dichotomy
states that the only tractable case is where the schema associates a
single FD (which can be trivial) to each relation symbol, up to
equivalence.  In all other cases, \pc is coNP-complete. Formally, we
prove the following.

\begin{theorem}
	\label{thm:pareto}
	Let $\scs=(\relset,\Delta)$ be a schema. The problem \pc can be solved
	in polynomial time if $\Delta|_R$ is equivalent to a single {FD} for
	every $R\in \relset$.  In every other case, \pc is coNP-complete.
\end{theorem}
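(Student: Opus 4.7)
The plan is to argue both sides of the dichotomy, first using Proposition~\ref{prop:single-relation} to restrict to a single-relation schema $\scs=(\set{R},\Delta)$. Membership in coNP is already given by Corollary~\ref{cor:upperbounds}, so only the algorithm and the matching coNP-hardness remain.

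\emph{Tractable side.} Suppose $\Delta|_R \equiv \set{X\to Y}$. The key structural observation is that $\graph_{\scs}^I$ decomposes as a disjoint union of conflict structures, one per ``$X$-group'' (a maximal set of facts sharing the same $X$-value): inside an $X$-group, facts further partition into ``$Y$-classes'' (also sharing the same $Y$-value), and conflicts run exactly between distinct $Y$-classes of a common $X$-group. Every repair picks one entire $Y$-class per $X$-group. Since $\succ$ only relates conflict-neighbors, no pair $f\succ g$ crosses $X$-groups; hence, if $J'$ is a Pareto improvement of $J$ with witness $f\in J'\setminus J$, then $J\setminus J'$ lies entirely in $f$'s $X$-group, i.e., Pareto improvements are local to a single $X$-group. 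Call a $Y$-class $C$ in a group \emph{Pareto-safe} if no other class in the group contains a fact that $\succ$-dominates every fact of $C$. A repair $J$ is a p-repair iff in every $X$-group the selected $Y$-class is Pareto-safe, and so \pc holds iff every $X$-group has exactly one Pareto-safe $Y$-class, a condition checkable in polynomial time by scanning all pairs of classes inside each group.

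\emph{Hard side.} Suppose $\Delta|_R$ is not equivalent to a single FD. Following the dichotomy methodology of~\cite{DBLP:conf/pods/Kimelfeld12,DBLP:conf/pods/FaginKK15}, the proof has two layers. First, a \emph{canonical-form} layer: when $\Delta|_R$ is not equivalent to a single FD, its minimal cover contains two FDs with incomparable left-hand sides (modulo $\Delta|_R$-equivalence), and by attribute padding and identification we reduce the general case to a small catalogue of canonical minimal obstructions (e.g., $R(A,B):\set{A\to B,\,B\to A}$ and wider relations carrying two FDs with a shared right-hand attribute). Second, a \emph{reduction} layer: for each canonical obstruction we reduce from the coNP-complete problem UNSAT. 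Given a 3-CNF $\phi$, we build an inconsistent prioritizing instance with one ``variable gadget'' per variable and one ``clause gadget'' per clause, wiring priorities so that $\algname{FindCRep}$ always produces a designated p-repair $J^*$, while a second p-repair $J\ne J^*$ exists iff $\phi$ is satisfiable. A satisfying assignment selects, in each variable gadget, an alternative $Y$-class whose would-be Pareto improvement is blocked by facts of the satisfied clause gadgets; conversely, under UNSAT every deviation from $J^*$ exposes an unblocked dominating fact, yielding a Pareto improvement.

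\emph{Principal difficulty.} The technical obstacle lies in both the canonical-form step and in tuning the gadgets. Unlike the stronger semantics classified in~\cite{DBLP:conf/pods/FaginKK15}, p-categoricity is governed by the very weak Pareto relation (a single dominating fact suffices for improvement), so standard padding constructions that work for g-repair checking can either spuriously create or destroy Pareto-safe classes. The ``highly nontrivial additions'' alluded to in the introduction are, I expect, auxiliary priority chains inside the gadgets that force every configuration not corresponding to a satisfying assignment to admit a single-fact dominator over the removed set, while simultaneously preserving Pareto-safety of $J^*$; verifying that these chains behave consistently across all canonical obstructions, and that the padding step correctly transfers hardness from canonical to general $\Delta|_R$, is where the bulk of the work will lie.
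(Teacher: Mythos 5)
Your tractable side is essentially the paper's argument: your ``$X$-groups'' and ``$Y$-classes'' are exactly the paper's blocks $I_{\tup a}$ and subblocks $I_{\tup a,\tup b}$, your locality claim for Pareto improvements and the characterization ``p-repair $=$ union of one Pareto-safe subblock per block'' is Lemma~\ref{lemma:key-for-pareto-1fd}, and the resulting test (``exactly one Pareto-safe class per group'') is Lemma~\ref{lemma:single-factorized}. That half is fine. Your hardness methodology is also the paper's in outline (canonicalize via fact-wise reductions in the style of Fagin et al., then prove a base case by reducing a satisfiability-type problem to the \emph{complement} of categoricity), with one organizational difference worth noting: the paper does not build a gadget for each canonical obstruction. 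It proves new fact-wise reductions from $\scs^0$ (the binary two-key schema) into $\scs^1,\dots,\scs^5$ and into every schema equivalent to two keys, so that only \emph{two} base constructions are ever needed, one for $\scs^0$ and one for $\scs^6$.

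The genuine gap is that the single hardest ingredient of the theorem --- coNP-hardness of p-categoricity$\angs{\scs^0}$, i.e., for $R(A,B)$ with $\set{A\to B,\, B\to A}$ --- is exactly the step you leave as ``tuning the gadgets,'' and the SAT-style variable/clause plan you sketch is doubtful there. In $\scs^0$ a fact $R^0(a,b)$ conflicts only with facts sharing its first or its second attribute value, and a Pareto improvement requires one inserted fact to dominate \emph{all} deleted facts; so a ``clause fact'' cannot be wired to dominate an assignment-encoding set spread across many variable gadgets, which is why this template works for $\scs^6$ (where $\emptyset\to A$ puts every fact in conflict with every other) but not for the two-key schema. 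The paper instead reduces from Exact Cover, using six fact types per element/set pair and a \emph{cyclic} priority chain $R^0(X_{x_{i+1}},x_i)\succ R^0(X_{x_i},x_i)$ inside each set gadget to force an all-or-nothing selection of a set, and needs about ten lemmas to pin down the structure of any alternative p-repair. Without that (or an equally delicate substitute) construction, plus the fact-wise reductions from $\scs^0$ into the remaining canonical schemas, the hardness side is not established.
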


The proof of Theorem~\ref{thm:pareto} is involved, and we outline it
in the rest of this section. The tractability side is fairly simple
(as we show in the next section), and the challenge is in the hardness
side.  Due to Proposition \ref{prop:single-relation}, it suffices to
consider schemas $\scs$ with a single relation. Hence, in the
remainder of this section we consider only such schemas $\scs$.

\subsection{Proof of Tractability}
\label{pareto: tractability}
In this section we fix a schema $\scs = (\relset, \depset)$, such that
$\relset$ consist of a single relational symbol $R$.  We will prove
that \pc is solvable in polynomial time if $\depset$ is a singleton.
We denote the single FD in $\depset$ as $X\rightarrow Y$. We fix the
input $(I,\succ)$ for \pc.

For a fact $f\in R^I$, we denote by $f[X]$ and $f[Y]$ the restriction
of the tuple of $f$ to the attributes in $X$ and $Y$, respectively.
Adopting the terminology of Koutris and
Wijsen~\cite{DBLP:conf/pods/KoutrisW15}, a \e{block} of $I$ is a
maximal collection of facts of $I$ that agree on all the attributes of
$X$ (i.e., facts $f$ that have the same $f[X]$).  Similarly, a
\e{subblock} of $I$ is a maximal collection of facts that agree on
both $X$ and $Y$.  For tuples $\tup a$ and $\tup b$ constants, we
denote by $I_{\tup a}$ the block of facts $f$ with $f[X]=\tup a$, and
by $I_{\tup a,\tup b}$ the subblock of facts $f$ with $f[X]=\tup a$
and $f[Y]=\tup b$. 

\begin{example}
	Consider again the instance $I$ of
	Figure~\ref{fig:companyceo-instance}, and suppose that $\depset$
	consists of only $\att{company}\rightarrow\att{ceo}$ (i.e., each
	company has a single CEO, but a person can be the CEO of several
	companies). Then for $\tup a=(\val{Google})$ and $\tup
	b=(\val{Pichai})$ the block $I_{\tup a}$ is
	$\set{\fgpi,\fgpa,\fgbr}$ and the subblock $I_{\tup a,\tup b}$ is
	the singleton $\set{\fgpi}$.
\end{example}

Tractability for $\scs$ is based on the following
lemma.

\def\lemmakeyforparetoonefd{
	Let $J$ be a subinstance of $I$. Then $J$ is a p-repair if and only if
	$J$ is a union of p-repairs over all the blocks $I_{\tup a}$ of $I$.
	Moreover, each p-repair of a block $I_{\tup a}$ is a subblock
	$I_{\tup a,\tup b}$.
}

\begin{lemma}\label{lemma:key-for-pareto-1fd}
	\lemmakeyforparetoonefd
\end{lemma}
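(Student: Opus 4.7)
The plan is to decompose p-repairs block-by-block by exploiting two locality properties of a single FD $X\rightarrow Y$: within a block $I_{\tup a}$, two facts conflict iff they lie in distinct subblocks; and priorities $\succ$ only relate conflicting facts (by the definition of an inconsistent prioritizing instance), hence only ever relate facts in a common block. These two observations drive everything.

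First, I would establish the ``moreover'' part together with the structure of block-parts. Since a consistent subset of $I_{\tup a}$ can contain only facts agreeing on $Y$, every such subset is contained in some subblock $I_{\tup a,\tup b}$; and conversely a whole subblock is itself consistent. So maximal consistent subsets of $I_{\tup a}$ are exactly the subblocks $I_{\tup a,\tup b}$. Because a p-repair is a repair (Proposition~\ref{prop:containments}), any p-repair of $I_{\tup a}$ is a subblock, and for a p-repair $J$ of $I$ each intersection $J\cap I_{\tup a}$ is a subblock.

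For the forward direction, given a p-repair $J$ of $I$, write $J=\bigcup_{\tup a} J_{\tup a}$ with $J_{\tup a}=J\cap I_{\tup a}$. I would show each $J_{\tup a}$ is a p-repair of $I_{\tup a}$ by contraposition: a Pareto improvement $K'$ of $J_{\tup a}$ inside the block immediately lifts to a Pareto improvement $(J\setminus J_{\tup a})\cup K'$ of $J$ (it is consistent because different blocks do not share conflicts, and the witnessing dominant fact in $K'$ still dominates the removed facts, all of which lie in $I_{\tup a}$).

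The more delicate direction is the converse. Suppose each $J_{\tup a}$ is a p-repair of $I_{\tup a}$; then $J=\bigcup_{\tup a} J_{\tup a}$ is consistent, and maximal because each $J_{\tup a}$ is a whole subblock. Assume for contradiction a Pareto improvement $J'$, witnessed by $f\in J'\setminus J$ with $f\succ f'$ for every $f'\in J\setminus J'$. Here lies the main obstacle: a priori $J'$ could differ from $J$ in several blocks at once. I would resolve this by using priority-locality to pin $J\setminus J'$ into a single block $I_{\tup a}$ (where $\tup a=f[X]$), and then use block-wise maximality to force $J'\cap I_{\tup b}=J\cap I_{\tup b}$ for every other block $I_{\tup b}$: in those blocks we only have $J\cap I_{\tup b}\subseteq J'\cap I_{\tup b}$, but $J\cap I_{\tup b}$ is already a full subblock, so no proper extension is consistent. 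Consequently $J$ and $J'$ differ only within $I_{\tup a}$, and $J'\cap I_{\tup a}$ is a Pareto improvement of $J_{\tup a}$, contradicting that $J_{\tup a}$ is a p-repair of the block. This closes the proof.
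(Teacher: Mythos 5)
Your proof is correct and follows essentially the same route as the paper's: establish the ``moreover'' part first (maximal consistent subsets of a block are exactly its subblocks), lift a block-local Pareto improvement to a global one for the forward direction, and localize a global Pareto improvement to the witness's block for the converse. The only cosmetic difference is in the converse step, where the paper directly exhibits the singleton $\set{f}$ as a Pareto improvement of the subblock of $f$'s block contained in $J$ (using only that $f$ conflicts with that entire subblock), whereas you first use priority-locality to argue that $J$ and $J'$ agree outside $f$'s block and then restrict $J'$ to it; both arguments are sound.
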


We then get the following lemma.
\def\lemmasinglefactorized
{ The following are equivalent.
	\begin{compactenum}
		\item $I$ has a single p-repair.
		\item Each block $I_{\tup a}$ has a single p-repair.
		\item No block $I_{\tup a}$ has two distinct subblocks $I_{\tup a,\tup
			b}$ and $I_{\tup a,\tup c}$ that are p-repairs of $I_{\tup a}$.
	\end{compactenum}
}

\begin{lemma}\label{lemma:single-factorized}
	\lemmasinglefactorized
\end{lemma}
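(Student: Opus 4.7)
The plan is to derive Lemma~\ref{lemma:single-factorized} as a nearly immediate consequence of Lemma~\ref{lemma:key-for-pareto-1fd}, combined with the fact from Proposition~\ref{prop:containments} that every inconsistent prioritizing instance has at least one p-repair. I would establish (1)$\Leftrightarrow$(2) and (2)$\Leftrightarrow$(3) separately; the full equivalence then follows by transitivity.

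First I would handle (1)$\Leftrightarrow$(2). Since distinct blocks of $I$ are pairwise disjoint, Lemma~\ref{lemma:key-for-pareto-1fd} actually supplies a bijection between p-repairs $J$ of $I$ and families $(J_{\tup a})_{\tup a}$ that choose, for each block $I_{\tup a}$, one of its p-repairs: the forward map sends $J$ to $(J\cap I_{\tup a})_{\tup a}$, and the inverse takes disjoint unions. Consequently the count $|\prep{(I,\graph,\succ)}|$ factors as a product $\prod_{\tup a} |\prep{D_{\tup a}}|$ over the blocks, where $D_{\tup a}$ is the natural restriction of $(I,\graph,\succ)$ to $I_{\tup a}$. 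Every factor is at least one by Proposition~\ref{prop:containments} applied to $D_{\tup a}$, so the product equals one exactly when each factor equals one, yielding (1)$\Leftrightarrow$(2).

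For (2)$\Leftrightarrow$(3), I would invoke the ``moreover'' clause of Lemma~\ref{lemma:key-for-pareto-1fd}: every p-repair of a block $I_{\tup a}$ has the form $I_{\tup a,\tup b}$ for some $\tup b$. Hence $I_{\tup a}$ has a single p-repair if and only if no two distinct subblocks $I_{\tup a,\tup b}$ and $I_{\tup a,\tup c}$ are both p-repairs of $I_{\tup a}$ -- which is precisely condition (3). Existence of at least one p-repair per block, again from Proposition~\ref{prop:containments}, rules out the degenerate zero-p-repair case in the $\Leftarrow$ direction. I do not anticipate any real obstacle: the technical content is already concentrated in Lemma~\ref{lemma:key-for-pareto-1fd}, and the purpose of the present statement is algorithmic, since condition (3) can be checked in polynomial time by enumerating ordered pairs of subblocks per block and invoking the polynomial-time p-repair checking of Theorem~\ref{thm:repairchecking}.
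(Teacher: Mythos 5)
Your proposal is correct and matches the paper's intended derivation: the paper states this lemma as an immediate consequence of Lemma~\ref{lemma:key-for-pareto-1fd} without spelling out the details, and your argument (block-wise factorization of the set of p-repairs for (1)$\Leftrightarrow$(2), the ``moreover'' clause plus existence of at least one p-repair per block for (2)$\Leftrightarrow$(3)) is exactly the reasoning being invoked. The closing observation about polynomial-time checkability of condition (3) via p-repair checking is also precisely how the paper uses the lemma.
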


A polynomial-time algorithm then follows directly from
Lemma~\ref{lemma:single-factorized} and the fact that p-repair
checking is solvable in polynomial time
(Theorem~\ref{thm:repairchecking}).

\subsection{Proof of Hardness}
The hardness side of the dichotomy is more involved than its
tractability side. Our proof is based on the concept of a \e{fact-wise
	reduction}~\cite{DBLP:journals/tods/KimelfeldVW12}, which has also
been used by Fagin et al.~\cite{DBLP:conf/pods/FaginKK15} in the
context of g-repair checking.

\subsubsection{Fact-Wise Reduction}

Let $\scs=(\relset,\depset)$ and $\scs'=(\relset',\depset')$ be two
schemas. A \e{mapping} from $\relset$ to $\relset'$ is a function
$\mu$ that maps facts over $\relset$ to facts over $\relset'$. We
naturally extend a mapping $\mu$ to map instances $J$ over $\relset$
to instances over $\relset'$ by defining $\mu(J)$ to be
$\set{\mu(f)\mid f\in J}$.  A \e{fact-wise reduction} from $\scs$ to
$\scs'$ is a mapping $\Pi$ from $\relset$ to $\relset'$ with the
following properties.
\begin{enumerate}
	\item $\Pi$ is injective; that is, for all facts $f$ and $g$ over
	$\relset$, if $\Pi(f) = \Pi(g)$ then $f = g$.
	\item $\Pi$ preserves consistency and inconsistency; that is, for
	every instance $J$ over $\scs$, the instance $\Pi(J)$ satisfies
	$\depset'$ if and only if $J$ satisfies $\depset$.
	\item $\Pi$ is computable in polynomial time.
\end{enumerate}

Let $\scs$ and $\scs'$ be two schemas, and let $\Pi$ be a fact-wise
reduction from $\scs$ to $\scs'$.  Given an inconsistent instance $I$
over $\scs$ and a priority relation $\succ$ over $I$, we denote by
$\Pi(\succ)$ the priority relation $\succ'$ over $\Pi(I)$ where
$\Pi(f)\succ'\Pi(g)$ if and only if $f\succ g$. If $D$ is the
inconsistent prioritizing instance $(I,\graph_{\scs}^I,\succ)$, then
we denote by $\Pi(D)$ the triple
$(\Pi(I),\graph_{\scs'}^{\Pi(I)},\Pi(\succ))$, which is also an
inconsistent prioritizing instance. The usefulness of fact-wise
reductions is due to the following proposition, which is
straightforward.

\begin{proposition}
	Let $\scs$ and $\scs'$ be two schemas, and suppose that $\Pi$ is a
	fact-wise reduction from $\scs$ to $\scs'$.  Let $I$ be an
	inconsistent instance over $\scs$, $\succ$ a priority relation over
	$I$, and $D$ the inconsistent prioritizing instance
	$(I,\graph_{\scs}^I,\succ)$.  Then there is a bijection between
	$\prep{D}$ and $\prep{\Pi(D)}$.
\end{proposition}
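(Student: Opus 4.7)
The plan is to prove that the map $J \mapsto \Pi(J)$ is the desired bijection. The essential observation is that the three properties in the definition of a fact-wise reduction, together with the definition of $\Pi(\succ)$, make $\Pi$ behave as an isomorphism between the ``consistency-plus-priority'' structure on $I$ and the one on $\Pi(I)$.

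First, I would set up the bijection on \emph{all} subinstances. By injectivity of $\Pi$ (property 1), every fact of $\Pi(I)$ has a unique preimage, so $J \mapsto \Pi(J)$ is a bijection between subinstances of $I$ and subinstances of $\Pi(I)$, with inverse $J^\star \mapsto \set{f \in I : \Pi(f) \in J^\star}$. The map is inclusion-preserving in both directions, and for any $J, K \subseteq I$ injectivity gives $\Pi(J) \setminus \Pi(K) = \Pi(J \setminus K)$ and $\Pi(K) \setminus \Pi(J) = \Pi(K \setminus J)$.

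Second, I would verify that the bijection preserves repairs. Property 2 of a fact-wise reduction says $J$ satisfies $\depset$ iff $\Pi(J)$ satisfies $\depset'$; equivalently, $J$ is consistent w.r.t.~$\graph_{\scs}^I$ iff $\Pi(J)$ is consistent w.r.t.~$\graph_{\scs'}^{\Pi(I)}$. Combined with the inclusion-preservation from the previous step, this shows that $J$ is a maximal consistent subinstance of $I$ iff $\Pi(J)$ is a maximal consistent subinstance of $\Pi(I)$, so $J \in \rep{D}$ iff $\Pi(J) \in \rep{\Pi(D)}$.

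Third, I would verify preservation of Pareto improvements. Let $J$ and $K$ be distinct consistent subinstances of $I$. By definition, $\Pi(f) \,\Pi(\succ)\, \Pi(g)$ iff $f \succ g$, and by the set identities above, a fact $f^\star \in \Pi(K) \setminus \Pi(J)$ is exactly $\Pi(f)$ for a unique $f \in K \setminus J$, and similarly for $\Pi(J) \setminus \Pi(K)$. Hence ``there exists $f \in K \setminus J$ such that $f \succ f'$ for every $f' \in J \setminus K$'' translates verbatim to ``there exists $f^\star \in \Pi(K) \setminus \Pi(J)$ such that $f^\star\, \Pi(\succ)\, g^\star$ for every $g^\star \in \Pi(J) \setminus \Pi(K)$.'' Therefore $K$ is a Pareto improvement of $J$ iff $\Pi(K)$ is a Pareto improvement of $\Pi(J)$. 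Combining with the second step, $J \in \prep{D}$ iff $\Pi(J) \in \prep{\Pi(D)}$, which completes the bijection.

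There is no real obstacle here: the proposition is essentially a bookkeeping statement that unpacks the three axioms of a fact-wise reduction together with the definition of $\Pi(\succ)$. The one place where care is needed is confirming that the inverse map truly takes a p-repair $J^\star$ of $\Pi(D)$ into a subinstance of $I$ (rather than some larger structure); this is immediate because every fact of $\Pi(I)$ lies in the image of $\Pi$ on $I$, so the preimage is well-defined and is exactly a subinstance of $I$.
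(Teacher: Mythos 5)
Your proof is correct and is exactly the ``straightforward'' unpacking the paper has in mind: the paper states this proposition without proof, and your argument---injectivity of $\Pi$ giving a bijection on subinstances that commutes with set difference, property 2 transferring consistency and hence maximality, and the definition of $\Pi(\succ)$ transferring Pareto improvements verbatim---is the intended one. The only point worth noting is that the well-definedness of $\Pi(D)$ as an inconsistent prioritizing instance (acyclicity of $\Pi(\succ)$ and the neighbor requirement) is already asserted by the paper before the proposition, so you were right not to re-derive it.
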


We then conclude the following corollary.
\begin{corollary}\label{cor:fact-wise}
	If there is a fact-wise reduction from $\scs$ to $\scs'$, then
	there is a polynomial-time reduction from p-categoricity$\angs{\scs}$ to
	p-categoricity$\angs{\scs'}$.
\end{corollary}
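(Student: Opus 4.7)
My plan is to lean on the preceding proposition, which already supplies a bijection between $\prep{D}$ and $\prep{\Pi(D)}$. Since a bijection preserves cardinality, the condition $|\prep{D}|=1$ holds if and only if $|\prep{\Pi(D)}|=1$, which is exactly what a many-one reduction between the two decision problems requires. So all that remains is to exhibit a polynomial-time map between inputs and verify that it actually produces a legitimate p-categoricity$\angs{\scs'}$ instance.

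Concretely, given an input $(I,\succ)$ for p-categoricity$\angs{\scs}$, I would output the pair $(\Pi(I),\Pi(\succ))$ as the input for p-categoricity$\angs{\scs'}$. Computing $\Pi(I)$ requires one application of $\Pi$ per fact in $I$, which is polynomial by property~3 of fact-wise reductions; computing $\Pi(\succ)$ amounts to translating each pair $(f,g)\in\succ$ into $(\Pi(f),\Pi(g))$, again polynomial. Injectivity of $\Pi$ (property~1) guarantees that $\Pi(\succ)$ inherits acyclicity from $\succ$, so it is a valid priority relation. Preservation of (in)consistency (property~2) guarantees that whenever $(f,g)$ is an edge in $\graph_{\scs}^I$, the pair $(\Pi(f),\Pi(g))$ is an edge in $\graph_{\scs'}^{\Pi(I)}$, so the neighbor requirement imposed on priorities transfers to the image instance. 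Invoking the proposition then yields $|\prep{\Pi(D)}|=|\prep{D}|$, completing the argument.

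There is essentially no obstacle in this corollary---it is pure bookkeeping---because the substantive content, namely the bijection of p-repairs, is already delivered by the preceding proposition. The only care needed is to confirm that the three defining properties of a fact-wise reduction combine to produce a well-formed inconsistent prioritizing instance on the $\scs'$ side, which I sketched above.
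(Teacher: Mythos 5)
Your proposal is correct and follows exactly the route the paper intends: the corollary is stated as an immediate consequence of the preceding proposition (the bijection between $\prep{D}$ and $\prep{\Pi(D)}$), with the reduction being the map $(I,\succ)\mapsto(\Pi(I),\Pi(\succ))$, and your verification that the three properties of a fact-wise reduction yield a well-formed inconsistent prioritizing instance over $\scs'$ is the same bookkeeping the paper leaves implicit.
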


\subsubsection{Specific Schemas}\label{sec:spec-schemas}
In the proof we consider seven specific schemas. The importance of
these schemas will later become apparent.  We denote these schemas by
$\scs^i$, for $i=0,1,\dots,6$, where each $\scs^i$ is the schema
$(\relset^i,\depset^i)$, and $\relset^i$ is the singleton
$\set{R^i}$. The specification of the $\scs^i$ is as follows.

\begin{enumerate}
	\item[0.] $R^0/2$ and $\depset^0=\set{A\rightarrow B,B\rightarrow A}$
	\item[1.] $R^1/3$ and $\depset^1=\set{AB\rightarrow C,BC\rightarrow A,AC \rightarrow B}$
	\item[2.] $R^2/3$ and $\depset^2=\set{A\rightarrow B, B\rightarrow A}$ 
	\item[3.] $R^3/3$ and $\depset^3=\set{AB\rightarrow C, C\rightarrow B}$
	\item[4.] $R^4/3$ and $\depset^4=\set{A\rightarrow B, B\rightarrow C}$
	\item[5.] $R^5/3$ and $\depset^5=\set{A\rightarrow C , B\rightarrow C}$
	\item[6.] $R^6/3$ and $\depset^6=\set{\emptyset\rightarrow A, B\rightarrow C}$
\end{enumerate} 

(In the definition of $\scs^6$, recall that $\emptyset\rightarrow A$
denotes the FD $\emptyset\rightarrow\set{1}$, meaning that all tuples
should have the same value for their first attribute.) In the proof we
use fact-wise reductions from the $\scs^i$, as we explain in the next
section.

\subsubsection{Two Hard Schemas}
Our proof boils down to proving coNP-hardness for two specific
schemas, namely $\scs^0$ and $\scs^6$, and then using (known and new)
fact-wise reductions in order to cover all the other schemas.  For
$\scs^6$ the proof is fairly simple. But hardness for $\scs^0$ turns
out to be quite challenging to prove, and in fact, this part is the
hardest in the proof of Theorem~\ref{thm:pareto}.  Note that $\scs^0$
is the schema of our company-CEO running example (introduced in
Example~\ref{example:ceo}).

\def\theoremhardnessspecificpareto{
	The problems p-categoricity$\angs{\scs^0}$ and
	p-categoricity$\angs{\scs^6}$ are both coNP-hard.}

\begin{theorem}\label{thm:hardness-specific-pareto} 
	\theoremhardnessspecificpareto
\end{theorem}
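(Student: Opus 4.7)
The plan is to reduce from $3$SAT to the complement of p-categoricity, showing NP-hardness of the existence of at least two distinct p-repairs. Given a $3$CNF formula $\phi$, I construct an inconsistent prioritizing instance $(I,\succ)$ over each target schema such that $\phi$ is satisfiable if and only if $(I,\succ)$ admits two distinct p-repairs. By Theorem~\ref{thm:cgreedy} together with Proposition~\ref{prop:containments}, a canonical p-repair $J^\star$ always exists and is obtained by any execution of $\algname{FindCRep}$; the reductions are arranged so that $J^\star$ is forced by the priority relation, and any \emph{other} p-repair corresponds to a satisfying assignment of $\phi$.

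For $\scs^6$, whose FDs are $\emptyset\rightarrow A$ and $B\rightarrow C$, the constraint $\emptyset\rightarrow A$ forces every repair to commit to a single $A$-value. The plan is to introduce one ``canonical'' value $\val{0}$ that carries a baseline block of facts whose $B$-values are pairwise distinct (so the greedy repair retains them all), plus, for every potential satisfying assignment structure, an auxiliary $A$-value whose associated block can be made consistent under $B\rightarrow C$ only by a selection of facts that encodes satisfaction of all clauses. The priority $\succ$ points from the $\val{0}$-block to conflicting auxiliary-block facts, so that the canonical block is always a p-repair; conversely, a satisfying assignment produces an auxiliary block $J$ such that no Pareto improvement exists, since replacing any fact of $J$ with a $\val{0}$-fact would force removing facts of $J$ that are incomparable to the new one. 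This yields a second p-repair exactly when $\phi$ is satisfiable.

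For $\scs^0$, with $A\rightarrow B$ and $B\rightarrow A$, consistent subinstances are precisely matchings in a bipartite graph whose vertices are $A$-values and $B$-values and whose edges are the facts of $I$. A key observation is that a Pareto improvement is extremely local: if $J'$ is a Pareto improvement of $J$ witnessed by $f$, then $f$ shares an endpoint with every fact in $J\setminus J'$, so $|J\setminus J'|\leq 2$. The strategy is to design per-variable gadgets realized as short alternating paths where ``flipping'' the matching corresponds to setting the variable, and per-clause gadgets that link variable gadgets through shared endpoints; a global flip pattern then yields an alternative maximal matching exactly when it corresponds to a satisfying assignment. Auxiliary ``anchor'' facts are inserted so that the priority relation, which must compare only endpoint-sharing facts, can still coordinate flips across the gadget.

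The main obstacle, and the reason the $\scs^0$ case is hard where $\scs^6$ is comparatively straightforward, is precisely this $2$-local nature of Pareto improvements. The construction must guarantee simultaneously that (i) $\algname{FindCRep}$ uniquely yields the canonical matching regardless of the tie-breaking in line~$3$; (ii) every satisfying assignment produces a distinct maximal matching with \emph{no} Pareto improvement, i.e., every potential dominating edge fails to dominate at least one of its two displaced neighbors; and (iii) no ``partial flip'' along a subset of variable gadgets that does not satisfy $\phi$ survives as a spurious second p-repair. Enforcing (iii) is the most delicate part: I expect to need additional priority edges inside each clause gadget that make any partial-flip matching vulnerable to a small local Pareto improvement, together with a careful choice of the canonical matching so that unsatisfied clauses always expose exactly such a local improvement. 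Once this is achieved, the dichotomy of Theorem~\ref{thm:pareto} follows by combining the two base cases with fact-wise reductions per Corollary~\ref{cor:fact-wise}.
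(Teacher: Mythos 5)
Your treatment of $\scs^6$ is essentially the paper's: the paper also reduces from CNF satisfiability, uses $\emptyset\rightarrow A$ to split the instance into a ``clause block'' $\set{R^6(\otimes,c_j,c_j)}$ (the forced canonical repair, with pairwise distinct $B$-values) and a ``variable block'' whose consistent maximal selections are exactly truth assignments, and directs $\succ$ from clause facts to the variable facts they would falsify, so that a second p-repair exists iff some assignment satisfies every clause. That half is sound.

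For $\scs^0$ there is a genuine gap. Your observation that Pareto improvements are $2$-local under two keys on a binary relation is correct and is indeed the crux, but the proposal never produces the construction that survives it: the variable gadgets, the clause gadgets, the anchor facts, and above all the mechanism for item (iii) are all described as things you ``expect to need,'' not things you exhibit. The difficulty is not cosmetic. To kill a spurious partial-flip matching you must exhibit a \emph{single} fact that dominates \emph{both} of the (at most two) matched facts it conflicts with, and the priority relation may only compare endpoint-sharing facts; in a 3SAT clause gadget three variable gadgets must be coordinated through such strictly local dominations, and it is entirely unclear that this can be done without simultaneously creating a Pareto improvement of the intended ``satisfying'' matching, which must survive. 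The paper avoids this bind by reducing from Exact Cover rather than 3SAT, and its gadget is built precisely around the $2$-locality: for each set $X=\set{x_0,\dots,x_{n-1}}$ it adds a priority cycle $R^0(X_{x_{i+1}},x_i)\succ R^0(X_{x_i},x_i)$ (indices mod $n$), so that including one fact $R^0(X_{x_i},x_i)$ forces, on pain of a Pareto improvement, the inclusion of $R^0(X_{x_{i+1}},x_{i+1})$, propagating around the cycle and making set selection all-or-nothing; disjointness of the chosen sets then comes for free from the key on the second attribute, and coverage of every element is forced by the chain $R^0(u,u)\succ R^0(u,f_u)$ together with $R^0(X_x,X_x)\succ R^0(X_x,f_u)$. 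Without an analogous concrete coordination device, your $\scs^0$ argument does not go through; you would either need to supply one or switch the source problem as the paper does.
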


The proof (as well as all the other proofs for the results in this
paper) can be found in the appendix.

	\begin{figure}[t]
		\centering
		\input{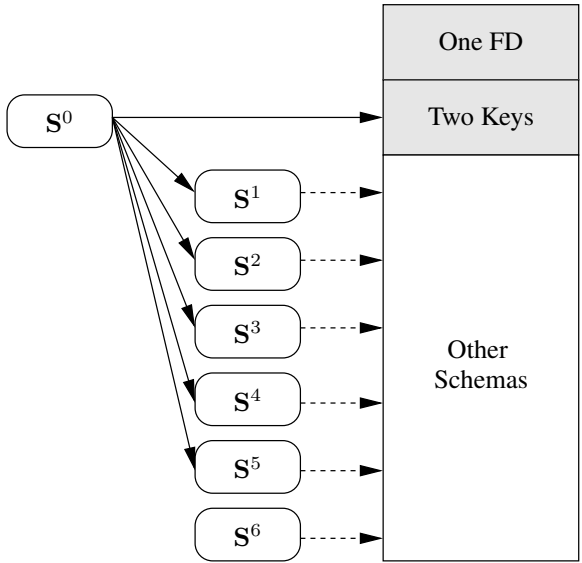}
		\caption{\label{fig:pcategoricitystrategy}The structure of fact-wise
	reductions for proving the hardness side of the dichotomy of Theorem~\ref{thm:pareto}}
	\end{figure}

\eat{
	Our proof of coNP-hardness
	of p-categoricity$\angs{\scs^0}$ is by a reduction from the
	\e{Exact-Cover problem}, and is quite intricate. For $\scs^6$ the
	proof is easier. Both proofs are in the appendix.
}

\subsubsection{Applying Fact-Wise Reductions}

The following has been proved by Fagin et
al.~\cite{DBLP:conf/pods/FaginKK15}.
\begin{citedtheorem}{DBLP:conf/pods/FaginKK15}\label{lemma:fw-from-s1-6}
	Let $\scs=(\relset,\depset)$ be a schema such that $\relset$
	consists of a single relation symbol. Suppose that $\depset$ is
	equivalent to neither any single FD nor any pair of keys. Then there
	is a fact-wise reduction from some $\scs^i$ to $\scs$, where
	$i\in\set{1,\dots,6}$.
\end{citedtheorem}

In the appendix we prove the following two lemmas, giving additional
fact-wise reductions.
\def\lemmafwfromszero
{Let $\scs=(\relset,\depset)$ be a schema such that $\relset$
	consists of a single relation symbol. Suppose that $\depset$ is
	equivalent to a pair of keys, and $\depset$ is not equivalent to any
	single FD. Then there is a fact-wise reduction from $\scs^0$ to
	$\scs$.}
\begin{lemma}\label{lemma:fw-from-s0}
	\lemmafwfromszero
\end{lemma}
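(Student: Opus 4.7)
The plan is to construct an explicit fact-wise reduction $\Pi$ from $\scs^0$ to $\scs$, exploiting the fact that the two keys defining $\depset$ must be incomparable. Suppose $\depset \equiv \set{X_1\rightarrow \inds{R}, X_2\rightarrow \inds{R}}$. If $X_1\subseteq X_2$, then $X_1\rightarrow\inds{R}$ logically implies $X_2\rightarrow\inds{R}$, so the pair would already be equivalent to the single FD $X_1\rightarrow\inds{R}$, contradicting our assumption; symmetrically for $X_2\subseteq X_1$. Hence both $X_1\setminus X_2$ and $X_2\setminus X_1$ are nonempty, and I would pick attributes $A'\in X_1\setminus X_2$ and $B'\in X_2\setminus X_1$, together with an arbitrary fixed constant $\star\in\consts$.

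Next I would define $\Pi$ on a fact $R^0(a,b)$ as the $R$-fact $R(\tup{t})$ whose values depend on where each attribute sits relative to the two keys: $\tup{t}[A']:=a$ and $\tup{t}[C]:=a$ for every $C\in X_1\setminus(X_2\cup\set{A'})$; $\tup{t}[B']:=b$ and $\tup{t}[C]:=b$ for every $C\in X_2\setminus(X_1\cup\set{B'})$; and $\tup{t}[C]:=\star$ for every remaining attribute, namely for $C\in(X_1\cap X_2)\cup(\inds{R}\setminus(X_1\cup X_2))$. Injectivity and polynomial-time computability are immediate, since the values at positions $A'$ and $B'$ already recover the pair $(a,b)$.

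The heart of the argument is preservation of consistency and inconsistency. Because $\depset^0\equiv\set{A\rightarrow B, B\rightarrow A}$, two $\scs^0$-facts are in conflict iff they are distinct and agree on $A$ or on $B$; because $\depset\equiv\set{X_1\rightarrow\inds{R}, X_2\rightarrow\inds{R}}$, two $\scs$-facts are in conflict iff they are distinct and agree on $X_1$ or on $X_2$. Thus it suffices to prove, for any two $\scs^0$-facts $f=R^0(a_1,b_1)$ and $g=R^0(a_2,b_2)$, the two equivalences: $\Pi(f)$ and $\Pi(g)$ agree on $X_1$ iff $a_1=a_2$, and $\Pi(f)$ and $\Pi(g)$ agree on $X_2$ iff $b_1=b_2$. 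The forward directions use $A'\in X_1$ and $B'\in X_2$ respectively; the backward directions follow from the construction, since when $a_1=a_2$ every attribute of $X_1$ is assigned the same value in $\Pi(f)$ and $\Pi(g)$, and analogously for $X_2$.

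The main obstacle, and the reason the attribute partition must be made so carefully, is the treatment of attributes in $X_1\cap X_2$: agreement on such an attribute $C$ must be consistent with \emph{both} equivalences simultaneously, which forces $\tup{t}[C]$ to depend on neither $a$ nor $b$. Assigning the fixed constant $\star$ to these attributes (and, for tidiness, to attributes lying outside both keys) is precisely what makes the two equivalences hold at once; once this point is spotted, verifying that $\Pi$ meets the three conditions of a fact-wise reduction becomes a routine case-check on the attribute partition $\set{A'}\cup\set{B'}\cup (X_1\setminus(X_2\cup\set{A'}))\cup(X_2\setminus(X_1\cup\set{B'}))\cup(X_1\cap X_2)\cup(\inds{R}\setminus(X_1\cup X_2))$.
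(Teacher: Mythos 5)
Your construction is exactly the paper's fact-wise reduction (values $a$ on $X_1\setminus X_2$, $b$ on $X_2\setminus X_1$, a fixed constant elsewhere, with nonemptiness of the two difference sets following from non-equivalence to a single FD), and your verification via the two agreement equivalences is just a slightly tidier packaging of the paper's case analysis. The proposal is correct and takes essentially the same approach.
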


\def\lemmafwfromszerotosonetofive
{  For all $i=1,\dots,5$ there is a fact-wise reduction from $\scs^0$
	to $\scs^i$.}
\begin{lemma}\label{lemma:fw-from-s0-to-s1-5}
	\lemmafwfromszerotosonetofive
\end{lemma}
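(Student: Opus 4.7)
The plan is to exhibit, for each $i \in \{1,\dots,5\}$, an explicit fact-wise reduction $\Pi_i$ from $\scs^0$ to $\scs^i$ that syntactically encodes each binary tuple of $R^0$ as a ternary tuple of $R^i$ using only the values $a$, $b$, a fixed padding constant, or a pairing code $\angs{a,b}$. Injectivity and polynomial-time computability will be immediate from the form of each $\Pi_i$. The only substantive requirement is the preservation of consistency, and for this it suffices to recall that two facts $R^0(a_1,b_1)$ and $R^0(a_2,b_2)$ violate $\depset^0$ exactly when $(a_1 = a_2 \wedge b_1 \neq b_2)$ or $(b_1 = b_2 \wedge a_1 \neq a_2)$, and then to verify that the violations of $\depset^i$ on the image $\Pi_i(J)$ realize precisely these two patterns.

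Concretely, let $\star$ denote an arbitrary fixed constant and let $\angs{a,b}$ denote a fixed injective polynomial-time pairing of constants. The encodings I would use are
\begin{citemize}
\item $\Pi_1(R^0(a,b)) \eqdef R^1(a,b,\star)$,
\item $\Pi_2(R^0(a,b)) \eqdef R^2(a,b,\star)$,
\item $\Pi_3(R^0(a,b)) \eqdef R^3(a,a,b)$,
\item $\Pi_4(R^0(a,b)) \eqdef R^4(a,b,a)$,
\item $\Pi_5(R^0(a,b)) \eqdef R^5(a,b,\angs{a,b})$.
\end{citemize}

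The verification will proceed by a uniform case analysis, going FD by FD in each $\depset^i$. For $\scs^2$, $\depset^2$ involves only the first two attributes, so the padding by $\star$ reproduces the conflicts of $\scs^0$ verbatim. For $\scs^1$, the FD $AB \to C$ is trivially satisfied on the image (the third attribute is constant), while $BC \to A$ collapses to $B\to A$ and $AC \to B$ collapses to $A\to B$, yielding exactly $\depset^0$. For $\scs^3$ with the padding $(a,a,b)$, the FD $AB \to C$ fires iff $a_1 = a_2 \wedge b_1 \neq b_2$ and $C \to B$ fires iff $b_1 = b_2 \wedge a_1 \neq a_2$; the case of $\scs^4$ with $(a,b,a)$ is symmetric, with the roles of the two FDs analogous. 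The only case that needs the pairing $\angs{a,b}$ is $\scs^5$: here both FDs share the right-hand side $C$, so a constant padding would make both FDs trivially satisfied on the image; by placing the unique code $\angs{a,b}$ in position $C$, the FD $A \to C$ fires iff $a_1 = a_2$ with $(a_1,b_1) \neq (a_2,b_2)$, i.e., iff $a_1 = a_2 \wedge b_1 \neq b_2$, and symmetrically for $B \to C$. This shared right-hand side in $\scs^5$ is the only spot where a naive constant-padding approach fails, and it is the main (and quite mild) obstacle; the rest is essentially bookkeeping.
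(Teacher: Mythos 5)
Your proposal is correct and takes essentially the same approach as the paper: the paper uses the identical encodings for $i=1,2,4,5$ (including the pairing $\langle a,b\rangle$ in the third position for $\scs^5$, for exactly the reason you identify), and verifies injectivity plus pairwise preservation of consistency FD by FD. The only difference is cosmetic: for $\scs^3$ the paper uses $R^3(\otimes,a,b)$ where you use $R^3(a,a,b)$, and both choices make $AB\rightarrow C$ and $C\rightarrow B$ realize precisely the two violation patterns of $\depset^0$.
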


The structure of our fact-wise reductions is depicted in
Figure~\ref{fig:pcategoricitystrategy}. Dashed edges are known
fact-wise reductions, while solid edges are novel. Observe that each
single-relation schema on the hardness side of
Theorem~\ref{thm:pareto} has an ingoing path from either $\scs^0$ or
$\scs^6$, both shown to have coNP-hard p-categoricity
(Theorem~\ref{thm:hardness-specific-pareto}).

\section{c-Categoricity}\label{sec:c}
We now investigate the complexity of c-categoricity. Our main result
is that this problem is tractable.

\begin{theorem}\label{thm:c-categoricity-ptime}
	The c-categoricity problem  is solvable in polynomial time.
\end{theorem}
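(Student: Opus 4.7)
By Theorem~\ref{thm:cgreedy}, the c-categoricity problem is equivalent to deciding whether every execution of $\algname{FindCRep}$ on the input $(I,\graph,\succ)$ returns the same c-repair. Based on the authors' hint that the algorithm is ``extremely simple'', I would design a mildly augmented version of $\algname{FindCRep}$ that also performs a local test at each iteration. Concretely, in the iteration with state $(J,I)$, my plan is to compute $M := \max_\succ(I)$ together with its safe subset $M^{\mathrm{safe}} := \{f \in M : J \cup \{f\}\text{ is consistent w.r.t.\ } \graph\}$, and then perform a polynomial-time local check tying $J$, $M^{\mathrm{safe}}$ and the hyperedges of $\graph$---for instance, checking whether $J \cup M^{\mathrm{safe}}$ is consistent, possibly refined with a short look-ahead at hyperedges that touch facts in $I \setminus M$ (which may be exposed into $\max_\succ$ once a fact of $M$ is removed). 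If the local check fails the algorithm returns FALSE; otherwise it advances exactly as $\algname{FindCRep}$ does, and if the loop terminates it returns TRUE.

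Correctness is proved via two lemmas. For \emph{soundness} (the algorithm returns TRUE implies uniqueness), I plan an inductive invariant: assuming the local check holds at every iteration, I show that every choice of the next fact from $M$ yields an equivalent post-step state, so by induction all executions of $\algname{FindCRep}$ produce the same output $J^*$. For \emph{completeness} (uniqueness implies the algorithm returns TRUE), I work by contrapositive: if the local check fails at some iteration because of a witnessing hyperedge $e$, then by picking two different facts of $e \cap M$ at that iteration, one can extend into two different linear extensions of $\succ$ whose corresponding $\algname{FindCRep}$ executions give distinct c-repairs, both certified by Theorem~\ref{thm:cgreedy}.

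The principal obstacle is the completeness proof. After picking two different facts $f \neq g$ from $e \cap M$ at iteration $t$, the two executions share much of their structure, and the greedy nature of $\algname{FindCRep}$ may re-converge on later decisions; so we must show that the early discrepancy is never wiped out. I anticipate needing a simultaneous-simulation argument that follows both executions in lock-step, showing that the symmetric difference $J_1 \triangle J_2$ is never empty and that the priority constraint $\succ$ is respected throughout. Additional delicate cases to handle include hyperedges of arity larger than two, hyperedges that partially lie in $J$, and the non-monotone behavior of $\max_\succ$ under removals: removing a fact of $M$ can strictly enlarge $\max_\succ$ in the next iteration, so the local check at step $t$ must anticipate how its choices propagate. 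Tuning the local test so that it is both strong enough to catch every possible divergence and weak enough to be computable in polynomial time is precisely where the intricate argument lies.
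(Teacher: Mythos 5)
There is a genuine gap: the heart of this theorem is the design of a concrete polynomial-time test, and your proposal leaves exactly that part open. You reduce the problem (correctly, via Theorem~\ref{thm:cgreedy}) to deciding whether all executions of $\algname{FindCRep}$ agree, and then posit an unspecified ``local check'' on $J$, $M^{\mathrm{safe}}$ and the hyperedges, conceding at the end that tuning this test is ``precisely where the intricate argument lies.'' Without the test, there is nothing to prove sound or complete. Moreover, the soundness invariant you propose --- that every choice of the next fact from $M$ yields an equivalent post-step state --- is too strong: on positive instances, different executions of $\algname{FindCRep}$ genuinely pass through different intermediate states (e.g., the two executions in the company-CEO example begin with different facts) and only converge at the end, so such an invariant would reject yes-instances.

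The paper's algorithm is structurally different from a per-iteration simulation of $\algname{FindCRep}$. It stratifies $I$ using the \emph{transitive closure} $\succs$ of the priority relation: at each round it takes $P_i = \max_{\succs}(I)$ into $J$, discards the ``negative'' facts $f$ that complete a hyperedge $e$ with $(e\setminus\set{f})\subseteq P_1\cup\dots\cup P_i$ and $(e\setminus\set{f})\succs f$, and at termination performs a single global consistency check on $J=P_1\cup\dots\cup P_t$. The transitive closure is essential and is absent from your sketch, which works only with $\max_\succ$ and a one-step look-ahead: in Example~\ref{example:ccat-fail}, whether a fact enters a negative stratum is decided by $\succs$-comparability ($f_{11}\not\succ f_{31}$ but $f_{11}\succs f_{31}$ in the followers example), and divergence between executions can be created or destroyed by arbitrarily long priority chains, which a bounded local look-ahead cannot anticipate. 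Your completeness idea (exhibit two executions diverging on a witnessing hyperedge) is in the right spirit --- the paper's Lemmas~\ref{lemma:ccat-crepair-no-f1} and~\ref{lemma:ccat-crepair-f1} do produce one c-repair containing a distinguished fact $f_1$ and one excluding it --- but the paper avoids your ``lock-step simulation'' difficulty entirely: the two repairs are distinguished by membership of $f_1$ itself, so no argument about the symmetric difference surviving later re-convergence is needed. What is needed instead, and what your sketch lacks, is the scheduling lemma (Lemma~\ref{lemma:ccat-findcrepair-from-consistent-j}) showing that $\algname{FindCRep}$ can be driven to first collect all of $P_1\cup\dots\cup P_k$ whenever that union is consistent.
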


In the remainder of this section we establish
Theorem~\ref{thm:c-categoricity-ptime} by presenting a polynomial-time
algorithm for solving c-categoricity.  The algorithm is very simple,
but its proof of correctness (given in the appendix) is intricate.

\subsection{Notation}
To present our algorithm, some notation is required.  Let
$(I,\graph,\succ)$ be an inconsistent prioritizing instance.  The
\e{transitive closure} of $\succ$, denoted $\succs$, is the priority
relation over the facts of $I$ where for every two facts $f$ and $g$ in
$I$ it holds that $f\succs g$ if and only if there exists a sequence
$f_0,\dots,f_m$ of facts, where $m>0$, such that $f=f_0$, $f_m=g$, and
$f_i\succ f_{i+1}$ for all $i=0,\dots,m-1$. Obviously, $\succs$ is
acyclic (since $\succ$ is acyclic). But unlike $\succ$, the relation
$\succs$ may compare between facts that are not necessarily neighbors
in $\graph$.

Let $(I,\graph,\succ)$ be an inconsistent prioritizing instance, let
$K$ be a set of facts of $I$, and let $f$ be a fact of $I$. By
$K\succs f$ we denote the fact that $g\succs f$ for \e{every} fact
$g\in K$.

\subsection{Algorithm}
\begin{algseries}{t}{\label{alg:ccat-alg}Algorithm for c-categoricity}
	\begin{insidealg}{CCategoricity}{$I,\graph,\succ$}
		\STATE $i:=0$
		\STATE $J:=\emptyset$
		\WHILE{$I \neq\emptyset$}
		\STATE $i:=i+1$
		\STATE $P_i:= \max_{\succssubs}(I)$
		\STATE $J:= J\cup P_i$
		\STATE $N_i:=\{f\in I\mid  \mbox{$\graph$ has a hyperedge $e$ s.t.~$f\in e$,}$\\ $\mbox{$\left(e\setminus\set{f}\right)\subseteq
			J$, and $(e \setminus \set{f})\succs f$}\}$
		\STATE $I:= I\setminus(P_i\cup N_i)$
		\ENDWHILE
		\STATE \textbf{return} true iff $J$ is consistent
	\end{insidealg}
\end{algseries}

Figure~\ref{alg:ccat-alg} depicts a polynomial-time algorithm for
solving c-categoricity. We next explain how it works, and later
discuss its correctness.

As required, the input for the algorithm is an inconsistent
prioritizing instance $(I,\graph,\succ)$. (The signature $\relset$ is
not needed by the algorithm.)  The algorithm incrementally constructs
a subinstance $J$ of $I$, starting with an empty $J$. Later we will
prove that there is a single c-repair if and only if $J$ is
consistent; and in that case, $J$ is the single c-repair.  The loop in
the algorithm constructs fact sets $P_1,\dots,P_t$ and
$N_1,\dots,N_t$. Each $P_i$ is called a \e{positive stratum} and each
$N_i$ is called a \e{negative stratum}.  Both $P_i$ and $N_i$ are
constructed in the $i$th iteration. On that iteration we add all the
facts of $P_i$ to $J$ and remove from $I$ all the facts of $P_i$ and
all the facts of $N_i$. The sets $P_i$ and $N_i$ are defined as
follows.
\begin{citemize}
	\item $P_i$ consists of the maximal facts in the current $I$,
	according to $\succs$.
	\item $N_i$ consists of all the facts $f$ that, together with
	$P_1 \cup\dots\cup P_i$, complete a hyperedge of preferred facts;
	that is, $\graph$ contains a hyperedge that contains $f$, is
	contained in $P_1\cup\dots\cup P_i\cup \set{f}$, and satisfies
	$g\succs f$ for every incident $g\neq f$.
\end{citemize}
The algorithm continues to iterate until $I$ gets empty.  As said
above, in the end the algorithm returns true if $J$ is consistent, and
otherwise false. Next, we give some examples of executions of the
algorithm.

\begin{example}\label{example:ccat-ceo}
	Consider the inconsistent prioritizing instance $(I,\graph,\succ)$
	from our company-CEO running example, illustrated on the left side
	of Figure~\ref{fig:ceo-completions}. The algorithm makes a single
	iteration on this instance, where $P_1=\set{\fgpi,\fapa}$ and
	$N_1=\set{\fgpa,\fapi,\fgpa}$. Both $\fgpi$ and $\fapa$ are in $P_1$
	since both are maximal. Also, each of $\fgpa$, $\fapi$ and $\fgpa$
	is in conflict with $P_1$, and we have $\fgpi\succ\fgpa$,
	$\fapa\succ\fapi$, and $\fgpi\succs\fgbr$.
\end{example}

\begin{example}
	Now consider the inconsistent prioritizing instance
	$(I,\graph,\succ)$ from our followers running
	example. Figure~\ref{fig:follows-exec} illustrates the execution of
	the algorithm, where each column describes $P_i$ or $N_i$, from left
	to right in the order of their construction. For convenience, the
	priority relation $\succ$, as defined in
	Example~\ref{example:followers-priority}, is depicted in
	Figure~\ref{fig:follows-exec} using corresponding edges between the
	facts.
	
	On iteration 1, for instance, we have $P_1=\set{f_{11},f_{34}}$,
	since $f_{11}$ and $f_{34}$ are the facts without incoming edges on
	Figure~\ref{fig:follows-exec}. Moreover, we have
	$N_1=\set{f_{12},f_{21},f_{31}}$.  The reason why $N_1$ contains
	$f_{12}$, for example, is that $\set{f_{11},f_{12}}$ is a
	hyperedge, the fact $f_{11}$ is in $P_1$, and $f_{11}\succ f_{12}$
	(hence, $f_{11}\succs f_{12}$). For a similar reason $N_1$ contains
	$f_{21}$. Fact $f_{31}$ is in $N_1$ as $\set{f_{11},f_{31}}$
	is a hyperedge, and though $f_{11}\not\succ f_{31}$, we have
	$f_{11}\succs f_{31}$. As another example, $N_3$ contains $f_{24}$
	since $\graph$ has the hyperedge $\set{f_{22},f_{23},f_{24}}$,
	the set $\set{f_{22},f_{23}}$ is contained in $P_1\cup P_2\cup P_3$,
	and $\set{f_{22},f_{23}}\succs f_{24}$.
	
	In the end, $J=\set{f_{11},f_{22},f_{23},f_{32},f_{34},f_{35}}$,
	which is also the subinstance $J_1$ of
	Example~\ref{example:follows-repair}. Since $J$ is consistent, the
	algorithm will determine that there is a single c-repair, and that
	c-repair is $J$.
\end{example}

	\begin{figure}[t]
		\centering
		\input{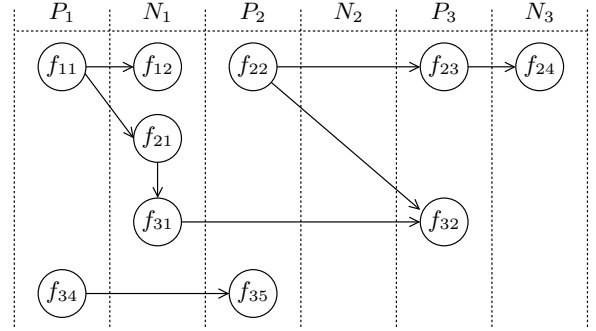}
		\caption{\label{fig:follows-exec}Execution of \algname{CCategoricity} on the
	followers example}
	\end{figure}

\begin{example}\label{example:ccat-fail}
	We now give an example of an execution on a negative instance of
	c-categoricity. (In Section~\ref{sec:g} we refer to this example for
	a different reason.)  Figure~\ref{fig:transitive-example} shows an
	instance $I$ over the schema $\scs^1$, which is defined in
	Section~\ref{sec:spec-schemas}.  Recall that in this schema every
	two attributes form a key. Each fact $R^1(a_1,a_2,a_3)$ in $I$ is
	depicted by a tuple that consists of the three values. For example,
	$I$ contains the (conflicting) facts $R^1(\val{A},\val{a},\val{1})$
	and $R^1(\val{A},\val{a},\val{2})$. Hereon, we write $\val{Xyi}$
	instead of $R^1(\val{X},\val{y},\val{i})$.  The priority relation
	$\succ$ is given by the directed edges between the facts; for
	example, $\val{Aa1}\succ\val{Aa2}$. Undirected edges are between
	conflicting facts that are incomparable by $\succ$ (e.g.,
	$\val{Ab2}$ and $\val{Ab3}$).  
	
	The execution of the algorithm on $(I,\graph_{\scs_1}^I,\succ)$ is
	as follows.  On the first iteration,
	$P_1=\set{\val{Aa1},\val{Ab2},\val{Ba3},\val{Bb1}}$ and
	$N_1=\set{\val{Aa2},\val{Bb3}}$.  In particular, note that $N_1$
	does not contain $\val{Ba2}$ since it conflicts only with
	$\val{Ba3}$ in $P_1$, but the two are incomparable.  Similarly,
	$N_1$ does not contain $\val{Ab3}$ since it is incomparable with
	$\val{Ab2}$. Consequently, in the second iteration we have
	$P_2=\set{\val{Ba2},\val{Ab3}}$ and $N_2=\emptyset$.  In the end,
	$J=P_1\cup P_2$ is inconsistent, and therefore, the algorithm will
	return false.  Indeed, the reader can easily verify that each of the
	following is a c-repair:
	$\set{\val{Aa1},\val{Ab2},\val{Ba3},\val{Bb1}}$,
	$\set{\val{Aa1},\val{Ab2},\val{Ba2},\val{Bb1}}$, and
	$\set{\val{Aa1},\val{Ba3},\val{Ab3},\val{Bb1}}$.
\end{example}

	\begin{figure}[t]
		\centering
		\input{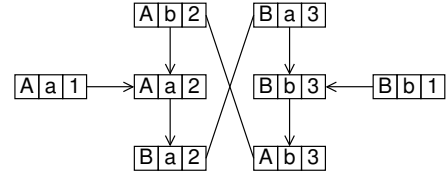}
		\caption{\label{fig:transitive-example}An inconsistent instance $I$ over
	$\scs^1$ with a priority relation $\succ$ over $I$}
	\end{figure}

\subsection{Correctness}
Correctness of $\algname{CCategoricity}$ is stated in the following
theorem.

\def\theoremccategoricitycorrect{
	Let $(I,\graph,\succ)$ be an inconsistent prioritizing instance, and
	let $J$ be the subinstance of $I$ constructed in the execution of
	$\algname{CCategoricity}(I,\graph,\succ)$. Then the following are
	equivalent.
	\begin{cenumerate}
		\item $J$ is consistent.
		\item There is a single c-repair.
	\end{cenumerate}
	Moreover, if $J$ is consistent then $J$ is the single c-repair.}

\begin{theorem}\label{thm:ccategoricity-correct}
	\theoremccategoricitycorrect
\end{theorem}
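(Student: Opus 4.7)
The plan is to prove the equivalence $(1)\Leftrightarrow(2)$, absorbing the ``moreover'' clause into the forward direction. The main tool is Theorem~\ref{thm:cgreedy}, which characterizes c-repairs as the outputs of the non-deterministic algorithm $\algname{FindCRep}$.

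For the forward direction, I first exhibit an execution of $\algname{FindCRep}$ producing $J$, which shows $J$ is a c-repair. The execution processes facts in order of the iterations $i=1,\dots,t$ of $\algname{CCategoricity}$: within iteration $i$ it handles the facts of $P_i$ first in some linear extension of $\succs$, and then the facts of $N_i$ in a $\succs$-respecting order. The three things to verify are (a) that each chosen fact is $\succ$-maximal in the remaining set, which for $f\in P_i$ follows because the remaining facts all lie in $I_i$ and $f\in\max_{\succs}(I_i)$; (b) that each $f\in P_i$ is added to the greedy accumulator, since this accumulator is always a subset of $J$ and $J$ is consistent; and (c) that each $f\in N_i$ is rejected, directly from the defining hyperedge condition of $N_i$.

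The intricate step is uniqueness. For an arbitrary c-repair $R$ I prove by induction on $i$ that $P_1\cup\cdots\cup P_i\subseteq R$ and $R\cap(N_1\cup\cdots\cup N_i)=\emptyset$; at $i=t$ this forces $R=J$. For the inclusion $P_i\subseteq R$, suppose $f\in P_i\setminus R$. Since $R$ is a maximal consistent subinstance, some hyperedge $e$ satisfies $f\in e$ and $e\setminus\{f\}\subseteq R$. The consistency of $J$ combined with the inductive hypothesis lets me locate some $g\in e\setminus\{f\}$ inside $I_i$; then $g\not\succs f$ by the $\succs$-maximality of $f$ in $I_i$. Using that $R$ is globally optimal under some completion $\succ_c$ of $\succ$, I then construct a global improvement of $R$ by adding $f$ and removing the conflict witnesses it introduces, where $\succs$-maximality of $f$ together with the induction prevents any removed fact from $\succ_c$-dominating $f$, contradicting optimality. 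The companion step $R\cap N_i=\emptyset$ is symmetric: if $f\in N_i\cap R$, then the hyperedge witnessing $f\in N_i$ already lies in $R$ by induction, so removing $f$ yields a global improvement under every completion.

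For the backward direction I prove the contrapositive: if $J$ is inconsistent, then there are at least two c-repairs. Pick the smallest $i$ such that $P_1\cup\cdots\cup P_i$ contains a hyperedge $e$; two facts of $e$ arising in (possibly different) $P_j$'s are $\succs$-incomparable at the time they would enter $J$, and each can be taken as the ``surviving'' fact in a distinct execution of $\algname{FindCRep}$ that agrees on earlier selections, producing two different c-repairs by Theorem~\ref{thm:cgreedy}. The hardest part of the entire argument is expected to be the global-improvement construction in the uniqueness step: it must be valid for every completion $\succ_c$ under which $R$ could be globally optimal, even when $e$ straddles several earlier strata, and this is precisely where the use of $\succs$ rather than $\succ$ in the definition of $N_i$ (together with the acyclicity of $\succs$) pays off.
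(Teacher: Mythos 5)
Your high-level architecture (everything routed through Theorem~\ref{thm:cgreedy}; forward direction by exhibiting an execution of $\algname{FindCRep}$ producing $J$ plus a uniqueness argument; backward direction by producing two distinct executions) matches the paper's, but two of your key steps do not go through as stated. The most serious gap is in the uniqueness induction. You want to show that a c-repair $R$ omitting some $f\in P_i$ has a global improvement \emph{under the particular completion $\succ_c$ that witnesses $R$ being a c-repair}, obtained by inserting $f$ and deleting one fact from each hyperedge $e'$ with $f\in e'$ and $e'\setminus\set{f}\subseteq R$. Since $f$ is the only inserted fact, every deleted fact $g$ must satisfy $f\succ_c g$. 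Your justification---that $\succs$-maximality of $f$ in $I_i$ prevents any deleted fact from $\succ_c$-dominating $f$---conflates $\succ$ with $\succ_c$: a neighbor $g$ of $f$ that is $\succ$-incomparable with $f$ may perfectly well satisfy $g\succ_c f$ in the witnessing completion, and then $g$ is not available for deletion. Nothing you have established excludes a hyperedge through $f$ all of whose other members sit $\succ_c$-above $f$, in which case no admissible deletion breaks that hyperedge and the construction collapses. The completion is existentially quantified and effectively adversarial; this is precisely why the paper never argues about completions directly, but instead takes an execution of $\algname{FindCRep}$ producing $R$ and derives the contradiction from the first negative-stratum fact that execution accepts (Lemma~\ref{lemma:ccat-no-repair-Ni}).

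The second gap is the execution you exhibit for $J$. Processing all of $P_i$ and then all of $N_i$ before moving to iteration $i+1$ is infeasible in general, because a fact of $N_i$ can have a $\succ$-predecessor that survives into $I_{i+1}$: with facts $a,b,c,d,e$, hyperedges $\set{a,c},\set{e,d},\set{d,b},\set{b,c}$, and priorities $a\succ c$, $e\succ d$, $d\succ b$, $b\succ c$, one gets $P_1=\set{a,e}$, $N_1=\set{c,d}$, $P_2=\set{b}$, and $c\in N_1$ is not $\succ$-maximal until $b\in P_2$ has been selected, so $c$ cannot be chosen in line~3 of $\algname{FindCRep}$ during your ``iteration $1$.'' Your check (a) covers only the facts of $P_i$ and is silent on why the $N_i$ facts are selectable when you need them; the paper's Lemma~\ref{lemma:ccat-findcrepair-from-consistent-j} has to interleave chains of earlier negative-stratum facts in topological order to make the simulation legal. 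Finally, in the backward direction the two ``surviving'' facts of the offending hyperedge need not be $\succs$-incomparable: when the hyperedge meets $P_i$ in a single fact $f_1$ and the rest lies in strictly earlier positive strata, a member of an earlier stratum may $\succ$-dominate $f_1$, and arranging an execution in which $f_1$ nonetheless survives is the hardest step of the paper's proof (Lemma~\ref{lemma:ccat-crepair-f1}); your sketch also omits the reduction to the case of a single $P_i$-fact per hyperedge (Lemma~\ref{lemma:ccat-next-p-no-conflicts}), which itself requires the minimal-intersection argument.
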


Theorem~\ref{thm:ccategoricity-correct}, combined with the observation
that the algorithm $\algname{CCategoricity}$ terminates in polynomial
time, imply Theorem~\ref{thm:c-categoricity-ptime}. As previously
said, the proof of Theorem~\ref{thm:ccategoricity-correct} is quite
involved. The direction $1\rightarrow 2$ is that of \e{soundness}---if
the algorithm returns true then there is precisely one c-repair. The
direction $2\rightarrow 1$ is that of \e{completeness}---if there is
precisely one c-repair then the algorithm returns true.

Soundness is the easier direction to prove. We assume, by way of
contradiction, that there is a c-repair $J'$ different from the
subinstance $J$ returned by the algorithm. Such $J'$ must include a
fact $f'$ from some negative stratum. We consider an execution of the
algorithm $\algname{FindCRep}$ that returns $J'$, and establish a
contradiction by considering the first time such an $f'$ is being
added to the constructed solution.

Proving completeness is more involved. We assume, by way of
contradiction, that the constructed $J$ is inconsistent. We are
looking at the first positive stratum $P_i$ such that
$P_1\cup\dots\cup P_i$ contains a hyperedge. Then, the crux of the
proof is in showing that we can then construct two c-repairs using the
algorithm $\algname{FindCRep}$: one contains some fact from $P_i$ and
another one does not contain that fact. We then establish that there
are at least two c-repairs, hence a contradiction.
\newcommand{\gc}{g-categoricity$\angs{\scs}$\xspace}

\section{g-Categoricity}\label{sec:g}
In this section, we investigate the complexity of the g-categoricity.
We first show a tractability result for the case of a schema with a
single FD. Then, we show $\piptwo$-completeness for a specific schema.
Finally, we discuss the implication of assuming \e{transitivity} in
the priority relation, and show a general positive result therein.

\subsection{Tractable Schemas}
Recall from Theorem~\ref{thm:pareto} that, assuming
$\mbox{P}\neq\mbox{NP}$, the problem p-categoricity$\angs{\scs}$ is
solvable in polynomial time if and only if $\scs$ consists (up to
equivalence) of a single FD per relation. The reader can verify that
the same proof works for g-categoricity$\angs{\scs}$. Hence, our first
result is that the tractable schemas of p-categoricity remain
tractable for g-categoricity.

\def\thmglobalonefd
{  Let $\scs=(\relset,\Delta)$ be a schema. The problem \gc can be
	solved in polynomial time if $\Delta|_R$ is equivalent to a single
	{FD} for every $R\in \relset$.
}
\begin{theorem}\label{thm:global-1fd}
	\thmglobalonefd
\end{theorem}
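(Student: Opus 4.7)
The plan is to mirror the tractability argument for p-categoricity given in Section~\ref{pareto: tractability}, and verify that its key decomposition lemma still works for g-repairs. By Proposition~\ref{prop:single-relation}, it suffices to handle a single-relation schema, so assume $\relset=\set{R}$ and $\depset=\set{X\rightarrow Y}$. We reuse the block/subblock notation $I_{\tup a}$ and $I_{\tup a,\tup b}$. Since the only FD is $X\rightarrow Y$, any two conflicting facts lie in the same block, so the conflict graph $\graph_{\scs}^I$ is a disjoint union, over blocks $I_{\tup a}$, of its restrictions to $I_{\tup a}$. Moreover, the admissibility requirement on $\succ$ (related facts must be neighbors) forces the priority relation to live entirely inside single blocks. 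Within a block, every consistent subset lies in some subblock, and subblocks are exactly the repairs of the block.

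The main step is to establish the g-repair analog of Lemma~\ref{lemma:key-for-pareto-1fd}: a subinstance $J\subseteq I$ is a g-repair of $I$ if and only if $J\cap I_{\tup a}$ is a g-repair of $I_{\tup a}$ (hence a subblock) for every block $I_{\tup a}$. For the forward direction, suppose $J'_{\tup a}$ were a global improvement of $J\cap I_{\tup a}$ inside some block; then $J'=(J\setminus I_{\tup a})\cup J'_{\tup a}$ satisfies $J'\setminus J=J'_{\tup a}\setminus J$ and $J\setminus J'=(J\cap I_{\tup a})\setminus J'_{\tup a}$, so the block-level improvement lifts verbatim to a global improvement of $J$, contradicting the assumption that $J$ is a g-repair. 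For the backward direction, suppose $J'$ is a global improvement of $J$ and look at any block $I_{\tup a}$. Because $\succ$ lives within blocks, a fact $g\in J'\setminus J$ witnessing a defeat of $f\in J\setminus J'$ must satisfy $g\in I_{\tup a}$ whenever $f\in I_{\tup a}$; hence $J'\cap I_{\tup a}$ is a global improvement of $J\cap I_{\tup a}$ in some block, contradicting maximality/optimality there.

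Once the decomposition lemma is in place, the g-repair analog of Lemma~\ref{lemma:single-factorized} is immediate: there is a unique g-repair of $I$ if and only if every block $I_{\tup a}$ admits a unique subblock $I_{\tup a,\tup b}$ that is a g-repair of $I_{\tup a}$. This yields a direct polynomial-time algorithm: enumerate all blocks $I_{\tup a}$; for each, enumerate the (polynomially many) distinct subblocks $I_{\tup a,\tup b}$; for each candidate subblock use polynomial-time g-repair checking (available under $\scs$ by Theorem~\ref{thm:repairchecking}, since $\depset|_R$ is equivalent to a single FD) to test whether it is a g-repair of $I_{\tup a}$; return \textbf{true} if and only if exactly one such subblock is a g-repair in every block.

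The only mildly delicate step is the backward direction of the decomposition lemma, where one must argue that a putative global improvement $J'$ of $J$ restricts soundly to block-level improvements despite $J'$ not being a priori aligned with the block structure of $J$. The remaining steps are routine adaptations of the p-categoricity tractability proof and run in time polynomial in $|I|$.
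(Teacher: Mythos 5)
Your proposal is correct and follows essentially the same route as the paper, which states that the p-categoricity tractability proof (the block/subblock decomposition of Lemmas~\ref{lemma:key-for-pareto-1fd} and~\ref{lemma:single-factorized}) carries over verbatim to g-repairs; you supply exactly that verification, including the two points that make it go through—that a single FD confines both conflicts and the priority relation to individual blocks, and that g-repair checking is polynomial for single-FD schemas by Theorem~\ref{thm:repairchecking}.
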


It is left open whether there is any schema $\scs$ that is not as in
Theorem~\ref{thm:global-1fd} where \gc is solvable in polynomial time.
In the next section we give an insight into this open problem (Theorem~\ref{thm:gcat-s0-hypothetical}).

\subsection{Intractable Schemas}

Our next result shows that \gc hits a harder complexity class than
\pc. In particular, while \pc is always in coNP (due to
Theorem~\ref{cor:upperbounds}), we will show a schema $\scs$ where \gc
is $\piptwo$-complete. This schema is the schema $\scs^6$ from
Section~\ref{sec:spec-schemas}.
\def\thmpiptwo{g-categoricity$\angs{\scs^6}$ is $\piptwo$-complete.}
\begin{theorem}\label{thm:piptwo}
	\thmpiptwo
\end{theorem}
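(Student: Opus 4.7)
\medskip

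\noindent\textbf{Plan.} The upper bound is immediate: $\scs^6$ is a fixed schema, so Corollary~\ref{cor:upperbounds} already places g-categoricity$\angs{\scs^6}$ in $\piptwo$. The work lies entirely on the lower bound, which I would prove by reducing from a canonical $\sigptwo$-hard problem (namely $\exists\vec{x}\,\forall\vec{y}\,\psi(\vec{x},\vec{y})$ with $\psi$ in 3-CNF) to the complement of g-categoricity, i.e., to the existence of two distinct g-repairs.

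Given such a formula with universal variables $\vec{x}=x_1,\dots,x_n$, existential variables $\vec{y}=y_1,\dots,y_m$, and clauses $C_1,\dots,C_k$, I would construct an inconsistent prioritizing instance $D=(I,\graph_{\scs^6}^I,\succ)$ so that $D$ has $\geq 2$ g-repairs iff $\exists\vec{x}\,\forall\vec{y}\,\psi$. The construction exploits the two FDs of $\scs^6$ as complementary mechanisms: since $\emptyset\rightarrow A$ forces every repair to commit to a single $A$-value, I would use a ``mode'' value $a_0$ for a fixed canonical subinstance and a second value $a_1$ for an ``adversarial'' subinstance whose g-repairs encode candidate $\vec{x}$-assignments. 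Within the $a_1$-mode, the FD $B\rightarrow C$ acts as a per-variable selector: one $B$-value per $x_i$ (so the chosen $C$ fixes an assignment $\alpha(x_i)$) and one $B$-value per clause $C_j$ (so the chosen $C$ names a satisfied literal of $C_j$). The intended bijection is then between $a_1$-mode repairs and pairs $(\alpha,\text{satisfying witness})$, with a canonical $a_0$-mode repair always present.

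The priority relation $\succ$ would be designed so that: (i) inside the $a_1$-mode, any candidate repair encoding $(\alpha,\vec{y})$ admits a \emph{global} improvement (still in the $a_1$-mode) iff some other $\vec{y}'$ satisfies $\psi(\alpha,\vec{y}')$ and its supporting literal-facts collectively dominate the deleted ones; (ii) a suitable ``shield'' of pairwise priorities between $a_0$- and $a_1$-facts prevents cross-mode global improvements, so the $a_0$-mode repair is always globally optimal, and an $a_1$-mode candidate is a g-repair whenever it cannot be improved within its mode; and (iii) $\succ$ is acyclic and every $f\succ g$ is a conflict edge, as required. Under this design, a new g-repair beyond the canonical one exists precisely when some $\alpha$ has no satisfying $\vec{y}$, which is exactly $\exists\vec{x}\,\forall\vec{y}\,\psi(\vec{x},\vec{y})$.

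\smallskip

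\noindent\textbf{Main obstacle.} The delicate part is engineering the priority so that the \emph{global} improvement relation inside the $a_1$-mode checks clause-by-clause domination and composes correctly into a full $\vec{y}$-witness: since global improvement allows swapping arbitrarily many facts at once, each deleted fact in the candidate must be dominated by some fact entering, and I must ensure this happens simultaneously across all clauses iff a satisfying assignment exists. I expect this to force a non-transitive $\succ$ (matching the remark in the introduction that the $\piptwo$-hardness reduction relies critically on non-transitivity): transitive chains would let partial witnesses compose into full ones, collapsing the $\forall\vec{y}$ quantifier and ruining the reduction. The remaining verification, that no spurious g-repair can arise from mixed or partial swaps, would then be handled by a case analysis on where a hypothetical global improvement $J'$ of a candidate $J$ deletes and inserts facts, using the conflict structure of $\scs^6$ to rule out each case not accounted for by the intended $\vec{x}$/$\vec{y}$ encoding.
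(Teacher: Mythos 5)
Your high-level strategy---use $\emptyset\rightarrow A$ to partition repairs into ``modes,'' use $B\rightarrow C$ as a per-variable selector, and make the global-improvement relation simulate the inner quantifier of a quantified Boolean formula---is indeed the skeleton of the paper's reduction. But two concrete problems prevent the plan from going through as stated. First, your source problem is wrong: $\exists\tup{x}\,\forall\tup{y}\,\psi(\tup{x},\tup{y})$ with $\psi$ in 3-CNF is \emph{not} $\sigptwo$-hard; since $\forall\tup{y}$ of a CNF decomposes clause-by-clause and each clause is universally true iff an $\tup{x}$-literal is already satisfied or the clause contains complementary $\tup{y}$-literals, the inner check is polynomial and the whole problem is in NP. The paper instead reduces $\qsattwo$ (i.e., $\forall\tup{x}\,\exists\tup{y}\,\psi$ with $\psi$ in CNF, which is $\piptwo$-complete) directly to g-categoricity$\angs{\scs^6}$; equivalently you would need to start from $\exists\tup{x}\,\forall\tup{y}\,\neg\psi$. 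Your closing sentence also conflates ``some $\alpha$ has no satisfying $\tup{y}$'' with ``$\exists\tup{x}\,\forall\tup{y}\,\psi$''---these are not the same statement, and the sign error propagates into which direction of the reduction you are proving.

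Second, your architecture is one mode short, and this is where the real difficulty lies. The paper uses \emph{three} $A$-values: mode $0$ carries the adversarial repairs encoding candidate $\tup{x}$-assignments (together with one fact per clause), mode $1$ carries the would-be global improvements encoding full $(\tup{x},\tup{y})$-assignments (with $B\rightarrow C$ enforcing that the witness is a genuine assignment, and priorities $R^6(1,w,b)\succ R^6(0,c_j,c_j)$ implementing clause satisfaction), and a third value gives a single fact $f_0$ that dominates every mode-$1$ fact, guaranteeing that mode-$1$ instances are never g-repairs and that $\set{f_0}$ is the canonical c-repair. Your plan keeps the candidates and their improvements in the same mode $a_1$, so the improving instances are themselves candidates for g-repairhood; e.g., if $\alpha$ has a unique satisfying extension $\tup{y}'$, the $a_1$-instance encoding $(\alpha,\tup{y}')$ has nothing to improve it and becomes a spurious second g-repair even when $\forall\tup{x}\,\exists\tup{y}\,\psi$ holds. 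Relatedly, encoding the witness as ``one $B$-value per clause naming a satisfied literal'' does not force the per-clause witnesses to agree on shared variables, so the FDs alone do not guarantee that a collection of clause witnesses assembles into a single assignment to $\tup{y}$; the paper avoids this by encoding the witness as a full variable assignment in mode $1$ and letting $B\rightarrow C$ enforce its consistency. Your instinct that the reduction must produce a non-transitive priority relation is correct and matches the paper's remark, but the construction needs the extra mode and the assignment-level (rather than clause-level) witness encoding to make the ``iff'' sound.
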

The proof of Theorem~\ref{thm:piptwo} is by a reduction from the
$\piptwo$-complete problem $\qsattwo$: Given a CNF formula $\psi(\tup
x,\tup y)$, determine whether it is the case that for every truth
assignment to $\tup x$ there exists a truth assignment to $\tup y$
such that the two assignments satisfy $\psi$.

We can generalize Theorem~\ref{thm:piptwo} to a broad set of schemas,
by using fact-wise reductions from $\scs^6$. This is done in the
following theorem.

\eat{ Applying a fact-wise reduction from $\scs^6$, we can extend our
	result to a more general set of schemas. That is, we show that \gc
	is $\piptwo$-complete for other schemas rather than for $\scs^6$
	itself.  For instance, let $\scs=(\relset,\depset)$ where $\relset$
	consists of a single relation $R/4$ and
	$\depset = \{A\rightarrow B, C\rightarrow D\}$. The
	g-categoricity$\angs{\scs}$ is $\piptwo$-complete.  }

\def\thmpiptwoextend{ Let $\scs = (\relset, \depset)$ be a schema such
	that $\relset$ consists of a single relation symbol $R$ and
	$\depset$ consists of two nontrivial FDs $X\rightarrow Y$ and
	$W\rightarrow Z$.  Suppose that each of $W$ and $Z$ contains an
	attribute that is in none of the other three sets. Then \gc is
	$\piptwo$-complete.}

\begin{theorem}\label{thm:piptwoextend}
	\thmpiptwoextend
\end{theorem}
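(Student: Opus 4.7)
My plan is to show $\piptwo$-membership via Corollary~\ref{cor:upperbounds} (applicable to any schema of FDs), and to prove $\piptwo$-hardness by a fact-wise reduction from $\scs^6$ to $\scs$, combined with Theorem~\ref{thm:piptwo}. The first preparatory step is to lift Corollary~\ref{cor:fact-wise} from p-categoricity to g-categoricity. This is routine: a fact-wise reduction $\Pi$, extended to priorities via $\Pi(f)\succ'\Pi(g)\iff f\succ g$, induces a bijection between consistent subinstances that preserves both inclusion and priority, so it carries global improvements across the bijection and therefore bijects g-repairs of $D$ with g-repairs of $\Pi(D)$.

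Next, I would construct the reduction $\Pi$ from $\scs^6$ to $\scs$. Let $w\in W$ be the attribute exclusive to $W$ (so $w\notin X\cup Y\cup Z$) and $z\in Z$ the one exclusive to $Z$ (so $z\notin X\cup Y\cup W$), both guaranteed by the hypothesis. Assume WLOG that $X\cap Y=\emptyset$ and $W\cap Z=\emptyset$, by rewriting each FD in the equivalent form $X\rightarrow Y\setminus X$ and $W\rightarrow Z\setminus W$. Given a fact $R^6(a,b,c)$, define $\Pi(R^6(a,b,c))=R(t)$ by setting $t[z]:=a$, $t[w]:=\bot$ for a fixed constant $\bot$, $t[i]:=b$ for each $i\in X\setminus(W\cup Z)$, $t[i]:=c$ for each $i\in Y\setminus(W\cup Z)$, and $t[i]:=\bot$ at every remaining coordinate. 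The role of $w$ is to pin down the $W$-value uniformly across all mapped facts, mimicking the empty LHS of $\emptyset\rightarrow A$; the role of $z$ is to carry the $A$-value so that $Z$-equality of mapped facts corresponds exactly to $A$-equality of the originals. Injectivity and polynomial-time computability are immediate.

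The heart of the argument is verifying, pair by pair, that $\Pi(f)$ and $\Pi(g)$ violate $X\rightarrow Y$ iff $f,g$ violate $B\rightarrow C$ in $\scs^6$, and that they violate $W\rightarrow Z$ iff $f,g$ violate $\emptyset\rightarrow A$. For $W\rightarrow Z$: every coordinate of $W$ is the constant $\bot$ by construction (including those in $W\cap X$ and $W\cap Y$), so $W$-equality always holds; and $Z\setminus\{z\}$ is likewise constant (using $W\cap Z=\emptyset$ together with the assignment rules), so $Z$-equality holds iff $a_f=a_g$. For $X\rightarrow Y$: the attributes of $X\setminus(W\cup Z)$ carry $b$ while every other $X$-coordinate is constant, so $X$-equality holds iff $b_f=b_g$; analogously $Y$-equality iff $c_f=c_g$. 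The main obstacle is the potentially degenerate case $X\subseteq W\cup Z$ (or $Y\subseteq W\cup Z$), in which $X$-equality or $Y$-equality becomes trivial and breaks the FD correspondence; I expect to rule this case out by exploiting the nontriviality of both FDs together with the exclusive-attribute hypothesis on $W$ and $Z$, possibly through a refined case analysis that uses the ``freedom'' provided by the exclusive attributes $w$ and $z$ to reconstruct a faithful encoding. With this verified, the priority-preserving extension of $\Pi$ furnishes the required polynomial-time reduction from g-categoricity$\angs{\scs^6}$ to g-categoricity$\angs{\scs}$, completing the $\piptwo$-hardness.
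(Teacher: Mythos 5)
There is a genuine gap, and it is exactly the one you flag at the end: the degenerate case $X\subseteq W\cup Z$ (or $Y\subseteq W\cup Z$) cannot be ruled out, because the theorem's hypotheses give \emph{no} private attributes to $X$ or $Y$ --- only to $W$ and $Z$. Your construction assigns the two roles backwards. You use $W\rightarrow Z$ to simulate $\emptyset\rightarrow A$ (all of $W$ constant, $a$ carried on the private attribute $z$) and $X\rightarrow Y$ to simulate $B\rightarrow C$, which forces you to find carriers for the two independent values $b$ and $c$ inside $X\setminus(W\cup Z)$ and $Y\setminus(W\cup Z)$. Nothing guarantees these sets are nonempty. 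Concretely, take $\scs^6$ itself as the target ($X=\emptyset$, $Y=\set{A}$, $W=\set{B}$, $Z=\set{C}$): then $X\setminus(W\cup Z)=\emptyset$, the value $b$ is recorded nowhere (so injectivity already fails), and two facts $R^6(a,b,c)$, $R^6(a,b',c')$ with $b\neq b'$, $c\neq c'$ --- a consistent pair --- map to facts that violate $\emptyset\rightarrow A$. The same failure occurs for the schema $\set{A\rightarrow C,\ AB\rightarrow CD}$, which the paper explicitly lists as covered by the theorem ($X=\set{A}\subseteq W\cup Z$ and $Y=\set{C}\subseteq W\cup Z$ even after your WLOG normalization). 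So the obstacle is not a corner case to be argued away; it defeats the reduction on instances squarely within the theorem's scope.

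The fix is to swap the two roles, which is what the paper does: let $X\rightarrow Y$ simulate $\emptyset\rightarrow A$ by setting every attribute of $X$ to a fixed constant and placing $a$ on $Y\setminus X$ (nonempty by nontriviality of $X\rightarrow Y$, so no private attribute is needed there), and let $W\rightarrow Z$ simulate $B\rightarrow C$ by placing $b$ on $W\setminus(X\cup Y\cup Z)$ and $c$ on $Z\setminus(X\cup Y\cup W)$ --- precisely the two sets the hypothesis guarantees to be nonempty --- with a second fixed constant everywhere else. The surrounding scaffolding of your proposal ($\piptwo$-membership via Corollary~\ref{cor:upperbounds}, lifting Corollary~\ref{cor:fact-wise} from p-categoricity to g-categoricity, and composing with Theorem~\ref{thm:piptwo}) is correct and matches the paper; only the fact-wise reduction itself needs to be turned around.
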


As an example, recall that in $\scs^6$ we have
$\depset=\set{\emptyset\rightarrow A, B\rightarrow C}$.  This schema
is a special case of Theorem~\ref{thm:piptwoextend}, since we can use
$\emptyset\rightarrow A$ as $X\rightarrow Y$ and $B\rightarrow C$ as
$W\rightarrow Z$; and indeed, each of $W$ and $Z$ contains an
attribute (namely $B$ and $C$, respectively) that is not in any of the
other three sets. Additional examples of sets of FDs that satisfy the
conditions of Theorem~\ref{thm:piptwoextend} (and hence the
corresponding \gc is $\piptwo$-complete) follow. All of these sets are
over a relation symbol $R/4$. (And in each of these sets, the first FD
corresponds to $X\rightarrow Y$ and the second to $W\rightarrow Z$.)
\begin{compactitem}
	\item $A\rightarrow B$, $C\rightarrow D$
	\item $A\rightarrow C$, $AB\rightarrow CD$
	\item $A\rightarrow B$, $ABC\rightarrow D$
	\item $A\rightarrow B$, $C\rightarrow ABD$
\end{compactitem}

Unlike $\scs^6$, to this day we do not know what is the complexity of
g-categoricity$\angs{\scs^i}$ for any of the other $\scs^i$ (defined
in Section~\ref{sec:spec-schemas}). This includes $\scs^0$, for which
all we know is membership in coNP (as stated in
Theorem~\ref{cor:upperbounds}). However, except for this open problem,
the proof technique of Theorem~\ref{thm:pareto} is valid for
\gc. Consequently, we can show the following.

\def\thmgcatszerohypothetical{The following are equivalent.
	\begin{citemize}
		\item g-categoricity$\angs{\scs^0}$ is coNP-hard.
		\item g-categoricity$\angs{\scs}$ is coNP-hard for every schema $\scs$
		that falls outside the polynomial-time cases of
		Theorem~\ref{thm:global-1fd}.
	\end{citemize}}
	\begin{theorem}\label{thm:gcat-s0-hypothetical}
		\thmgcatszerohypothetical
	\end{theorem}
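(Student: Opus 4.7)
The backward direction is immediate: the schema $\scs^0$ itself lies outside the tractable cases of Theorem~\ref{thm:global-1fd}, since $\depset^0=\set{A\rightarrow B, B\rightarrow A}$ is not equivalent to any single FD. So the second statement trivially specializes to the first by taking $\scs=\scs^0$.

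For the forward direction, the plan is to replay the reduction strategy depicted in Figure~\ref{fig:pcategoricitystrategy}, but with g-categoricity in place of p-categoricity. The first step is to prove the g-analog of Corollary~\ref{cor:fact-wise}: if $\Pi$ is a fact-wise reduction from $\scs$ to $\scs'$, then there is a polynomial-time reduction from \gc to g-categoricity$\angs{\scs'}$. The point is that $\Pi$ induces, by injectivity, a bijection between subinstances of $I$ and subinstances of $\Pi(I)$ that preserves consistency (by item 2 of the definition of a fact-wise reduction) and preserves the priority relation (by the very definition of $\Pi(\succ)$). Since global improvement is defined purely in terms of set membership, symmetric differences, and $\succ$, this bijection restricts to a bijection between g-repairs, and hence g-categoricity is preserved.

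Once this step is in place, one combines it with the existing fact-wise reductions to cover all hard schemas. By Proposition~\ref{prop:single-relation} we may restrict to single-relation schemas $\scs$ whose FDs are not equivalent to a single FD. If $\depset$ is not equivalent to a pair of keys either, Theorem~\ref{lemma:fw-from-s1-6} yields a fact-wise reduction from some $\scs^i$ ($i\in\set{1,\dots,6}$) to $\scs$: for $i=6$ coNP-hardness of g-categoricity$\angs{\scs^6}$ is already available (Theorem~\ref{thm:piptwo} even gives $\piptwo$-completeness), while for $i\in\set{1,\dots,5}$ we first compose with the fact-wise reduction from $\scs^0$ to $\scs^i$ provided by Lemma~\ref{lemma:fw-from-s0-to-s1-5}, using the assumed coNP-hardness of g-categoricity$\angs{\scs^0}$. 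If instead $\depset$ is equivalent to a pair of keys but not to a single FD, Lemma~\ref{lemma:fw-from-s0} provides a direct fact-wise reduction from $\scs^0$ to $\scs$. In every case we obtain coNP-hardness of \gc.

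The main obstacle is the first step, namely verifying that the bijection induced by a fact-wise reduction really does commute with the notion of global improvement. The subtlety is that global improvement quantifies over the facts in the symmetric differences $J\setminus J'$ and $J'\setminus J$, so we must argue that $\Pi$ maps these symmetric differences exactly (using injectivity of $\Pi$) and that $f\succ g$ iff $\Pi(f)\succ'\Pi(g)$ inside them (by construction of $\Pi(\succ)$). These checks are routine but need to be written out carefully; once done, the rest of the argument is determined by the reduction diagram of Figure~\ref{fig:pcategoricitystrategy}, exactly as in the proof of Theorem~\ref{thm:pareto}, with the single change that the hardness base case for $\scs^0$ is assumed rather than proved.
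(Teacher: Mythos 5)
Your proposal is correct and follows exactly the route the paper intends: the paper does not write out a separate proof of this theorem but states that "the proof technique of Theorem~\ref{thm:pareto} is valid for g-categoricity," which is precisely your plan of re-proving Corollary~\ref{cor:fact-wise} for g-repairs (routine, since the bijection induced by an injective, consistency-preserving map commutes with global improvement and with $\Pi(\succ)$) and then reusing Proposition~\ref{prop:single-relation}, Theorem~\ref{lemma:fw-from-s1-6}, Lemmas~\ref{lemma:fw-from-s0} and~\ref{lemma:fw-from-s0-to-s1-5}, and the known hardness of g-categoricity$\angs{\scs^6}$ from Theorem~\ref{thm:piptwo}, with the $\scs^0$ base case taken as a hypothesis. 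The trivial backward direction (noting $\depset^0$ is not equivalent to a single FD) is also handled correctly.
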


	\subsection{Transitive Priority}
	Let $(I,\graph,\succ)$ be an inconsistent prioritizing instance. We
	say that $\succ$ is \e{transitive} if for every two facts $f$ and $g$
	in $I$, if $f$ and $g$ are neighbors in $\graph$ and $f\succs g$, then
	$f\succ g$. Transitivity is a natural assumption when $\succ$ is
	interpreted as a partial order such as ``is of better quality than''
	or ``is more current than.'' In this section we consider
	g-categoricity in the presence of this assumption.  The following
	example shows that a g-repair is not necessarily a c-repair, even if
	$\succ$ is transitive. This example provides an important context for
	the results that follow.
	
	\begin{example}\label{example:x-repair-transitive}
		Consider again $I$ and $\succ$ from Example~\ref{example:ccat-fail}
		(depicted in Figure~\ref{fig:transitive-example}). Observe that
		$\succ$ is transitive. In particular, there is no priority between
		$\val{Ab2}$ and $\val{Ba2}$, even though $\val{Ab2}\succ\val{Ba2}$,
		because $\val{Ab2}$ and $\val{Ba2}$ are not in conflict (or put
		differently, they are not neighbors in
		$\graph_{\scs^1}^I$). Consider the following subinstance of $I$.
		\[ J\eqdef\set{\val{Aa1},\val{Ba2},\val{Ab3},\val{Bb1}}\]
		The reader can verify that $J$ is a g-repair, but not a c-repair
		(since no execution of $\algname{FindCRep}$ can generate $J$).
	\end{example}

	Example~\ref{example:x-repair-transitive} shows that the notion global
	optimality is different from completion optimality, even if the
	priority relation is transitive. Yet, quite remarkably, the two
	notions behave the same when it comes to categoricity.
	
	\begin{theorem}\label{thm:g-c-same-transitive}
		Let $D=(I,\graph,\succ)$ be an inconsistent prioritizing instance
		such that $\succ$ is transitive. $|\crep{D}|=1$ if and only if
		$|\grep{D}|=1$.
	\end{theorem}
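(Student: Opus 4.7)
The easy direction is $(\Leftarrow)$: by Proposition~\ref{prop:containments} we have $\emptyset \neq \crep{D}\subseteq \grep{D}$, so $|\grep{D}|=1$ immediately forces $|\crep{D}|=1$. The substance of the theorem lies in the direction $(\Rightarrow)$, which I would prove by contradiction, leveraging the algorithm $\algname{CCategoricity}$ and its correctness (Theorem~\ref{thm:ccategoricity-correct}).

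Assume $|\crep{D}|=1$, and let $J$ be the unique c-repair. By Theorem~\ref{thm:ccategoricity-correct}, $J$ coincides with the output of $\algname{CCategoricity}(I,\graph,\succ)$, that is $J=P_1\cup\cdots\cup P_t$, and moreover $J$ is consistent. Suppose toward a contradiction that there exists a g-repair $J'\neq J$. Since $J'$ and $J$ are both repairs (maximal consistent subinstances), neither can properly contain the other, so both $J'\setminus J$ and $J\setminus J'$ are nonempty. Because the algorithm partitions $I$ into $(\bigcup_i P_i)\cup(\bigcup_i N_i)$, we have $J'\setminus J\subseteq \bigcup_i N_i$. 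The plan is now to show that $J$ itself is a \emph{global improvement} of $J'$, contradicting g-optimality of $J'$.

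Fix an arbitrary $f'\in J'\setminus J$, and let $i$ be the index with $f'\in N_i$. By the definition of $N_i$ there is a hyperedge $e$ of $\graph$ with $f'\in e$, with $e\setminus\set{f'}\subseteq P_1\cup\cdots\cup P_i\subseteq J$, and with $g\succs f'$ for every $g\in e\setminus\set{f'}$. Here I invoke transitivity: every $g\in e\setminus\set{f'}$ is a neighbor of $f'$ in $\graph$ (they co-occur in $e$), and $g\succs f'$, so by transitivity $g\succ f'$. If all of $e\setminus\set{f'}$ lay inside $J'$, then $e\subseteq J'$, contradicting consistency of $J'$. Hence there exists $g^*\in e\setminus\set{f'}$ with $g^*\in J\setminus J'$, and by the above $g^*\succ f'$. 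This witnesses, for this particular $f'$, the condition required of a global improvement.

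Since the argument applies to every $f'\in J'\setminus J$, we conclude that $J$ is a global improvement of $J'$, contradicting the assumption that $J'\in\grep{D}$. Therefore no g-repair other than $J$ exists, so $|\grep{D}|=1$, completing the proof. The main pleasant surprise of the argument is that transitivity does exactly what is needed at the one delicate point: it converts the $\succs$-witness furnished by the algorithm into an honest $\succ$-witness, which is the notion that underlies global improvement. The rest is bookkeeping around $\algname{CCategoricity}$; I expect no additional obstacles.
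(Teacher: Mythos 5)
Your proposal is correct and follows essentially the same route as the paper's proof: both directions are handled identically, with the ``only if'' direction showing that the unique c-repair $J=\bigcup_i P_i$ produced by $\algname{CCategoricity}$ is a global improvement of any other $J'$, by locating each $f'\in J'\setminus J$ in a negative stratum, extracting the witnessing hyperedge $e$ with $(e\setminus\set{f'})\succs f'$, using consistency of $J'$ to find a member of $e\setminus\set{f'}$ outside $J'$, and invoking transitivity to upgrade $\succs$ to $\succ$. The only cosmetic difference is that the paper argues against an arbitrary consistent subinstance $J'\neq J$ (treating $J'\subseteq J$ as a trivial subcase), whereas you restrict to $J'$ a repair and rule out containment by maximality; both are fine.
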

	\begin{proof}
		The ``if'' direction follows from
		Proposition~\ref{prop:containments}, since every c-repair is also a
		g-repair.  The proof of the ``only if'' direction is based on the
		special structure of the c-repair, as established in
		Section~\ref{sec:c}, in the case where only one c-repair
		exists. Specifically, suppose that there is a single c-repair $J$
		and let $J'\neq J$ be a consistent subinstance of $I$.  We need to
		show that $J'$ has a global improvement. We claim that $J$ is a
		global improvement of $J'$. This is clearly the case if $J'\subseteq
		J$. So suppose that $J'\not\subseteq J$.  Let $f'$ be a fact in
		$J'\setminus J$. We need to show that there is a fact $f\in
		J\setminus J'$ such that $f\succ f'$. We complete the proof by
		finding such an $f$.
		
		Recall from Theorem~\ref{thm:ccategoricity-correct} that $J$ is the
		result of executing
		$\algname{CCategoricity}(I,\graph,\succ)$. Consider the positive
		strata $P_i$ and the negative strata $N_j$ constructed in that
		execution. Since $J$ is the union of the positive strata, we get
		that $f'$ necessarily belongs to a negative stratum, say $N_j$. From
		the definition of $N_j$ it follows that $\graph$ has a hyperedge $e$
		such that $f'\in e$,
		$(e\setminus\set{f'})\subseteq P_1\cup\dots\cup P_j$, and
		$(e \setminus \set{f'})\succs f'$. Let $e$ be such a
		hyperedge. Since $J'$ is consistent, it cannot be the case that $J'$
		contains all the facts in $e$. Choose a fact $f\in e$ such that
		$f\notin J'$. Then $f\succs f'$, and since $\succ$ is transitive
		(and $f$ and $f'$ are neighbors), we have $f\succ f'$.  So
		$f\in J\setminus J'$ and $f\succ f'$, as required.
	\end{proof}

	Combining Theorems~\ref{thm:c-categoricity-ptime}
	and~\ref{thm:g-c-same-transitive}, we get the following.
	
	\begin{corollary}\label{cor:g-categoricity-transitive-ptime}
		For transitive priority relations, problems g-categoricity and
		c-categoricity coincide, and in particular,
		g-categoricity is solvable in polynomial time.
	\end{corollary}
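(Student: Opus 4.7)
The plan is to derive the corollary directly by chaining the two preceding theorems, with essentially no additional technical work. By definition, g-categoricity on input $D=(I,\graph,\succ)$ asks whether $|\grep{D}|=1$, while c-categoricity asks whether $|\crep{D}|=1$. Under the hypothesis that $\succ$ is transitive, Theorem~\ref{thm:g-c-same-transitive} says these two conditions are logically equivalent, so as decision problems they return the same answer on every such instance. That establishes the first clause, that the two problems coincide on transitive priority relations.

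For the complexity clause, I would reduce g-categoricity (restricted to transitive priority) to c-categoricity in the most trivial way: given an input $(I,\graph,\succ)$ with $\succ$ transitive, feed the very same triple into the algorithm $\algname{CCategoricity}$ of Figure~\ref{alg:ccat-alg}. By Theorem~\ref{thm:c-categoricity-ptime} this terminates in polynomial time and correctly decides $|\crep{D}|=1$; by the equivalence just invoked, its output is also the correct answer to g-categoricity. No preprocessing or re-encoding of the priority relation is needed, since both problems take inconsistent prioritizing instances as input and transitivity is preserved trivially (we do nothing to $\succ$).

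There is really no obstacle at this stage: the substantive work has been absorbed into Theorem~\ref{thm:g-c-same-transitive}, whose nontrivial ``only if'' direction uses the strata $P_i, N_j$ produced by $\algname{CCategoricity}$ to turn any consistent $J'\neq J$ into a global improvement of $J'$ witnessed by $J$, together with the polynomial-time correctness of $\algname{CCategoricity}$ from Theorem~\ref{thm:ccategoricity-correct}. Once those are in hand, the corollary is purely a matter of combining statements, and the proof can be written in a few lines.
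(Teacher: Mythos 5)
Your proposal is correct and matches the paper's own derivation exactly: the paper states the corollary as an immediate combination of Theorem~\ref{thm:g-c-same-transitive} (equivalence of the two uniqueness conditions under transitivity) with Theorem~\ref{thm:c-categoricity-ptime} (polynomial-time decidability of c-categoricity via $\algname{CCategoricity}$). Nothing further is needed.
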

	
	\begin{commentqed}
		The reader may wonder whether Theorem~\ref{thm:g-c-same-transitive}
		and Corollary~\ref{cor:g-categoricity-transitive-ptime} hold for
		p-categoricity as well. This is not the case. The hardness of
		p-categoricity$\angs{\scs^6}$
		(Theorem~\ref{thm:hardness-specific-pareto}) is proved by
		constructing a reduction where the priority relation is transitive
		(and in fact, it has no chains of length larger than one).
	\end{commentqed}
	
	In their analysis, Fagin et al.~\cite{DBLP:conf/pods/FaginKK15} have
	constructed various reductions for proving coNP-hardness of g-repair
	checking. In several of these, the priority relation is transitive. We
	conclude that there are schemas $\scs$ such that, on transitive
	priority relations, g-repair checking is coNP-complete whereas
	g-categoricity is solvable in polynomial time.

\section{Related Work on Data Cleaning}~\label{sec:related} 

We now discuss the relationship between our work and past work on data
cleaning. Specifically, we focus on relating and contrasting our
complexity results with ones established in past research.  To the
best of our knowledge, there has not been any work on the complexity
of categoricity within the prioritized repairing of Staworko et
al.~\cite{DBLP:conf/edbtw/StaworkoCM06}.  Fagin et
al.~\cite{DBLP:conf/pods/FaginKRV14} investigated a static version of
categoricity in the context of text extraction, but the settings and
problems are very different, and so are the complexity results (e.g.,
Fagin et al.~\cite{DBLP:conf/pods/FaginKRV14} establish undecidability
results).

Bohannon et al.~\cite{DBLP:conf/sigmod/BohannonFFR05} have studied a
repairing framework where repairing operations involve attribute
updates and tuple insertions, and where the quality of a repair is
determined by a cost function (aggregating the costs of individual
operations). They have shown that finding an optimal repair is
NP-hard, in data complexity, even when integrity constraints consist
of only FDs. This result could be generalized to hardness of
categoricity in their model (e.g., by a reduction from the \e{unique
	exact 3-cover} problem~\cite{DBLP:journals/iandc/SchulteLG10}).  The
source of hardness in their model is the cost minimization, and it is
not clear how any of our hardness results could derive from those, as
the framework of preferred repairs (adopted here) does not involve any
cost-based quality; in particular, as echoed in this paper, an optimal
repair can be found in polynomial time under each of the three
semantics~\cite{DBLP:journals/amai/StaworkoCM12}.

In the framework of \e{data currency}~\cite[Chapter
6]{DBLP:series/synthesis/2012Fan}\cite{DBLP:journals/tods/FanGW12},
relations consist of entities with attributes, where each entity may
appear in different tuples, every time with possibly different
(conflicting) attribute values. A partial order on each attribute is
provided, where ``greater than'' stands for ``more current.'' A
\e{completion} of an instance is obtained by completing the partial
order on an attribute of every entity, and it defines a \e{current
	instance} where each attribute takes its most recent value. In
addition, a completion needs to satisfy given (denial) constraints,
which may introduce interdependencies among completions of different
attributes. Fan et al.~\cite{DBLP:journals/tods/FanGW12} have studied
the problem of determining whether such a specification induces a
single current instance (i.e., the corresponding version of
categoricity), and showed that this problem is coNP-complete under
data complexity.  It is again not clear how to simulate their hardness
in our p-categoricity and g-categoricity, since their hardness is due
to the constrains on completions, and these constraints do not have
correspondents in our case (beyond the partial orders). A similar
argument relates our lower bounds to those in the framework of
conflict resolution by Fan
Geerts~\cite[Chapter~7.3]{DBLP:series/synthesis/2012Fan}, where the
focus is on establishing a unique tuple from a collection of
conflicting tuples.

Fan et al.~\cite{DBLP:journals/tods/FanGW12} show that in the absence
of constraints, their categoricity problem can be solved in polynomial
time (even in the presence of ``copy functions'').  This tractability
result can be used for establishing the tractability side of
Theorem~\ref{thm:pareto} in the special case where the single FD is a
key constraint. In the general case of a single FD, we need to argue
about relationships among sets, and in particular, the differences
among the three x-categoricity problems matter.

Cao et al.~\cite{DBLP:conf/sigmod/CaoFY13} have studied the problem of
entity record cleaning, where again the attributes of an entity are
represented as a relation (with missing values), and a partial order
is defined on each attribute. The goal is to increase the accuracy of
values from the partial orders and an external source of reliable
(``master'') data.  The specification now gives update steps that have
the form of logical rules that specify when one value should replace a
null, when new preferences are to be derived, and when data should be
copied from the master data. Hence, cleaning is established by
\e{chasing} these rules. They study a problem related to categoricity,
namely the \e{Church-Rosser} property: is it the case that every
application of the chase (in any rule-selection and grounding order)
results in the same instance?  They show that this property is
testable in polynomial time by giving an algorithm that tests whether
some invalid step in the end of the execution has been valid sometime
during the execution. We do not see any clear way of deriving any of
our upper bounds from this result, due to the difference in the update
model (updating nulls and preferences vs.~tuple deletion), and the
optimality model (chase termination vs.~x-repair).

The works on \e{certain
	fixes}~\cite{DBLP:journals/jdiq/FanM0Y14,DBLP:journals/vldb/FanLMTY12}
\cite[Chapters~7.1--7.2]{DBLP:series/synthesis/2012Fan} consider
models that are substantially different from the one adopted here,
where repairs are obtained by chasing update rules (rather than tuple
deletion), and uniqueness applies to chase outcomes (rather than
maximal subinstances w.r.t.~preference lifting). The problems relevant
to our categoricity are the \e{consistency
	problem}~\cite{DBLP:journals/jdiq/FanM0Y14} (w.r.t.~guarantees on
the consistency of some attributes following certain patterns), and
the \e{determinism} problem~\cite{DBLP:journals/jdiq/FanM0Y14}. They
are shown to be intractable
(coNP-complete~\cite{DBLP:journals/vldb/FanLMTY12} and
PSPACE-complete~\cite{DBLP:journals/jdiq/FanM0Y14}) under combined
complexity (while we focus here on data complexity).

Finally, we remark that there have several dichotomy results on the
complexity of problems associated with inconsistent
data~\cite{DBLP:journals/jcss/MaslowskiW13,DBLP:conf/pods/FaginKK15,DBLP:conf/icdt/KoutrisS14},
but to the best of our knowledge this paper is the first to establish
a dichotomy result for any variant of repair uniqueness identification.

\section{Concluding Remarks}\label{sec:conclusions}
We investigated the complexity of the categoricity problem, which is
that of determining whether the provided priority relation suffices to
clean the database unambiguously in the framework of preferred
repairs. Following the three semantics of optimal repairs, we
investigated the three variants of this problem: p-categoricity,
g-categoricity and c-categoricity. We established a dichotomy in the
data complexity of p-categoricity for the case where constraints are
FDs, partitioning the cases into polynomial time and
coNP-completeness. We further showed that the tractable side of
p-categoricity extends to g-categoricity, but the latter can reach
$\piptwo$-completeness already for two FDs. Finally, we showed that
c-categoricity is solvable in polynomial time in the general case
where integrity constraints are given as a conflict hypergraph.  We
complete this paper by discussing directions for future research.

In this work we did not address any qualitative discrimination among
the three notions of x-repairs. Rather, we continue the line of
work~\cite{DBLP:conf/edbtw/StaworkoCM06,DBLP:conf/pods/FaginKRV14}
that explores the impact of the choice on the entailed computational
complexity. It has been established that, as far as repair checking is
concerned, the Pareto and the completion semantics behave much better
than the global one, since g-repair checking is tractable only in a
very restricted class of schemas~\cite{DBLP:conf/pods/FaginKRV14}. In
this work we have shown that from the viewpoint of categoricity, the
Pareto semantics departs from the completion one by being likewise
intractable (while the global semantics hits an even higher complexity
class), hence the completion semantics outstands so far as the most
efficient option to adopt.

It would be interesting to further understand the complexity of
g-categoricity, towards a dichotomy (at least for FDs).  We have left
open the question of whether there exists a schema with a single
relation and a set of FDs, \e{not} equivalent to a single FD, such
that g-categoricity is solvable in polynomial time.  Beyond that, for
both p-categoricity and g-categoricity it is important to detect
islands of tractability based on properties of the data and/or the
priority relation (as schema constraints do not get us far in terms of
efficient algorithms, at least by our dichotomy for p-repairs), beyond
transitivity in the case of g-categoricity
(Corollary~\ref{cor:g-categoricity-transitive-ptime}).

Another interesting direction would be the generalization of
categoricity to the problems of \e{counting} and \e{enumerating} the
preferred repairs. For classical repairs (without a priority
relation), Maslowski and
Wijsen~\cite{DBLP:journals/jcss/MaslowskiW13,DBLP:conf/icdt/MaslowskiW14}
established dichotomies (FP vs.~\#P-completeness) in the complexity of
counting in the case where constraints are primary keys. For the
general case of denial constraints, counting the classical repairs
reduces to the enumeration of independent sets of a hypergraph with a
bounded edge size, a problem shown by Boros et
al.~\cite{DBLP:journals/ppl/BorosEGK00} to be solvable in incremental
polynomial time (and in particular polynomial input-output
complexity). For a general given conflict hypergraph, repair
enumeration is the well known problem of enumerating the \e{minimal
	hypergraph transversals} (also known as the hypergraph \e{duality}
problem); whether this problem is solvable in polynomial total time is
a long standing open problem~\cite{DBLP:conf/csl/GottlobM14}.

In this work we focused on cleaning within the framework of preferred
repairs, where integrity constraints are anti-monotonic and cleaning
operations are tuple deletions (i.e., \e{subset repairs}). However,
the problem of categoricity arises in every cleaning framework that is
based on defining a set of repairs with a preference between repairs,
including different types of integrity constraints, different cleaning
operations (e.g., tuple addition and cell
update~\cite{DBLP:journals/tods/Wijsen05}), and different priority
specifications among repairs.  This includes preferences by means of
general scoring
functions~\cite{DBLP:conf/iqis/MotroAA04,DBLP:books/idea/encyclopediaDB2005/GrecoSTZ05},
aggregation of scores on the individual cleaning
operations~\cite{DBLP:conf/sigmod/BohannonFFR05,DBLP:journals/jdiq/FanM0Y14,DBLP:conf/icdt/KolahiL09,DBLP:journals/jdiq/FanM0Y14,DBLP:conf/sigmod/DallachiesaEEEIOT13},
priorities among resolution
policies~\cite{DBLP:conf/kr/MartinezPPSS08} and preferences based on
soft
rules~\cite{DBLP:journals/tplp/NieuwenborghV06,DBLP:conf/dexa/GrecoSTZ04}.
This also includes the LLUNATIC
system~\cite{DBLP:journals/pvldb/GeertsMPS13,DBLP:conf/icde/GeertsMPS14}
where priorities are defined by lifting partial orders among ``cell
groups,'' representing either semantic preferences (e.g., timestamps)
or level of completeness (e.g., null vs.~non-null).  A valuable future
direction would be to investigate the complexity of categoricity in
the above frameworks, and in particular, to see whether ideas or proof
techniques from this work can be used to analyze their categoricity.

\eat{A complexity result of PSPACE-completeness
	has been established by Fan et al.~\cite{DBLP:journals/jdiq/FanM0Y14}
	(where the problem is referred to as ``determinism'') and it would be
	interesting to pursue a refinement of that result.}

Motivated by the tractability of c-categoricity, we plan to pursue an
implementation of an interactive and declarative system for database
cleaning, where rules are of two kinds: integrity constraints and
priority specifications (e.g., based on the semantics of \e{priority
	generating dependencies} of Fagin et al.~\cite{DBLP:conf/pods/FaginKRV14}). To make
such a system applicable to a wide range of practical use cases, we will
need to extend beyond subset repairs, and consequently, investigate
the fundamental direction of extending the framework of preferred
repairs towards such repairs.

\newpage
{
\balance
\bibliographystyle{abbrv} 
\bibliography{main}  

\begin{thebibliography}{10}

\bibitem{DBLP:conf/icdt/AfratiK09}
F.~N. Afrati and P.~G. Kolaitis.
\newblock Repair checking in inconsistent databases: algorithms and complexity.
\newblock In {\em ICDT}, pages 31--41. {ACM}, 2009.

\bibitem{TIPSTER98}
D.~E. Appelt and B.~Onyshkevych.
\newblock The common pattern specification language.
\newblock In {\em {TIPSTER} Text Program: Phase {III}}, pages 23--30.
  Association for Computational Linguistics, 1998.

\bibitem{DBLP:conf/pods/ArenasBC99}
M.~Arenas, L.~E. Bertossi, and J.~Chomicki.
\newblock Consistent query answers in inconsistent databases.
\newblock In {\em PODS}, pages 68--79. ACM, 1999.

\bibitem{Bertossi2011}
L.~E. Bertossi.
\newblock {\em Database Repairing and Consistent Query Answering}.
\newblock Synthesis Lectures on Data Management. Morgan {\&} Claypool
  Publishers, 2011.

\bibitem{DBLP:conf/icde/BohannonFGJK07}
P.~Bohannon, W.~Fan, F.~Geerts, X.~Jia, and A.~Kementsietsidis.
\newblock Conditional functional dependencies for data cleaning.
\newblock In {\em ICDE}, pages 746--755. {IEEE}, 2007.

\bibitem{DBLP:conf/sigmod/BohannonFFR05}
P.~Bohannon, M.~Flaster, W.~Fan, and R.~Rastogi.
\newblock A cost-based model and effective heuristic for repairing constraints
  by value modification.
\newblock In {\em SIGMOD}, pages 143--154. {ACM}, 2005.

\bibitem{DBLP:journals/ppl/BorosEGK00}
E.~Boros, K.~M. Elbassioni, V.~Gurvich, and L.~Khachiyan.
\newblock An efficient incremental algorithm for generating all maximal
  independent sets in hypergraphs of bounded dimension.
\newblock {\em Parallel Processing Letters}, 10(4):253--266, 2000.

\bibitem{DBLP:conf/sigmod/CaoFY13}
Y.~Cao, W.~Fan, and W.~Yu.
\newblock Determining the relative accuracy of attributes.
\newblock In {\em SIGMOD}, pages 565--576. {ACM}, 2013.

\bibitem{ChiticariuKLRRV10}
L.~Chiticariu, R.~Krishnamurthy, Y.~Li, S.~Raghavan, F.~Reiss, and
  S.~Vaithyanathan.
\newblock {SystemT}: An algebraic approach to declarative information
  extraction.
\newblock In {\em ACL}, pages 128--137, 2010.

\bibitem{DBLP:journals/iandc/ChomickiM05}
J.~Chomicki and J.~Marcinkowski.
\newblock Minimal-change integrity maintenance using tuple deletions.
\newblock {\em Inf. Comput.}, 197(1-2):90--121, 2005.

\bibitem{DBLP:conf/sigmod/DallachiesaEEEIOT13}
M.~Dallachiesa, A.~Ebaid, A.~Eldawy, A.~K. Elmagarmid, I.~F. Ilyas, M.~Ouzzani,
  and N.~Tang.
\newblock {NADEEF:} a commodity data cleaning system.
\newblock In {\em SIGMOD}, pages 541--552. {ACM}, 2013.

\bibitem{DBLP:conf/pods/FaginKK15}
R.~Fagin, B.~Kimelfeld, and P.~G. Kolaitis.
\newblock Dichotomies in the complexity of preferred repairs.
\newblock In {\em PODS}, pages 3--15. {ACM}, 2015.

\bibitem{DBLP:conf/pods/FaginKRV14}
R.~Fagin, B.~Kimelfeld, F.~Reiss, and S.~Vansummeren.
\newblock Cleaning inconsistencies in information extraction via prioritized
  repairs.
\newblock In {\em PODS}, pages 164--175. {ACM}, 2014.

\bibitem{DBLP:journals/jacm/FaginKRV15}
R.~Fagin, B.~Kimelfeld, F.~Reiss, and S.~Vansummeren.
\newblock Document spanners: A formal approach to information extraction.
\newblock {\em J. {ACM}}, 62(2):12, 2015.

\bibitem{DBLP:series/synthesis/2012Fan}
W.~Fan and F.~Geerts.
\newblock {\em Foundations of Data Quality Management}.
\newblock Synthesis Lectures on Data Management. Morgan {\&} Claypool
  Publishers, 2012.

\bibitem{DBLP:journals/tods/FanGW12}
W.~Fan, F.~Geerts, and J.~Wijsen.
\newblock Determining the currency of data.
\newblock {\em {ACM} Trans. Database Syst.}, 37(4):25, 2012.

\bibitem{DBLP:journals/vldb/FanLMTY12}
W.~Fan, J.~Li, S.~Ma, N.~Tang, and W.~Yu.
\newblock Towards certain fixes with editing rules and master data.
\newblock {\em {VLDB} J.}, 21(2):213--238, 2012.

\bibitem{DBLP:journals/jdiq/FanM0Y14}
W.~Fan, S.~Ma, N.~Tang, and W.~Yu.
\newblock Interaction between record matching and data repairing.
\newblock {\em J. Data and Information Quality}, 4(4):16:1--16:38, 2014.

\bibitem{DBLP:journals/jiis/GaasterlandGM92}
T.~Gaasterland, P.~Godfrey, and J.~Minker.
\newblock An overview of cooperative answering.
\newblock {\em J. Intell. Inf. Syst.}, 1(2):123--157, 1992.

\bibitem{DBLP:journals/pvldb/GeertsMPS13}
F.~Geerts, G.~Mecca, P.~Papotti, and D.~Santoro.
\newblock The {LLUNATIC} data-cleaning framework.
\newblock {\em {PVLDB}}, 6(9):625--636, 2013.

\bibitem{DBLP:conf/icde/GeertsMPS14}
F.~Geerts, G.~Mecca, P.~Papotti, and D.~Santoro.
\newblock Mapping and cleaning.
\newblock In {\em ICDE}, pages 232--243. {IEEE}, 2014.

\bibitem{DBLP:conf/csl/GottlobM14}
G.~Gottlob and E.~Malizia.
\newblock Achieving new upper bounds for the hypergraph duality problem through
  logic.
\newblock In {\em CSL-LICS}, pages 43:1--43:10. {ACM}, 2014.

\bibitem{DBLP:conf/dexa/GrecoSTZ04}
S.~Greco, C.~Sirangelo, I.~Trubitsyna, and E.~Zumpano.
\newblock Feasibility conditions and preference criteria in querying and
  repairing inconsistent databases.
\newblock In {\em DEXA}, volume 3180 of {\em LNCS}, pages 44--55. Springer,
  2004.

\bibitem{DBLP:books/idea/encyclopediaDB2005/GrecoSTZ05}
S.~Greco, C.~Sirangelo, I.~Trubitsyna, and E.~Zumpano.
\newblock Preferred repairs for inconsistent databases.
\newblock In {\em Encyclopedia of Database Technologies and Applications},
  pages 480--485. Idea Group, 2005.

\bibitem{DBLP:conf/pods/Kimelfeld12}
B.~Kimelfeld.
\newblock A dichotomy in the complexity of deletion propagation with functional
  dependencies.
\newblock In {\em PODS}, pages 191--202, 2012.

\bibitem{DBLP:journals/tods/KimelfeldVW12}
B.~Kimelfeld, J.~Vondr{\'{a}}k, and R.~Williams.
\newblock Maximizing conjunctive views in deletion propagation.
\newblock {\em {ACM} Trans. Database Syst.}, 37(4):24, 2012.

\bibitem{DBLP:conf/icdt/KolahiL09}
S.~Kolahi and L.~V.~S. Lakshmanan.
\newblock On approximating optimum repairs for functional dependency
  violations.
\newblock In {\em ICDT}, volume 361 of {\em {ACM} International Conference
  Proceeding Series}, pages 53--62. {ACM}, 2009.

\bibitem{DBLP:conf/icdt/KoutrisS14}
P.~Koutris and D.~Suciu.
\newblock A dichotomy on the complexity of consistent query answering for atoms
  with simple keys.
\newblock In {\em ICDT}, pages 165--176. OpenProceedings.org, 2014.

\bibitem{DBLP:conf/pods/KoutrisW15}
P.~Koutris and J.~Wijsen.
\newblock The data complexity of consistent query answering for self-join-free
  conjunctive queries under primary key constraints.
\newblock In {\em {PODS}}, pages 17--29. {ACM}, 2015.

\bibitem{DBLP:conf/kr/MartinezPPSS08}
M.~V. Martinez, F.~Parisi, A.~Pugliese, G.~I. Simari, and V.~S. Subrahmanian.
\newblock Inconsistency management policies.
\newblock In {\em KR}, pages 367--377. {AAAI} Press, 2008.

\bibitem{DBLP:journals/jcss/MaslowskiW13}
D.~Maslowski and J.~Wijsen.
\newblock A dichotomy in the complexity of counting database repairs.
\newblock {\em J. Comput. Syst. Sci.}, 79(6):958--983, 2013.

\bibitem{DBLP:conf/icdt/MaslowskiW14}
D.~Maslowski and J.~Wijsen.
\newblock Counting database repairs that satisfy conjunctive queries with
  self-joins.
\newblock In {\em ICDT}, pages 155--164. OpenProceedings.org, 2014.

\bibitem{DBLP:conf/iqis/MotroAA04}
A.~Motro, P.~Anokhin, and A.~C. Acar.
\newblock Utility-based resolution of data inconsistencies.
\newblock In {\em {IQIS}}, pages 35--43. {ACM}, 2004.

\bibitem{DBLP:journals/tplp/NieuwenborghV06}
D.~V. Nieuwenborgh and D.~Vermeir.
\newblock Preferred answer sets for ordered logic programs.
\newblock {\em {TPLP}}, 6(1-2):107--167, 2006.

\bibitem{DBLP:journals/iandc/SchulteLG10}
O.~Schulte, W.~Luo, and R.~Greiner.
\newblock Mind change optimal learning of bayes net structure from dependency
  and independency data.
\newblock {\em Inf. Comput.}, 208(1):63--82, 2010.

\bibitem{DBLP:conf/edbtw/StaworkoCM06}
S.~Staworko, J.~Chomicki, and J.~Marcinkowski.
\newblock Preference-driven querying of inconsistent relational databases.
\newblock In {\em {EDBT} Workshops}, volume 4254 of {\em LNCS}, pages 318--335.
  Springer, 2006.

\bibitem{DBLP:journals/amai/StaworkoCM12}
S.~Staworko, J.~Chomicki, and J.~Marcinkowski.
\newblock Prioritized repairing and consistent query answering in relational
  databases.
\newblock {\em Ann. Math. Artif. Intell.}, 64(2-3):209--246, 2012.

\bibitem{DBLP:journals/tods/Wijsen05}
J.~Wijsen.
\newblock Database repairing using updates.
\newblock {\em {ACM} Trans. Database Syst.}, 30(3):722--768, 2005.

\end{thebibliography}
}

\onecolumn
\appendix
\def\refpareto{\ref{sec:p}}

\newcommand{\p}{^\prime}
\newcommand{\Ia}{I_{\tup a}}
\newcommand{\Iab}{I_{\tup {a,b}}}
\newcommand{\Jab}{K_{\tup {a,b}}}
\newcommand{\Iabp}{I_{\tup {a,b\p}}}
\newcommand{\Iac}{I_{\tup {a,c}}}

\newcommand{\pcp}{p-categoricity problem\xspace}
\newcommand{\pr}{p-repairs\xspace}

\newcommand{\allXxu}{for all $X\in \mathcal{X}$, $x\in X$ and $u\in \mathcal{U}$\xspace }

\section{Proofs for Section~\refpareto}

In this section we provide proofs for Section~\ref{sec:p}.

\subsection{Proof of Lemma~\ref {lemma:key-for-pareto-1fd}}
In the following section, we say that a block $\Ia$ (respectively,
$\Iab$) is the block (respectively, subblock) of a fact $f$ if
$f\in \Ia$ (respectively $f\in \Iab$).  Note that each fact has a
unique block and subblock.
\begin{replemma}{\ref{lemma:key-for-pareto-1fd}}
	\lemmakeyforparetoonefd
\end{replemma}
\begin{proof}
	Recall that $\Delta$ is the set $\set{A\rightarrow B}$.  We start by
	proving the second part of the lemma.  That is, we show that each
	p-repair of a block $\Ia$ is a subblock $\Iab$.  Let $K$ be a
	p-repair of $\Ia$.  Then $K$ is contained in a single subblock of
	$\Ia$, since $K$ is consistent. Moreover, $K$ contains all the facts
	in $\Iab$, or otherwise $K$ has a Pareto improvement.
	
	Next, we prove the first part of the lemma.
	\partitle{The ``if'' direction}
	Let $J$ be a p-repair of $I$. We need to show that $J$ is a union of
	p-repairs over all the blocks $\Ia$ of $I$.  Observe that $J$ is
	consistent, and so, for each block $\Ia$ it contains facts from at
	most one subblock $\Iab$. Moreover, since $J$ is maximal, it contains
	at least one representative from each block $\Ia$, and furthermore, it
	contains the entire subblock of each such a representative. We
	conclude that $J$ is the union of subblocks of $I$. It is left to show
	that if a subblock $\Iab$ is contained in $J$, then $\Iab$ is a
	p-repair of $\Ia$. Let $\Iab$ be a subblock contained in $J$ and
	assume, by way of contradiction, that $K$ is a Pareto-improvement of
	$\Iab$ in $\Ia$. Let $J'$ be the instance that is obtained from $J$ by
	replacing $\Iab$ with $K$. Observe that $J'$ is consistent, since no
	facts in $J$ other than those in $\Iab$ conflict with facts from $K$.
	Then clearly, $J'$ is a Pareto improvement of $J$, which contradicts
	the fact that $J$ is a p-repair.
	
	\partitle{The ``only if'' direction}
	Let $J$ be a union of p-repairs over all the blocks $\Ia$ of $I$.  We
	need to show that $J$ is a p-repair.  By the second part of the lemma,
	$J$ is a union of subblocks.  Since each subblock is consistent and
	facts from different blocks are consistent, we get that $J$ is
	consistent.  It is left to show that $J$ does not have a Pareto
	improvement.  Assume, by way of contradiction, that $J$ has a Pareto
	improvement $K$. By the definition of a Pareto improvement, $K$
	contains a fact $f$ such that $f\succ g$ for all $g\in J\setminus K$.
	Let $f$ be such a fact. Let $\Ia$ be the subblock of $f$. Then, by our
	assumption the subinstance $J$ contains a p-repair of $\Ia$, and from
	the second part of the lemma this p-repair is a subblock of $\Ia$, say
	$\Iab$.  But then, $f$ is not in $\Iab$ (since $f\notin J$), and
	therefore, $K$ does not contain any fact from $\Iab$ (since $K$ is
	consistent). We conclude that $f\succ g$ for all $g\in \Iab$, and
	hence, $\Iab$ has a Pareto improvement (namely $\set{f}$), in
	contradiction to the fact that $\Iab$ is a p-repair of $\Ia$.
\end{proof}

\subsection{Proof of Theorem~\ref{thm:hardness-specific-pareto}}

We now prove Theorem~\ref{thm:hardness-specific-pareto}

\begin{reptheorem}{\ref{thm:hardness-specific-pareto}}
	\theoremhardnessspecificpareto
\end{reptheorem}

We give a separate proof for each of the two schemas.
\newcommand{\XC}{\mathsf{XC}}
\subsubsection{Hardness of p-categoricity$\angs{\scs^0}$}
We construct a reduction from the Exact-Cover problem ($\XC$) to the complement of p-categoricity$\angs{\scs^0}$.
The input to $\XC$ is a set $\mathcal{U}$  of elements and a collection 
$\mathcal{X}$ of subsets of $\mathcal{U}$, such that their union is $\mathcal{U}$.
The goal is to identify whether there is an exact cover of $\mathcal{U}$ by $\mathcal{X}$.
An \e{exact cover} of $\mathcal{U}$ by $\mathcal{X}$ is a collection of pairwise disjoint sets from $\mathcal{X}$ whose union is $\mathcal{U}$.

\partitle{Construction}
Given an input $(\mathcal{X},\mathcal{U})$ to $\XC$, we construct input
$(I,\succ)$ for p-categoricity$\angs{\scs^0}$.
For each $u \in \mathcal{U}$, 
$X\in \mathcal{X}$ and 
$x\in X$, the instance
$I$ consists of the following facts:
\begin{enumerate}
	\item[$(i)$]
	${R^0}(u,u)$
	\item[$(ii)$]
	${R^0}(u,f_u)$
	\item[$(iii)$]
	${R^0}(X_x,f_u)$ 
	\item[$(iv)$]
	${R^0}(X_x,X_x)$
	\item[$(v)$]
	${R^0}(X_x, x)$
	\item[$(vi)$] ${R^0}(X_{x_{i+1}}, x_i )$ for each $i=0,\ldots,n-1$,
	where $X=\{x_0,\ldots,x_{n-1}\}$ and plus is interpreted modulo $n$
	(e.g., $(n-1)+1=0$)
\end{enumerate}
In the sequel, we relate to these facts by \e{types} according to their roman number. For example, facts of the form $R^0(X_x,f_u)$, where $X\in \mathcal{X}$, $x\in X$ and $u\in \mathcal{U}$, will be referred to as facts of type $(iii)$.

For all $u\in \mathcal{U}, X\in \mathcal{X}$ and $x\in X$,
the priority relation $\succ$ is defined as follows:
\begin{itemize}
	\item
	${R^0}(u,u) \succ {R^0}(u,f_u)$,
	\item
	${R^0}(X_x,X_x) \succ {R^0}(X_x,f_u)$,
	\item
	${R^0}(u,u) \succ {R^0}(X_x,u)$,
	\item
	${R^0}(X_x,f_u) \succ {R^0}(X_x,x)$, and
	\item
	${R^0}(X_{x_{i+1}}, x_i ) \succ {R^0}(X_{x_i},x_i)$, 
	for $i=0,\ldots,n-1$ where $X=\{x_0,\ldots,x_{n-1}\}$
	and plus is interpreted modulo $n$.
	Note that each $X\in \mathcal{X} $ has a corresponding $n$.
\end{itemize}

	\begin{figure}[t]
		\centering
		\input{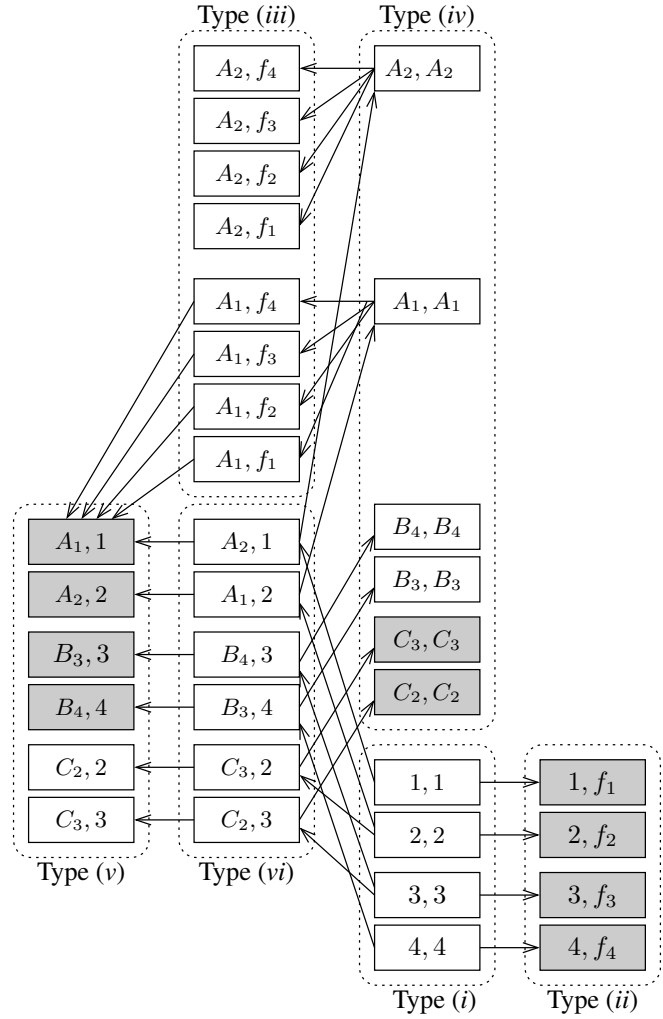}
		\caption{\label{fig:paretoHardness}Illustration of the reduction from $\XC$ to p-categoricity$\angs{\scs^0}$}
	\end{figure}

Our construction is partly illustrated in Figure~\ref{fig:paretoHardness} for the following input to the $\XC$ problem:
$\mathcal{U}= \{1,2,3,4\}$, $\mathcal{X}=\{A,B,C\}$ where
$A=\{1,2\}$, $B=\{3,4\}$ and $C=\{2,3\}$.
Note that we  denote the fact $R^0(a,b)$ by  $(a,b)$.
In this case, there is an exact cover of $\mathcal{U}$ by $\mathcal{X}$ that consists of the sets $A$ and $B$.
The gray facts represent a p-repair. 
\partitle{Proof of Hardness}
We start by finding a c-repair.
\begin{lemma}
	\label{pareto:l:greedy}
	There is a c-repair that consists of the following facts for all $u\in \mathcal{U}$, $X\in \mathcal{X}$ and $ x\in X$.
	\begin{itemize}
		\item
		${R^0}(u,u)$
		\item
		${R^0}(X_x,X_x)$
	\end{itemize}
\end{lemma}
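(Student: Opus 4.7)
The plan is to invoke Theorem~\ref{thm:cgreedy}, which reduces producing a c-repair to exhibiting an execution of $\algname{FindCRep}$ on $(I, \graph_{\scs^0}^I, \succ)$ whose output is the claimed set $J^* \eqdef \{R^0(u,u) : u\in\mathcal{U}\} \cup \{R^0(X_x, X_x) : X\in\mathcal{X},\, x\in X\}$. The first observation I would make is that facts of types $(i)$ and $(iv)$ never appear on the right-hand side of any pair in the list that defines $\succ$. Hence each such fact is $\succ$-maximal throughout the algorithm, until it is selected. I would therefore begin the execution by choosing, in any order, every type $(i)$ and every type $(iv)$ fact. A routine check shows that any two of these are mutually non-conflicting: two type $(i)$ facts $R^0(u,u)$ and $R^0(u',u')$ differ on both coordinates when $u\neq u'$; two type $(iv)$ facts differ on both coordinates when $(X,x)\neq(X',x')$; and a type $(i)$ fact $R^0(u,u)$ shares no coordinate with a type $(iv)$ fact $R^0(X_x,X_x)$. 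Consequently each of these facts passes the consistency test of line~5 and is added to $J$, so at the end of this phase $J = J^*$.

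The remainder of the proof is to show that every subsequently selected fact fails the consistency test, so $J$ does not grow beyond $J^*$. This is a short case analysis on the four remaining types, using the FDs $A\to B$ and $B\to A$ of $\depset^0$ together with the facts already in $J^*$. A type $(ii)$ fact $R^0(u,f_u)$ shares its first coordinate with $R^0(u,u)\in J$; a type $(iii)$ fact $R^0(X_x,f_u)$ shares its first coordinate with $R^0(X_x,X_x)\in J$; a type $(v)$ fact $R^0(X_x,x)$ shares its second coordinate with $R^0(x,x)\in J$ (which lies in $J$ because $x\in X\subseteq\mathcal{U}$); and a type $(vi)$ fact $R^0(X_{x_{i+1}},x_i)$ shares its first coordinate with $R^0(X_{x_{i+1}},X_{x_{i+1}})\in J$. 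In every case the pair violates one of the two FDs, so the candidate fact is discarded and $J$ remains equal to $J^*$.

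Finally, I need to confirm that the tail of the execution can actually process every remaining fact. After the first phase, the only parents under $\succ$ of facts of types $(ii)$, $(iii)$, and $(vi)$ have been removed from $I$, so these become $\succ$-maximal and can be chosen next; once they too are removed, the remaining type $(v)$ facts become $\succ$-maximal (their parents lie in types $(i)$ and $(iii)$, all now gone). Thus there is a valid greedy run that examines every fact of $I$ and outputs exactly $J^*$, which by Theorem~\ref{thm:cgreedy} is a c-repair. I do not foresee a real obstacle here: the argument is essentially careful bookkeeping, and the only place where one must be attentive is the pairwise-non-conflict verification for types $(i)$ and $(iv)$, which is nevertheless immediate from the concrete form of the facts.
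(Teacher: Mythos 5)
Your proposal is correct and follows exactly the approach the paper intends: the paper's own proof simply asserts that a suitable run of $\algname{FindCRep}$ exists, and you have supplied the bookkeeping (types $(i)$ and $(iv)$ are never on the right of $\succ$, hence always selectable first; they are pairwise consistent; every remaining fact conflicts with one of them). The only nit is that the $\succ$-parents of a type $(v)$ fact also include type $(vi)$ facts, not just types $(i)$ and $(iii)$, but since your schedule removes type $(vi)$ in the preceding phase this does not affect the argument.
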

\begin{proof}
	It is straightforward to show that
	there is a run of the algorithm
	$\algname{FindCRep}(I,\graph,\succ)$ that returns exactly  this c-repair.
\end{proof}
\newcommand{\Jz}{J_0}
In the the remainder of the proof, we relate to the the c-repair from Lemma~\ref{pareto:l:greedy} by $J_0$. Note that every c-repair is also a p-repair and therefore $\Jz$ is a p-repair of $I$.
To complete the proof we show that there is a solution to $\XC$ if and only if $I$ has a p-repair different from $\Jz$.
\newcommand{\tl}[1]{\tilde{#1}}
\newcommand{\xx}{X_x}
\newcommand{\fu}{f_u}
\newcommand{\xxr}[2]{{#1}_{#2}}
\partitle{The ``if'' direction}
We show that if there is a solution to $\XC$ then $I$ has a p-repair
different from $\Jz$.

We construct a p-repair of $I$, namely $K$, different from $\Jz$ based on a solution to $\XC$.
Let the collection of sets
$X^{1},\ldots X^{l}\in \mathcal{X}$ be a solution to $\XC$.
Let $K$ consist of the following facts for all 
$X\in \mathcal{X}$, $x\in X$ and $u\in \mathcal{U}$.
\begin{itemize}
	\item[$(1)$] 
	${R^0}(\xx,x)$ if $X\in\{X^1,\ldots,X^l\}$ 
	\item[$(2)$] 
	${R^0}(X_x,X_x)$ if $X\not \in\{X^1,\ldots,X^l\}$
	\item[$(3)$]
	${R^0}(u,f_u)$ 
\end{itemize}
Note that since $K$ is different from $\Jz$ (see Lemma~\ref{pareto:l:greedy}), it is left to show that $K$ is a p-repair of $I$.
To do so, we show that $K$ is consistent and that it does not have a Pareto improvement.

\begin{lemma}
	\label{pareto:l:K consistent}
	$K$ is a consistent subinstance of $I$.
\end{lemma}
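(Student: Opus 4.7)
The plan is to verify consistency of $K$ directly against the two FDs of $\depset^0$, namely $A\rightarrow B$ and $B\rightarrow A$. Since $K$ contains only facts of three forms, it is natural to partition the analysis according to which type a fact belongs to, and to track separately what values may appear in the first coordinate and in the second coordinate. The main observation to set up is that the constants $X_x$, $x$, $u$, and $f_u$ come from disjoint naming schemes in the construction, so a value used for one role cannot accidentally collide with a value used for another.

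First I would verify that $A\rightarrow B$ holds. The first coordinates of facts in $K$ are either of the form $X_x$ (from the facts of groups $(1)$ and $(2)$) or of the form $u$ (from group $(3)$). Group $(3)$ contributes exactly one fact per $u$, so the first coordinate $u$ is never repeated. For the $X_x$ values, the definition of $K$ puts \emph{exactly one} fact with first coordinate $X_x$ into $K$: either $R^0(X_x,x)$ if $X\in\{X^1,\dots,X^l\}$, or $R^0(X_x,X_x)$ otherwise. Hence no first coordinate is repeated at all, and $A\rightarrow B$ is trivially satisfied.

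Next I would verify $B\rightarrow A$ by showing no second coordinate is repeated. The second coordinates in $K$ fall into three pools: values $x\in\mathcal{U}$ (from group $(1)$), values $X_x$ (from group $(2)$), and values $f_u$ (from group $(3)$). These three pools are pairwise disjoint by construction, so a collision can only happen inside one pool. Within group $(2)$ the value $X_x$ is unique to that particular $(X,x)$; within group $(3)$ the value $f_u$ is unique to $u$. The crucial case is group $(1)$: a collision on $x$ would occur precisely if $x$ belongs to two distinct sets $X^i,X^j$ in the cover, since then both $R^0(X^i_x,x)$ and $R^0(X^j_x,x)$ would be in $K$. This is where I invoke the hypothesis that $X^1,\dots,X^l$ is an \emph{exact} cover: every $x\in\mathcal{U}$ lies in exactly one $X^i$, so at most one fact of group $(1)$ has second coordinate $x$.

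The main (and essentially only) obstacle is making sure I do not miss a cross-type collision in either coordinate, so I would write out the naming-disjointness assumption cleanly at the start, and then the exact-cover hypothesis does the only nontrivial work, namely preventing second-coordinate collisions among the facts of type $(1)$. Once both FDs are checked, $K$ is consistent w.r.t.~$\depset^0$, completing the proof of the lemma.
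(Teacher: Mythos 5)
Your proof is correct and takes the same direct-verification route as the paper, which in fact omits the argument as ``straightforward based on the above construction''; you correctly isolate the one nontrivial point, namely that the pairwise disjointness of the exact cover is what prevents two facts $R^0(X^i_x,x)$ and $R^0(X^j_x,x)$ with $i\neq j$ from agreeing on $B$ while disagreeing on $A$. The only cosmetic quibble is that $x$ and $u$ both range over $\mathcal{U}$ rather than over disjoint naming schemes, but this never matters in your argument since values are only compared within the same attribute position, where your pool decomposition is accurate.
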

The proof of this lemma is straightforward based on the above construction.
\eat{\begin{proof}
		In order for two facts to be inconsistent, they  should agree (disagree) on $A$ but disagree (agree) on $B$.
		Obviously, a fact of type $R^0(u,f_u)$ for some $u\in \mathcal{U}$, is consistent with the fact $R^0(X_x,x)$ and $R^0(X_x,X_x)$ for any $X\in\mathcal{X},x\in X$.
		Moreover the facts $R^0(X_x,x)$ where $X\in\{X^1,\ldots,X^l\},x\in X$ and $R^0(Y_y,Y_y)$ where $X\in\mathcal{X} \setminus \{X^1,\ldots,X^l\},x\in X$ are consistent.
		Hence, each two facts in $K$ are consistent. That implies that $K$ is consistent.
	\end{proof}}
	\begin{lemma}
		\label{pareto:l:K optimal}
		$K$ does not have a Pareto improvement.
	\end{lemma}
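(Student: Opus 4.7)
My plan is to show that for every fact $f\in I\setminus K$ there is some $g\in K$ that conflicts with $f$ in $\graph_{\scs^0}^I$ with $f\not\succ g$. This suffices, because a Pareto improvement $K'$ of $K$ must contain a fact $f\in K'\setminus K$ satisfying $f\succ g$ for every $g\in K\setminus K'$, and consistency of $K'$ forces every $\graph_{\scs^0}^I$-neighbor of $f$ in $K$ into $K\setminus K'$; a single un-dominated neighbor of $f$ in $K$ thus eliminates $f$ as a possible witness to a Pareto improvement.

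The proof then reduces to a case analysis on the type of $f\in I\setminus K$: the types in question are $(i)$, $(iii)$, $(iv)$ with $X\in\{X^1,\ldots,X^l\}$, $(v)$ with $X\notin\{X^1,\ldots,X^l\}$, and $(vi)$. For each case I would exhibit an explicit $K$-witness. A type-$(iii)$ fact $R^0(X_x,f_u)$ is neutralized by its $B$-matching counterpart $R^0(u,f_u)\in K$, since no listed priority places a type-$(iii)$ fact above a type-$(ii)$ fact. A type-$(iv)$ fact $R^0(X_x,X_x)$ with $X$ in the cover is $A$-matched by $R^0(X_x,x)\in K$, and the only priority leaving type $(iv)$ (rule 2) points to type $(iii)$, not to type $(v)$. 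A type-$(v)$ fact $R^0(X_x,x)$ with $X$ outside the cover is $A$-matched by $R^0(X_x,X_x)\in K$, and rules 2 and 4 chain type $(v)$ only strictly below type $(iv)$. A type-$(vi)$ fact $R^0(X_{x_{i+1}},x_i)$ is $A$-matched by whichever of $R^0(X_{x_{i+1}},x_{i+1})$ or $R^0(X_{x_{i+1}},X_{x_{i+1}})$ belongs to $K$ (depending on whether $X$ is in the cover), and no rule sends a type-$(vi)$ fact to such a matching fact.

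The main obstacle will be the type-$(i)$ case $f=R^0(u,u)$, since this fact is $\succ$-maximal in $I$ and already dominates its type-$(ii)$ conflict $R^0(u,f_u)\in K$ via rule 1; any $K$-witness must therefore be extracted from the type-$(v)$ facts. The exact-cover hypothesis is indispensable at exactly this point: it guarantees that every $u\in\mathcal{U}$ lies in a unique $X^j\in\{X^1,\ldots,X^l\}$, placing the type-$(v)$ fact $R^0(X^j_u,u)$ into $K$ as a $B$-neighbor of $R^0(u,u)$. I would then return to the listed priority rules and verify that none of them asserts $R^0(u,u)\succ R^0(X^j_u,u)$ for this specific type-$(v)$ witness (as opposed to the type-$(vi)$ instances of the third rule, which apply to facts of the form $R^0(X_{x_{i+1}},u)$ with $u=x_i$), so $g=R^0(X^j_u,u)$ serves as the un-dominated witness. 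This is precisely the step where the $\XC$ solution concretely pins $K$ down as Pareto-optimal, and a careful reading of the third priority rule is what I expect to be the most delicate point.
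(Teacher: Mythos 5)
Your proposal is correct and follows the paper's strategy: reduce Pareto-optimality of $K$ to exhibiting, for each $f\in I\setminus K$, a conflicting fact $f'\in K$ with $f\not\succ f'$, and then run a case analysis on the type of $f$. Two points of comparison are worth recording. First, in case $(vi)$ the paper chooses the $B$-matched witness $R^0(Y_{x_i},x_i)$, where $Y$ is the cover set containing $x_i$; as literally stated this choice is unsafe when $Y=X$ (i.e., when $X$ itself belongs to the exact cover), since the fifth priority rule then yields $R^0(X_{x_{i+1}},x_i)\succ R^0(X_{x_i},x_i)=R^0(Y_{x_i},x_i)$. Your $A$-matched witness --- $R^0(X_{x_{i+1}},x_{i+1})$ or $R^0(X_{x_{i+1}},X_{x_{i+1}})$ according to whether $X$ is in the cover --- sidesteps this and works uniformly. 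Second, you correctly flag the delicate point in case $(i)$: the third priority rule, read with its quantification taken literally, would also assert $R^0(u,u)\succ R^0(X_u,u)$ and destroy the type-$(v)$ witness; the reduction only works under the reading (on which the rest of the paper's argument, e.g.\ Lemma~\ref{pareto:l: (u,f_u) in K, (Xu,u) in K}, also relies) that this rule targets the type-$(vi)$ facts $R^0(X_{x_{i+1}},x_i)$. With that reading fixed, your witnesses in the remaining cases coincide with the paper's and all the required checks go through.
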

	\def\ititle#1{\textbf{#1:}\,\,}
	
	\begin{proof}
		It suffices to show that for all $f$ in $I\setminus K$, there exists $f\p$ in $K$ such that $\{f,f\p\}$ is inconsistent (w.r.t. $\Delta^0$) and $f\not \succ f\p$.
		For each $f\in I \setminus K$ we choose $f\p \in K$ such that the conditions hold.
		We divide to different cases according to the type of $f$.
		\begin{itemize}
			\item \ititle{$f$ is of type $(i)$} That is, there exists an element
			$u$ in $\mathcal{U}$ such that $f={R^0}(u,u)$.  Since the collection
			$\{X^1,\ldots, X^l\}$ is a cover of $\mathcal{U}$, there exists $X$
			in $\{X^1,\ldots, X^l\}$ such that $u\in X$. Hence,
			$R^0(X_u,u)\in K$ and we choose $f\p = R^0(X_u,u)$.
			\item \ititle{$f$ is of type $(ii)$} This is impossible, since $K$
			contains all the facts of this type.
			\item
			\ititle{$f$ is of type $(iii)$} 
			That is, there exists $X\in \mathcal{X},$ $x\in X $ and $u\in \mathcal{U}$ such that $f={R^0}(X_x,f_u)$. 
			We choose $f\p ={R^0}(u,f_u)$. 
			\item
			\ititle{$f$ is of type $(iv)$} 
			That is, there exists a set $X \in \mathcal{X}$ and $x\in X$ such that $f=R^0(X_x,X_x)$.
			Since $f\not \in K$, it holds that $X\in \{X^1,\ldots,X^l\}$.
			Hence $R^0(X_x,x)\in K$ and
			we choose $f\p = R^0(X_x,x)$.
			\item
			\ititle{$f$ is of type $(v)$} 
			That is, there exists a set $X \in \mathcal{X}$ and $x\in X$ such that $f=R^0(X_x,x)$.
			Since $f\not \in K$, it holds that $X\not \in \{X^1,\ldots,X^l\}$. Hence $ R^0(X_x,X_x)\in K$ and
			we choose $f\p = R^0(X_x,X_x)$.
			\item \ititle{$f$ is of type $(vi)$} That is, there exists a set
			$X \in \mathcal{X}$ where $X=\{x_0,\ldots,x_{n-1}\}$ and
			$0\le i \le n-1$ such that $f=R^0(X_{x_{i+1}},x_i)$.  Since the
			collection $\{X^1,\ldots, X^l\}$ is a cover of $\mathcal{U}$, there
			exists some $Y$ in $\{X^1,\ldots, X^l\}$ such that $x_i\in Y$.
			Hence, we get that $R^0(Y_{x_i},x_i)\in K$ and we choose
			$f\p = R^0(Y_{x_i},x_i)$.
		\end{itemize}
		It holds that in all of the above cases $\{f,f\p\}$ is inconsistent and $f \not \succ f\p$.
	\end{proof}
	
	Lemmas~\ref{pareto:l:K consistent} and~\ref{pareto:l:K optimal} imply
	the following.
	\begin{lemma}
		$K$ is a p-repair of $I$.
	\end{lemma}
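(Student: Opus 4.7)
The plan is to observe that the lemma is an immediate consequence of the two preceding lemmas. Specifically, Lemma~\ref{pareto:l:K consistent} establishes that $K$ is a consistent subinstance of $I$, and Lemma~\ref{pareto:l:K optimal} establishes that $K$ admits no Pareto improvement. By the definition of a p-repair (Definition of p/g/c-repair), these two properties together suffice to conclude that $K\in\prep{D}$, where $D=(I,\graph_{\scs^0}^I,\succ)$.

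The only subtlety worth spelling out is that, per Proposition~\ref{prop:containments}, every p-repair is in particular a repair, i.e., a \emph{maximal} consistent subinstance; one might worry this requires a separate verification. I would dispatch this concern in one sentence: if $K$ were not maximal, then some $f\in I\setminus K$ would satisfy $K\cup\{f\}$ is consistent, and then $K\cup\{f\}$ would be a Pareto improvement of $K$, since the set $K\setminus(K\cup\{f\})$ is empty and the condition on the preferred fact $f$ is vacuously satisfied. This contradicts Lemma~\ref{pareto:l:K optimal}, so maximality follows at no extra cost.

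Hence the full proof is a two-line citation of the two preceding lemmas, with a parenthetical remark about maximality. There is no real obstacle here; the substantive work has already been carried out in Lemmas~\ref{pareto:l:K consistent} and~\ref{pareto:l:K optimal}. After stating this lemma, the argument for the ``if'' direction of the reduction is essentially complete: combined with the fact (immediate from the construction of $K$, and specifically items~(1)--(3)) that $K\neq\Jz$, we obtain that $I$ has at least two distinct p-repairs whenever $(\mathcal{X},\mathcal{U})$ admits an exact cover, which is exactly what the ``if'' direction demands.
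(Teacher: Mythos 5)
Your proposal is correct and matches the paper, which likewise derives this lemma immediately from the two preceding lemmas (consistency and absence of a Pareto improvement), exactly as the definition of a p-repair requires. Your parenthetical observation that maximality comes for free—since a non-maximal consistent $K$ would admit the vacuous Pareto improvement $K\cup\{f\}$—is a harmless and correct addition that the paper leaves implicit.
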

	Given a solution to $\XC$, we constructed a p-repair $K$ different from the above c-repair. This completes the ``if'' direction.

	\begin{figure}[t]
		\centering
		\input{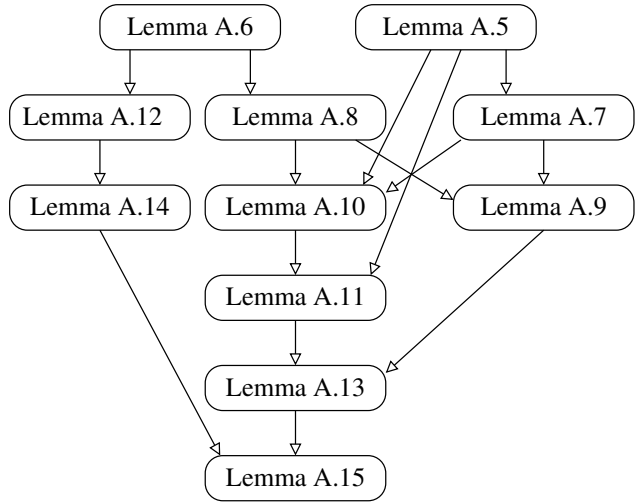}
		\caption{\label{fig:lemmas-onlyif}Dependencies between the lemmas in the proof of the ``only if'' direction of p-categoricity$\angs{\scs^0}$}
	\end{figure}

	\partitle{The ``only if'' direction}
	We show that if $I$ has a p-repair different from $\Jz$ (from
	Lemma~\ref{pareto:l:greedy}) then there is a solution to $\XC$.  The
	proof of this direction consists of several lemmas, and the
	dependencies between them are described in
	Figure~\ref{fig:lemmas-onlyif}. For example, the proof of
	Lemma~\ref{pareto:l: all (u,u) together} is based on
	Lemmas~\ref{pareto:l:notmiddle R2(Xx,f_u)},~\ref{pareto:l: (u,u) not
		in K, (u,fu) in K} and~\ref{pareto:l: (u,f_u) in K, (Xu,u) in K}.
	
	Let $K$ be a p-repair different from $\Jz$.  In the next two lemmas we
	show that facts of types $(iii)$ and $(vi)$ are not in $K$.
	\begin{lemma}
		\label{pareto:l:notmiddle R2(Xx,f_u)}
		For all $X\in \mathcal{X}$, $x\in X$ and $u\in \mathcal{U}$, it holds that
		${R^0}(\xx,f_u) \not \in K$.
	\end{lemma}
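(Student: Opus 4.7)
The plan is to argue by contradiction: supposing some fact $R^0(X_x, f_u)$ of type $(iii)$ lies in $K$, I will exhibit a Pareto improvement of $K$, contradicting the assumption that $K$ is a p-repair. The obvious candidate to insert is the type $(iv)$ fact $g^\star := R^0(X_x, X_x)$, since the priority relation gives $g^\star \succ R^0(X_x, f_v)$ for every $v$. The only work is to verify that the resulting swap is indeed consistent and indeed a Pareto improvement.

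First I would enumerate, once and for all, the facts of $I$ that conflict with $g^\star$ under $\Delta^0 = \{A\rightarrow B, B\rightarrow A\}$. Since the constant $X_x$ appears as a second coordinate \emph{only} in $g^\star$ itself, every conflict of $g^\star$ must share its first coordinate $X_x$. Scanning the six fact types, these conflicts are exactly: the type $(iii)$ facts $R^0(X_x, f_v)$ for $v\in\mathcal{U}$, the type $(v)$ fact $R^0(X_x, x)$, and the single type $(vi)$ fact $R^0(X_x, x_{j-1})$ where $x = x_j$ in the cyclic ordering of $X$ (and where we identify $x_{-1}$ with $x_{n-1}$).

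Next I would use the standing assumption $R^0(X_x, f_u)\in K$ together with consistency of $K$ to rule out the last two of these conflicts being in $K$: both $R^0(X_x, x)$ and $R^0(X_x, x_{j-1})$ also share the first coordinate $X_x$ with $R^0(X_x, f_u)$, so they violate $A\rightarrow B$ together with it; since $K$ is consistent, neither can belong to $K$. Consequently, the only facts in $K$ that conflict with $g^\star$ are of the form $R^0(X_x, f_v)$.

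Finally I would define
\[
K' := \bigl(K \setminus \{R^0(X_x, f_v) : v\in\mathcal{U},\ R^0(X_x, f_v)\in K\}\bigr) \cup \{g^\star\},
\]
and check that $K'$ is consistent (by the previous paragraph, no conflict of $g^\star$ remains) and that $g^\star\in K'\setminus K$ dominates every fact of $K\setminus K'$ via the priority $R^0(X_x, X_x)\succ R^0(X_x, f_v)$. This shows $K'$ is a Pareto improvement of $K$, contradicting that $K$ is a p-repair. The main obstacle is the exhaustive enumeration of conflicts of $g^\star$ in the second paragraph; this is routine but must be done carefully to ensure no additional fact in $K$ sabotages the swap.
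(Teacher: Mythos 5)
Your proposal is correct and follows essentially the same route as the paper's proof: both hinge on the facts that $R^0(X_x,X_x)\succ R^0(X_x,f_v)$ for every $v$, that the remaining first-coordinate conflicts of $R^0(X_x,X_x)$ cannot lie in $K$ because they clash with $R^0(X_x,f_u)$, and that no fact other than $R^0(X_x,X_x)$ carries $X_x$ in its second coordinate. The paper phrases this as the non-existence of a ``blocking'' fact while you construct the Pareto improvement $K'$ explicitly, but the content is the same.
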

	\begin{proof}
		Let 
		${X}\in \mathcal{X}$, ${x}\in X$ and ${u}\in \mathcal{U}$.
		Assume, by way of contradiction, that 
		${R^0}(\xx,\fu) \in K$.
		Since $K$ is consistent and ${R^0}(\xx,\xx)$
		is inconsistent with ${R^0}(\xx,\fu)$, we obtain that
		${R^0}(\xx,\xx)\not \in K$.  
		Since ${R^0}(\xx,\xx) \succ {R^0}(\xx,\fu)$, it holds that
		$K$ must contain a fact $f$ that is inconsistent with ${R^0}(\xx,\xx)$ (but is consistent with ${R^0}(\xx,\fu)$). 
		Therefore, $f$ must agree with ${R^0}(\xx,\xx)$ 
		on $B$. 
		That leads to a contradiction since there is no such fact.
	\end{proof}
	
	\begin{lemma}
		\label{pareto:l:notmiddle R2(Xx,u)}
		For all $X\in \mathcal{X}$, $x\in X$ and $u\in \mathcal{U}$ where $u\ne x$, it holds that
		${R^0}(\xx,u)\not \in K$.
	\end{lemma}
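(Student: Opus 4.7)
The plan is to argue by contradiction along the same lines as Lemma~\ref{pareto:l:notmiddle R2(Xx,f_u)}, this time exploiting the priority $R^0(u,u)\succ R^0(\xx,u)$. First observe that the only facts of $I$ of the form $R^0(\xx,u)$ with $u\in\mathcal{U}$ and $u\ne x$ are the type-$(vi)$ facts $R^0(X_{x_{i+1}},x_i)$; hence the content of the lemma is that no type-$(vi)$ fact belongs to $K$.

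Suppose for contradiction that $R^0(\xx,u)\in K$ for some such $X$, $x$, $u$. Since $R^0(u,u)$ and $R^0(\xx,u)$ share their $B$-value but disagree on $A$, they are inconsistent, so $R^0(u,u)\notin K$. I would then form $K'$ from $K$ by adding $R^0(u,u)$ and deleting every fact of $K$ that conflicts with it, and argue that $K'$ is a Pareto improvement of $K$, contradicting the p-repair property of $K$.

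To justify the construction, I would run a short case analysis over the six fact types to identify the facts of $I$ that conflict with $R^0(u,u)$. Each such fact is either $R^0(u,\fu)$ (shares the $A$-value) or has the form $R^0(Y_y,u)$ for some $Y\in\mathcal{X}$ with $y\in Y$ (shares the $B$-value), and each is dominated by $R^0(u,u)$ via the priority bullets $R^0(u,u)\succ R^0(u,\fu)$ and $R^0(u,u)\succ R^0(X_x,u)$. Thus $R^0(u,u)\in K'\setminus K$ is preferred to every fact in $K\setminus K'$, and $K'$ is consistent by construction (we removed exactly what clashes with the newly added fact, and the remainder of $K$ was already consistent). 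So $K'$ is indeed a Pareto improvement, yielding the contradiction. The only point requiring attention is the conflict case analysis; once it is in place, the two priority bullets apply uniformly, and there is no further obstacle.
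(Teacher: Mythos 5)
Your proposal is correct and follows essentially the same route as the paper: assume $R^0(X_x,u)\in K$, note $R^0(u,u)\notin K$, and swap $R^0(u,u)$ in for its conflict-neighbors inside $K$ (which, by consistency of $K$, can only be $R^0(X_x,u)$ and possibly $R^0(u,f_u)$, both dominated by $R^0(u,u)$), yielding a Pareto improvement and hence a contradiction. The paper phrases this by pinpointing the would-be blocking fact rather than removing all conflicting facts uniformly, but the construction and the case analysis are the same.
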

	\begin{proof}
		Let 
		${X}\in \mathcal{X}$, ${x}\in X$ and ${u}\in \mathcal{U}$ where $u\ne x$.
		Assume that ${R^0}(\xx,u) \in K$. 
		Since \allXxu, we have that ${R^0}(u,u) \succ {R^0}(\xx,u)$, the p-repair $K$ must contain a fact $f$ that is inconsistent with ${R^0}(u,u)$ (but is consistent with ${R^0}(\xx,u)$). 
		Since $K$ is consistent, $f$ must agree with ${R^0}(u,u)$ on $A$. 
		Thus, $f$ must be the fact ${R^0}(u,f_u)$.
		Replacing both facts ${R^0}(\xx,u)$ and ${R^0}(u,f_u)$ with ${R^0}(u,u)$ results in a Pareto improvement of $K$ which leads to a contradiction since $K$ is a p-repair. 
	\end{proof}

	We establish a connection between facts of type $(i)$ and $(ii)$.
	\begin{lemma}
		\label{pareto:l: (u,u) not in K, (u,fu) in K}
		Let $u\in \mathcal{U}$. 
		If the fact ${R^0}(u,u)\not \in K$ then  ${R^0}(u,f_u)\in K$.
	\end{lemma}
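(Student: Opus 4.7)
The plan is to exploit the fact that a p-repair is maximal among consistent subinstances (as guaranteed by Proposition~\ref{prop:containments}), so if $R^0(u,u)\notin K$ then some fact $f\in K$ must conflict with $R^0(u,u)$. Since conflicts under $\Delta^0=\set{A\rightarrow B,B\rightarrow A}$ amount to agreeing on exactly one of the two attributes, I would first enumerate all facts of $I$ that can conflict with $R^0(u,u)$: these are $R^0(u,f_u)$ (agreeing on $A$), the type-$(v)$ facts $R^0(X_u,u)$ for every $X\in\mathcal{X}$ with $u\in X$ (agreeing on $B$), and type-$(vi)$ facts $R^0(X_y,u)$ where $y$ is the successor of $u$ inside some $X$ containing $u$ (again agreeing on $B$).

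Next I would eliminate two of these three families. The type-$(vi)$ conflicts have the form $R^0(X_y,u)$ with $y\neq u$, so Lemma~\ref{pareto:l:notmiddle R2(Xx,u)} rules them out of $K$ immediately. To eliminate the type-$(v)$ conflicts, I would suppose for contradiction that some $f=R^0(X_u,u)$ lies in $K$ and exhibit a Pareto improvement of $K$, namely
\[K' \eqdef (K\setminus\set{f})\cup\set{R^0(u,u)}\,.\]
Consistency of $K'$ reduces to checking that the newly inserted $R^0(u,u)$ has no surviving conflict in $K$: type-$(vi)$ conflicts have already been excluded, any second fact $R^0(X'_u,u)$ with $X'\neq X$ in $K$ would itself conflict with $f$ (same $B$-value, different $A$-value) and so cannot co-occur with $f$ in the consistent set $K$, and $R^0(u,f_u)$ is assumed absent from $K$ in the case under analysis (or, equivalently, we may split the argument so that the lemma's conclusion is obtained directly in the remaining case). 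The priority $R^0(u,u)\succ R^0(X_u,u)=f$ from the third bullet of the priority relation then certifies that $K'$ is a Pareto improvement of $K$, contradicting p-optimality.

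Having ruled out every type-$(v)$ and type-$(vi)$ conflict, the witnessing fact $f$ must be the unique remaining candidate, $R^0(u,f_u)$, which therefore belongs to $K$ as claimed. The main obstacle is the bookkeeping in the second step: one must be careful that the only facts in $K$ that can conflict with the replacement $R^0(u,u)$ are accounted for, which is why Lemma~\ref{pareto:l:notmiddle R2(Xx,u)} is invoked precisely here and why the pairwise $B$-conflict between different $R^0(X_u,u)$ and $R^0(X'_u,u)$ is used to rule out multiple type-$(v)$ facts in $K$ simultaneously. Everything else is a direct case analysis on the structure of $I$.
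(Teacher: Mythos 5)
Your proof is correct, but it pivots on a different fact than the paper does. The paper's proof applies maximality of $K$ to $R^0(u,f_u)$ itself: assuming $R^0(u,f_u)\notin K$, maximality forces $K$ to contain a conflicting fact, and the only candidates are $R^0(u,u)$ (excluded by hypothesis) and the type-$(iii)$ facts $R^0(X_x,f_u)$ (excluded by Lemma~\ref{pareto:l:notmiddle R2(Xx,f_u)}) --- a two-line contradiction. You instead apply maximality to $R^0(u,u)$, whose conflict neighborhood is larger (types $(ii)$, $(v)$, $(vi)$), so you need more machinery: Lemma~\ref{pareto:l:notmiddle R2(Xx,u)} to kill type $(vi)$, and a bespoke Pareto-improvement swap $(K\setminus\set{R^0(X_u,u)})\cup\set{R^0(u,u)}$ to kill type $(v)$ under the case assumption $R^0(u,f_u)\notin K$. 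That swap is sound: the priority gives $R^0(u,u)\succ R^0(X_u,u)$, and your consistency bookkeeping for $K'$ is complete (distinct type-$(v)$ facts with second coordinate $u$ mutually conflict, type-$(vi)$ neighbors are already excluded, and $R^0(u,f_u)$ is absent by case assumption); it is in fact essentially the same swap the paper performs inside its proof of Lemma~\ref{pareto:l:notmiddle R2(Xx,u)}. The paper's choice of pivot buys brevity and avoids any dependence on Lemma~\ref{pareto:l:notmiddle R2(Xx,u)}; your route buys nothing extra but is valid. One pedantic caveat: when $|X|=1$ the type-$(vi)$ fact with second coordinate $u$ coincides with the type-$(v)$ fact $R^0(X_u,u)$ and falls outside the scope of Lemma~\ref{pareto:l:notmiddle R2(Xx,u)} (which requires $x\neq u$), but it is then absorbed by your type-$(v)$ case, so nothing breaks.
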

	\begin{proof}
		Assume ${R^0}(u,u)\not \in K$. 
		Assume, by way of contradiction, that ${R^0}(u,f_u)\not \in K$.
		Since $K$ is maximal, it must contain a fact $f$ that is inconsistent with ${R^0}(u,f_u)$. 
		If $f$ agrees with ${R^0}(u,f_u)$ on $A$ then it can only be of type $(i)$. That is, the only possibility is that $f={R^0}(u,u)$ which leads to a contradiction.
		If $f$ agrees with ${R^0}(u,f_u)$ on $B$, it can only be of type $(iii)$ which leads to a contradiction since by Lemma~\ref{pareto:l:notmiddle R2(Xx,f_u)}, such a fact cannot be in a p-repair.
	\end{proof}
	Moreover, we establish a connection  between facts of type $(ii)$ and $(v)$.
	\begin{lemma}
		\label{pareto:l: (u,f_u) in K, (Xu,u) in K}
		Let $u\in \mathcal{U}$. If
		${R^0}(u,f_u)\in K$
		then there exists $X\in \mathcal{X}$ such that $u\in X$ and ${R^0}(X_u,u)\in K$.
	\end{lemma}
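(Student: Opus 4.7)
My plan is to prove the contrapositive: assuming $R^0(u,f_u)\in K$, I will suppose that no $X\in\mathcal{X}$ with $u\in X$ satisfies $R^0(X_u,u)\in K$, and derive a contradiction by exhibiting a Pareto improvement of $K$. The natural candidate for the improvement is the subinstance $J := (K\setminus\{R^0(u,f_u)\})\cup\{R^0(u,u)\}$, since the priority relation already gives $R^0(u,u)\succ R^0(u,f_u)$.

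First I would observe that $R^0(u,u)\notin K$, because $R^0(u,u)$ and $R^0(u,f_u)$ agree on $A$ but disagree on $B$, hence conflict, and $K$ is consistent. Next I would argue that $J$ is consistent. The only facts of $I$ conflicting with $R^0(u,u)$ are those agreeing with it on exactly one of $A,B$: on $A$ the only such fact is $R^0(u,f_u)$, which has been removed; on $B$ the conflicting facts have the form $R^0(X_x,u)$ for various $X,x$. A fact of this shape exists in $I$ only as a type $(v)$ fact $R^0(X_u,u)$ with $u\in X$, or as a type $(vi)$ fact $R^0(X_{x_{i+1}},x_i)$ with $x_i=u$ and $x_{i+1}\neq u$. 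The type $(v)$ ones are ruled out by the contradiction hypothesis, and the type $(vi)$ ones by Lemma~\ref{pareto:l:notmiddle R2(Xx,u)} (applied with $u\neq x_{i+1}$). Thus no conflict remains with the rest of $K$.

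With $J$ consistent, the definition of Pareto improvement applies directly: $J\setminus K=\{R^0(u,u)\}$, $K\setminus J=\{R^0(u,f_u)\}$, and $R^0(u,u)\succ R^0(u,f_u)$ is among the priority relations listed in the construction. Therefore $J$ is a Pareto improvement of $K$, contradicting the assumption that $K$ is a p-repair. This forces the existence of the desired $X$.

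The main obstacle is the careful case-analysis of what conflicts with $R^0(u,u)$ on attribute $B$; the construction contains both type $(v)$ and type $(vi)$ facts that could potentially interfere, and the argument only goes through because the earlier Lemma~\ref{pareto:l:notmiddle R2(Xx,u)} excludes the type $(vi)$ witnesses from any p-repair. Once that bookkeeping is in place, the Pareto-improvement construction is immediate.
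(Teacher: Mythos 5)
Your proof is correct and follows essentially the same route as the paper's: both arguments observe that $R^0(u,u)\notin K$, note that the absence of any $K$-fact conflicting with $R^0(u,u)$ (other than $R^0(u,f_u)$) would yield the Pareto improvement $(K\setminus\{R^0(u,f_u)\})\cup\{R^0(u,u)\}$, and then use Lemma~\ref{pareto:l:notmiddle R2(Xx,u)} to rule out the type~$(vi)$ candidates, leaving a type~$(v)$ fact $R^0(X_u,u)$ as the only possibility. The paper phrases this directly (``$K$ must contain a blocking fact $f$, and $f$ must be of type $(v)$'') while you phrase it by contradiction, but the content is identical.
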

	\begin{proof}
		Assume ${R^0}(u,f_u)\in K$.
		Since $K$ is consistent, ${R^0}(u,u)\not \in K$.
		It follows from
		${R^0}(u,u)\succ {R^0}(u,f_u)$ that
		$K$ must contain a fact $f$ that is inconsistent with  ${R^0}(u,u)$.
		Since $K$ is consistent, $f$ is consistent with ${R^0}(u,f_u)$.
		Therefore, $f$ must agree with ${R^0}(u,u)$ on $B$.
		The possible types for $f$ are $(vi)$ and $(v)$.
		By Lemme~\ref{pareto:l:notmiddle R2(Xx,u)}, $f$ is not of type $(vi)$.
		Thus, $f$ must be of type $(v)$.
		Therefore, ${R^0}(X_u,u)\in K$ for  $X\in \mathcal{X}$ such that $u\in X$ (there exists such $X$ since the union of sets of $\mathcal{X}$ is $\mathcal{U}$).
	\end{proof}
	We conclude the following connection  between facts of types $(i)$ and $(v)$.
	\begin{lemma}
		\label{pareto:l: (X_x,x) not in K, (x,x) in K}
		Let $u\in \mathcal{U}$. If
		${R^0}(X_u,u) \not \in K$ for all $X \in \mathcal{X}$ such that $u\in X$, then ${R^0}(u,u)\in K$.
	\end{lemma}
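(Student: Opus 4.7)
The plan is to prove this lemma by contrapositive, chaining together the two immediately preceding lemmas. Concretely, I would assume $R^0(u,u) \notin K$ and then derive that some $R^0(X_u,u)$ must lie in $K$, contradicting the hypothesis.

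First, I would invoke Lemma~\ref{pareto:l: (u,u) not in K, (u,fu) in K}: from the assumption $R^0(u,u) \notin K$ it follows that $R^0(u, f_u) \in K$. Next, applying Lemma~\ref{pareto:l: (u,f_u) in K, (Xu,u) in K} to the fact $R^0(u, f_u) \in K$ yields the existence of some $X \in \mathcal{X}$ with $u \in X$ such that $R^0(X_u, u) \in K$. This directly contradicts the assumed hypothesis that $R^0(X_u, u) \notin K$ for every $X \in \mathcal{X}$ with $u \in X$. Therefore $R^0(u,u) \in K$, which is the desired conclusion.

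I do not expect any real obstacle here: the argument is a two-step composition of already-proved lemmas, with no new case analysis required. The only thing to be careful about is to correctly invoke the existence of \emph{some} $X \in \mathcal{X}$ containing $u$ (guaranteed because the sets of $\mathcal{X}$ cover $\mathcal{U}$), and to note that the contradiction follows for that particular $X$, contradicting the universal statement in the hypothesis.
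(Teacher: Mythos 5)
Your proposal is correct and is essentially identical to the paper's own proof: assume $R^0(u,u)\notin K$, apply Lemma~\ref{pareto:l: (u,u) not in K, (u,fu) in K} to get $R^0(u,f_u)\in K$, then Lemma~\ref{pareto:l: (u,f_u) in K, (Xu,u) in K} to obtain some $X$ with $u\in X$ and $R^0(X_u,u)\in K$, contradicting the hypothesis. No gaps.
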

	\begin{proof}
		Assume ${R^0}(X_u,u) \not \in K$ for all $X \in \mathcal{X}$.
		Assume, by way of contradiction, that ${R^0}(u,u)\not \in K$.  By
		Lemma~\ref{pareto:l: (u,u) not in K, (u,fu) in K},
		${R^0}(u,f_u) \in K$.  By Lemma~\ref{pareto:l: (u,f_u) in K, (Xu,u)
			in K}, there exists $X\in \mathcal{X}$ such that $u\in X$ and
		${R^0}(X_u,u) \in K$.  This is a contradiction.
	\end{proof}
	We state that either $R^0(u,u)\in K$ for all $u\in \mathcal{U}$ or
	none of the facts $R^0(u,u)$, where $u\in \mathcal{U}$, is in $K$.
	\begin{lemma}
		\label{pareto:l: all (u,u) together}
		Let $u_0 \in \mathcal{U}$.
		If ${R^0}(u_0,u_0) \in K$ then 
		${R^0}(u,u) \in K$ for all $u \in \mathcal{U}$.
	\end{lemma}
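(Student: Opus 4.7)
The plan is to prove the lemma by a direct Pareto-improvement argument. I argue, by contradiction, that if some $u \in \mathcal{U}$ satisfies ${R^0}(u, u) \notin K$, then $K$ admits a Pareto improvement, contradicting its status as a p-repair. (Incidentally, this argument never uses the hypothesis ${R^0}(u_0, u_0) \in K$; it actually proves the stronger fact that \emph{every} p-repair contains every ${R^0}(u, u)$. The stated hypothesis is therefore vacuously useful, but still a valid weakening.)

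First, I enumerate all facts in $K$ that conflict with ${R^0}(u, u)$. Because $\Delta^0 = \{A\rightarrow B,\ B\rightarrow A\}$, a fact conflicts with ${R^0}(u,u)$ iff it agrees with $(u,u)$ on exactly one coordinate; using the disjointness of the three families of constants $u$, $f_u$, and $X_x$, the only candidates are (a) the type-(ii) fact ${R^0}(u, f_u)$, (b) the type-(v) facts ${R^0}(X_u, u)$ with $X \ni u$, and (c) the type-(vi) facts ${R^0}(X_{x_{i+1}}, u)$ with $x_i = u$. The earlier lemma stating that ${R^0}(X_x, u) \notin K$ whenever $u \neq x$ rules out all of family (c) from $K$. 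Under the contradiction hypothesis ${R^0}(u, u) \notin K$, the two lemmas immediately preceding this one force ${R^0}(u, f_u) \in K$ and at least one ${R^0}(X_u, u) \in K$. Thus the set $C \subseteq K$ of conflicts of ${R^0}(u, u)$ is non-empty and consists only of facts of kinds (a) and (b).

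Second, I take $J := (K \setminus C) \cup \{{R^0}(u, u)\}$. By construction $J$ is consistent, $J \setminus K = \{{R^0}(u, u)\}$, and $K \setminus J = C$. The construction of $\succ$ supplies both ${R^0}(u, u) \succ {R^0}(u, f_u)$ and ${R^0}(u, u) \succ {R^0}(X_x, u)$ for all $X, x$, so ${R^0}(u, u)$ strictly dominates every element of $C$. Hence $J$ is a Pareto improvement of $K$, contradicting the hypothesis that $K$ is a p-repair.

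The only delicate step is verifying that the enumeration of conflicts is exhaustive and that every conflict is covered by an explicit $\succ$-rule. This is why the type-(vi) family must be discarded upfront via the preceding lemma: those are precisely the conflicts that, absent that lemma, might have been incomparable with ${R^0}(u,u)$ and spoiled the one-shot Pareto improvement. After that exclusion, the dominance checks for families (a) and (b) are direct consequences of the construction.
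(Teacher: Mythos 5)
There is a genuine gap, and it is fatal to the whole approach: your one-shot Pareto improvement does not exist, because ${R^0}(u,u)$ does \emph{not} dominate the type-$(v)$ conflicts ${R^0}(X_u,u)$. The priority rule ``${R^0}(u,u)\succ{R^0}(X_x,u)$'' is meant to apply only to the facts of type $(vi)$, i.e., to ${R^0}(X_x,u)$ with $x\neq u$; this is exactly how it is invoked in the lemma showing ${R^0}(X_x,u)\notin K$ for $u\neq x$ (which is stated only for $u\neq x$), and the type-$(v)$ facts ${R^0}(X_u,u)$ are deliberately left incomparable with ${R^0}(u,u)$. Your reading cannot be the intended one: if ${R^0}(u,u)$ dominated every one of its conflicts, then (as you yourself observe) every p-repair would contain every ${R^0}(u,u)$ and would therefore coincide with the canonical c-repair $J_0$, so the reduction from Exact Cover would answer ``unique p-repair'' on every input --- contradicting both the ``if'' direction of the reduction (where the p-repair $K$ built from an exact cover contains ${R^0}(u,f_u)$ and ${R^0}(X_u,u)$ rather than ${R^0}(u,u)$, and is explicitly shown to have no Pareto improvement) and the very next lemma in the appendix, which proves ${R^0}(u,u)\notin K$ for \emph{all} $u$. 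The fact that your argument never uses the hypothesis ${R^0}(u_0,u_0)\in K$ was the warning sign: the conclusion is false without that hypothesis, so any correct proof must use it.

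Concretely, once ${R^0}(u,u)\notin K$ forces ${R^0}(u,f_u)\in K$ and then some ${R^0}(U_u,u)\in K$ (your families (a) and (b)), the element ${R^0}(U_u,u)$ of your set $C$ is incomparable with ${R^0}(u,u)$, so $(K\setminus C)\cup\set{{R^0}(u,u)}$ is not a Pareto improvement and your contradiction evaporates. The paper's proof instead exploits the hypothesis at precisely this point: since ${R^0}(U_u,f_{u_0})\succ{R^0}(U_u,u)$ and ${R^0}(U_u,u)\in K$, the p-repair $K$ must contain a blocker that conflicts with ${R^0}(U_u,f_{u_0})$ and is not dominated by it; after eliminating all other candidates (via the two ``notmiddle'' lemmas and consistency with ${R^0}(U_u,u)$), the only possible blocker is ${R^0}(u_0,f_{u_0})$, which is inconsistent with the assumed ${R^0}(u_0,u_0)\in K$. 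That is the step your argument is missing, and it is the only place where $u_0$ plays a role.
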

	\begin{proof}
		Let $u_0 \in \mathcal{U}$ and assume ${R^0}(u_0,u_0)\in K$.  Assume,
		by way of contradiction, there exists $u\in \mathcal{U}$ such that
		$u \ne u_0$ and ${R^0}(u,u) \not \in K$.  By Lemma~\ref{pareto:l:
			(u,u) not in K, (u,fu) in K}, ${R^0}(u,f_u) \in K$.  Thus, by
		Lemma~\ref{pareto:l: (u,f_u) in K, (Xu,u) in K} we have that
		${R^0}(U_u,u) \in K$ for some $U\in \mathcal{X}$ such that $u\in U$.
		Note that since ${R^0}(U_u,f_{u_0}) \succ {R^0}(U_u,u)$, there must
		be a fact in $K$ that is inconsistent with ${R^0}(U_u,f_{u_0})$ and
		is consistent with ${R^0}(U_u,u)$.  The only such a fact is
		${R^0}(u_0,f_{u_0})$ (i.e, ${R^0}(u_0,f_{u_0})\in K$). This is a
		contradiction since ${R^0}(u_0,u_0)\in K$ and $K$ is consistent.
	\end{proof}
	
	We prove that all facts of type $(i)$ are not in $K$. 
	\begin{lemma}
		\label{pareto:l: (u,u) not in K}
		For all $u\in \mathcal{U}$, 
		${R^0}(u,u) \not \in K$.
	\end{lemma}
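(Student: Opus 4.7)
The plan is to derive a contradiction from the assumption that $R^0(u,u)\in K$ for some $u\in\mathcal{U}$. By Lemma~\ref{pareto:l: all (u,u) together}, such an assumption is equivalent to assuming that $R^0(u,u)\in K$ for \emph{every} $u\in\mathcal{U}$, and I would use this stronger form throughout. The goal is to show that this forces $K=J_0$, which contradicts the standing hypothesis that $K$ is a p-repair different from $J_0$.

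First, I would rule out every fact of types $(ii),(iii),(v)$ and $(vi)$ from $K$ one by one. For type $(ii)$: the fact $R^0(u,f_u)$ shares attribute $A$ with $R^0(u,u)\in K$, so by consistency it cannot lie in $K$. For type $(iii)$ and type $(vi)$: direct application of Lemmas~\ref{pareto:l:notmiddle R2(Xx,f_u)} and~\ref{pareto:l:notmiddle R2(Xx,u)}, respectively. The crucial new step is ruling out type $(v)$: the fact $R^0(X_x,x)$ agrees with $R^0(x,x)$ on attribute $B$ (since $x\in X\subseteq\mathcal{U}$, so $R^0(x,x)\in K$ by hypothesis), and the two facts disagree on $A$; hence $R^0(X_x,x)\notin K$ by the FD $B\rightarrow A$.

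Next I would use maximality of $K$ to pin down the remaining facts. For each $X\in\mathcal{X}$ and $x\in X$, the fact $R^0(X_x,X_x)$ conflicts only with facts sharing first coordinate $X_x$, namely the type $(iii)$, $(v)$, and $(vi)$ facts on that block, together with $R^0(X_x,X_x)$ itself; all of these have just been excluded from $K$. Therefore adding $R^0(X_x,X_x)$ to $K$ would preserve consistency, so maximality forces $R^0(X_x,X_x)\in K$. Combined with the assumption that all type $(i)$ facts are in $K$, this yields $K\supseteq J_0$, and the above exclusions yield $K\subseteq J_0$, hence $K=J_0$, the desired contradiction.

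I expect no serious obstacle: the argument is a clean propagation once we notice the interaction between the FD $B\rightarrow A$ and facts of type $(v)$. The only subtlety worth double-checking is that $x$, as an element of some $X\in\mathcal{X}$, is genuinely an element of $\mathcal{U}$ (so that $R^0(x,x)$ is a type $(i)$ fact in $I$), which follows from $\bigcup\mathcal{X}=\mathcal{U}$ being part of the input specification of $\XC$.
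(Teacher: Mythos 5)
Your proposal is correct and follows essentially the same route as the paper: assume $R^0(u,u)\in K$, upgrade via Lemma~\ref{pareto:l: all (u,u) together}, exclude types $(ii)$ and $(v)$ by consistency with the type-$(i)$ facts, exclude types $(iii)$ and $(vi)$ via Lemmas~\ref{pareto:l:notmiddle R2(Xx,f_u)} and~\ref{pareto:l:notmiddle R2(Xx,u)}, and then invoke maximality to force all type-$(iv)$ facts in, concluding $K=J_0$. The only difference is that you spell out the $B\rightarrow A$ argument for type $(v)$ and the "conflicts only within the $X_x$ block" observation explicitly, where the paper compresses these into "for a similar reason" and an appeal to maximality.
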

	\begin{proof}
		Let $u\in \mathcal{U}$ and assume, by way of contradiction, that
		${R^0}(u,u) \in K$.  By Lemma~\ref{pareto:l: all (u,u) together},
		for all $u\in \mathcal{U}$ ${R^0}(u,u)\in K$.  Since $K$ is
		consistent, it does not contain facts of the form ${R^0}(u,f_u)$ for
		all $u\in \mathcal{U}$ (since the fact ${R^0}(u,f_u)$ is
		inconsistent with ${R^0}(u,u)$).  Moreover, $K$ does not contain
		facts of the form ${R^0}(X_u,u)$ where $X\in \mathcal{X}$ and
		$u\in X$ (for a similar reason).  By Lemma~\ref{pareto:l:notmiddle
			R2(Xx,f_u)} (respectively, \ref{pareto:l:notmiddle R2(Xx,u)}), $K$
		does not contain facts of the form ${R^0}(X_x,f_u)$ (respectively,
		${R^0}(X_x,u)$) where $X\in \mathcal{X}$ and $u\in X$.  It holds
		that $K$ is maximal and thus it contains all of the facts of type
		$(iv)$.  Thus, we conclude that $K$ is exactly $\Jz$ which leads to
		a contradiction.
	\end{proof}
	
	We show that if a fact of type $(iv)$ is in $K$, then all of the facts of type $(iv)$ are in $K$.
	\begin{lemma}
		\label{pareto:l:whole set left column}
		Let $X\in \mathcal{X}$.  If ${R^0}(X_{x^\prime },x^\prime)\in K$ for
		some $x^\prime \in X$ then for all $x\in X$ we have
		${R^0}(X_{x},x)\in K$.
	\end{lemma}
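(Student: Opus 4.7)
I would prove the lemma by contradiction. Suppose $R^0(X_{x'}, x') \in K$ for some $x' \in X$, yet there is some $x \in X$ with $R^0(X_{x}, x) \notin K$. Write $X = \{x_0, \dots, x_{n-1}\}$ using the cyclic order used to define the type $(vi)$ facts and the associated priorities.

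The first step is to pin down the structure of $K$ on the ``column'' of $X$. I would argue that for every $i$, exactly one of the type $(iv)$ fact $R^0(X_{x_i}, X_{x_i})$ or the type $(v)$ fact $R^0(X_{x_i}, x_i)$ lies in $K$. The only facts of $I$ whose first coordinate is $X_{x_i}$ are of types $(iii)$, $(iv)$, $(v)$, $(vi)$; Lemmas~\ref{pareto:l:notmiddle R2(Xx,f_u)} and~\ref{pareto:l:notmiddle R2(Xx,u)} exclude the first and last from $K$, and the remaining two conflict (same first coordinate) so at most one can be in $K$; and if neither were in $K$, then $R^0(X_{x_i}, X_{x_i})$, whose only conflicts in $I$ are along the same first coordinate, would be forced into $K$ by maximality. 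Combined with the hypothesis, this yields a ``boundary'' index $j$ (interpreted modulo $n$) for which $R^0(X_{x_j}, x_j) \in K$ while $R^0(X_{x_{j+1}}, X_{x_{j+1}}) \in K$ (i.e., $R^0(X_{x_{j+1}}, x_{j+1}) \notin K$).

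The second step is to derive a contradiction by constructing a Pareto improvement of $K$. The natural instrument is the type $(vi)$ fact $R^0(X_{x_{j+1}}, x_j)$, which by construction satisfies $R^0(X_{x_{j+1}}, x_j) \succ R^0(X_{x_j}, x_j)$. The candidate improvement removes $R^0(X_{x_j}, x_j)$ and $R^0(X_{x_{j+1}}, X_{x_{j+1}})$ (the same-$B$ and same-$A$ conflicts of the inserted fact) and inserts $R^0(X_{x_{j+1}}, x_j)$; the result is consistent because all other facts of $I$ that could conflict with the insertion are already known (by earlier lemmas) to be outside $K$.

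The hard part will be the single-dominator requirement: no priority directly compares $R^0(X_{x_{j+1}}, x_j)$ with the type $(iv)$ fact $R^0(X_{x_{j+1}}, X_{x_{j+1}})$, and no fact of $I$ dominates the latter either, so the naive choice of dominator fails. To overcome this I would exploit the preceding chain of lemmas: since $R^0(x_{j+1}, x_{j+1}) \notin K$ by Lemma~\ref{pareto:l: (u,u) not in K}, Lemma~\ref{pareto:l: (u,u) not in K, (u,fu) in K} gives $R^0(x_{j+1}, f_{x_{j+1}}) \in K$, and Lemma~\ref{pareto:l: (u,f_u) in K, (Xu,u) in K} yields some $Y \neq X$ with $x_{j+1}\in Y$ and $R^0(Y_{x_{j+1}}, x_{j+1}) \in K$. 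Using $Y$'s own cycle, I would orchestrate a simultaneous swap: rotate $X$'s cycle at $x_j$ while redirecting the ``coverage'' of $x_{j+1}$ from $Y$ to $X$, arranging the removals and insertions so that the whole modification is carried by a single type $(vi)$ fact acting as dominator. Verifying that the combined swap stays consistent and that the chosen dominator strictly $\succ$ every removed fact is the technically delicate core of the argument.
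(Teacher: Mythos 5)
Your setup is sound and, up to the point where a contradiction must be derived, it matches the paper's argument: the paper proves the statement by induction around the cycle $x_0,\dots,x_{n-1}$ starting at $x'=x_0$, and its inductive step is exactly your ``boundary'' configuration, namely $R^0(X_{x_j},x_j)\in K$ while $R^0(X_{x_{j+1}},x_{j+1})\notin K$. Your preliminary observation that for each $i$ exactly one of $R^0(X_{x_i},X_{x_i})$ and $R^0(X_{x_i},x_i)$ lies in $K$ is correct (and makes explicit something the paper leaves implicit), and your identification of the type~$(vi)$ fact $R^0(X_{x_{j+1}},x_j)$ as the fact whose exclusion from $K$ must be justified is the same move the paper makes.

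The gap is in the final step, and it is fatal to the route you propose. A Pareto improvement $K'$ of $K$ requires a \emph{single} inserted fact $f\in K'\setminus K$ with $f\succ f'$ for \emph{every} $f'\in K\setminus K'$. As you yourself observe, no fact of $I$ has priority over a type~$(iv)$ fact: $R^0(X_{x_{j+1}},X_{x_{j+1}})$ occurs only on the left-hand side of the priority rules. Consequently \emph{no} Pareto improvement whatsoever can remove $R^0(X_{x_{j+1}},X_{x_{j+1}})$ from $K$ --- not the naive one, and not any ``simultaneous swap,'' however the removals and insertions are orchestrated, because the required dominator of that fact simply does not exist. If instead $K'$ retains $R^0(X_{x_{j+1}},X_{x_{j+1}})$, then consistency forbids inserting any other fact with first coordinate $X_{x_{j+1}}$, so you cannot ``redirect the coverage of $x_{j+1}$ from $Y$ to $X$'' either. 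Hence the delicate core you defer cannot be carried out by exhibiting a Pareto improvement that touches the type~$(iv)$ fact, and the proposal as written does not prove the lemma. For what it is worth, the paper's own induction step dispatches this very point tersely: after excluding types $(iii)$ and $(vi)$ it asserts that the fact of $K$ blocking $R^0(X_{x_{j+1}},x_j)$ ``must be $R^0(X_{x_{j+1}},x_{j+1})$,'' without explicitly arguing why it cannot be $R^0(X_{x_{j+1}},X_{x_{j+1}})$; you have correctly put your finger on the one case that needs a genuinely separate argument, but the resolution you sketch is not a viable one.
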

	\begin{proof}
		Let us denote $X=\{x_0,\ldots,x_{n-1}\}$ and assume without loss of
		generality that $x_0 = x^\prime$.  We prove by induction on $i$ that
		${R^0}(X_{x_i},x_i)\in K$ for all $i=0,\ldots,n-1$.  
		\partitle{Basis} Trivial.
		\partitle{Induction Step}
		Assume ${R^0}(X_{x_i},x_i)\in K$.
		Since ${R^0}(X_{x_{i+1}},x_i)\succ {R^0}(X_{x_i},x_i)$, there must
		be a fact $f\in K$ that is inconsistent with
		${R^0}(X_{x_{i+1}},x_i)$.  Since $K$ is consistent, $f$ must agree
		with ${R^0}(X_{x_{i+1}},x_i)$ on $A$. By
		Lemma~\ref{pareto:l:notmiddle R2(Xx,f_u)}, $f$ is not of the form
		${R^0}(X_{x_{i+1}},f_u)$ for $u\in \mathcal{U}$ and by
		Lemma~\ref{pareto:l:notmiddle R2(Xx,u)} $f$ is also not of the form
		${R^0}(X_{x_{i+1}},u)$ for $u\in \mathcal{U}$. Therefore, $f$ must
		be ${R^0}(X_{x_{i+1}},x_{i+1})$.
	\end{proof}
	
	Next, we show that $K$ encodes a solution for $\XC$.  Specifically, we
	contend that there is an exact cover of $\mathcal{U}$ by
	$\mathcal{X}$, namely $\mathcal{C}$, that is defined as follows:
	$C \in \mathcal{C}$ if and only if ${R^0}(C_c,c)\in K$ for some
	$c\in C$.
	\begin{lemma}
		\label{pareto:l: all u is covered}
		For all $u\in \mathcal{U}$ there exists
		$C \in \mathcal{C}$ such that $u\in C$.
	\end{lemma}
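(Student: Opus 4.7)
The plan is to chain together the lemmas already established. Fix an arbitrary $u \in \mathcal{U}$. By Lemma~\ref{pareto:l: (u,u) not in K} we know $R^0(u,u) \notin K$. Then Lemma~\ref{pareto:l: (u,u) not in K, (u,fu) in K} yields $R^0(u,f_u) \in K$, and applying Lemma~\ref{pareto:l: (u,f_u) in K, (Xu,u) in K} to this fact produces some $X \in \mathcal{X}$ with $u \in X$ such that $R^0(X_u, u) \in K$.

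At this point we have produced a candidate set $X$ that contains $u$, but we still need to argue that $X \in \mathcal{C}$, i.e., that $R^0(X_c, c) \in K$ for some $c \in X$ (which is what membership in $\mathcal{C}$ requires by definition). This follows immediately by taking $c = u$, since we just established $R^0(X_u, u) \in K$. Thus $X$ witnesses $u$ being covered by $\mathcal{C}$.

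I expect no real obstacle in this particular lemma: it is essentially a bookkeeping chain through Lemmas~\ref{pareto:l: (u,u) not in K}, \ref{pareto:l: (u,u) not in K, (u,fu) in K}, and \ref{pareto:l: (u,f_u) in K, (Xu,u) in K}. The lemma that does the heavy structural work, Lemma~\ref{pareto:l:whole set left column}, is not actually needed for covering (existence), but will be needed for the next step (showing the cover is \emph{exact}, i.e., that two sets in $\mathcal{C}$ cannot both contain $u$, or equivalently that distinct sets of $\mathcal{C}$ are pairwise disjoint). So the only care in the write-up is to invoke the definition of $\mathcal{C}$ explicitly and not conflate covering with exactness.
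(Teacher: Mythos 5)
Your proof is correct and is essentially the paper's argument read contrapositively: the paper assumes $u$ is uncovered, deduces ${R^0}(X_u,u)\notin K$ for every $X$ containing $u$, and invokes Lemma~\ref{pareto:l: (X_x,x) not in K, (x,x) in K} to contradict Lemma~\ref{pareto:l: (u,u) not in K}, whereas you run the same chain forward --- and since Lemma~\ref{pareto:l: (X_x,x) not in K, (x,x) in K} is itself proved by combining Lemmas~\ref{pareto:l: (u,u) not in K, (u,fu) in K} and~\ref{pareto:l: (u,f_u) in K, (Xu,u) in K}, the two write-ups have identical logical content. Your explicit check that membership in $\mathcal{C}$ is witnessed by $c=u$ is a correct and welcome detail.
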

	\begin{proof}
		Let $u\in \mathcal{U}$ and
		assume, by way of contradiction, that for all
		$C \in \mathcal{C}$, it holds that $u\not \in C$.
		By our assumption, $X\not \in \mathcal{C}$ for all $X\in \mathcal{X}$ such that $u\in X$.
		Note that since the union of the sets in $\mathcal{X}$ is $\mathcal{U}$, there exists a set $X\in \mathcal{X}$ such that $u\in X$.  
		Thus, the definition of the set $\mathcal{C}$ implies that
		for all $X\in \mathcal{X}$ such that $u\in X$, we have that ${R^0}(X_u,u)\not \in K$.
		By Lemma~\ref{pareto:l: (X_x,x) not in K, (x,x) in K}, ${R^0}(u,u) \in K$. 
		By Lemma~\ref{pareto:l: (u,u) not in K}, this is a contradiction. 
	\end{proof}
	\begin{lemma}
		\label{pareto:l: disjoint cover}
		For all 
		$C , C^\prime \in \mathcal{C}$ where $C \ne C^\prime$, it holds that 
		$C \cap C^\prime = \emptyset$.
	\end{lemma}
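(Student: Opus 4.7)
The plan is to derive a contradiction from the assumption that two distinct sets $C, C' \in \mathcal{C}$ share an element. The key ingredients are already in place: the definition of $\mathcal{C}$ and the ``whole set'' Lemma~\ref{pareto:l:whole set left column}, which propagates membership of one fact $R^0(X_x,x)$ to the full collection $\{R^0(X_x,x) : x \in X\}$.

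First, I would suppose for contradiction that there are $C, C' \in \mathcal{C}$ with $C \neq C'$ and some $u \in C \cap C'$. By the defining condition of $\mathcal{C}$, there is some $c \in C$ with $R^0(C_c, c) \in K$, and some $c' \in C'$ with $R^0(C'_{c'}, c') \in K$. Applying Lemma~\ref{pareto:l:whole set left column} to each, I conclude $R^0(C_x, x) \in K$ for every $x \in C$, and $R^0(C'_y, y) \in K$ for every $y \in C'$. In particular, since $u \in C \cap C'$, both $R^0(C_u, u)$ and $R^0(C'_u, u)$ belong to $K$.

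Now I observe that $C_u$ and $C'_u$ are distinct constants of the construction (they encode different pairs $(X,x)$), so these two facts agree on attribute $B$ (both have value $u$) but disagree on attribute $A$. This violates the FD $B \rightarrow A$ in $\depset^0$, contradicting the consistency of $K$. Hence $C \cap C' = \emptyset$.

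There is no real obstacle here: the proof is essentially immediate from the whole-set lemma plus the observation that $C_u \neq C'_u$ whenever $C \neq C'$, since the construction assigns distinct domain values $X_x$ to distinct pairs $(X,x)$. Combined with Lemma~\ref{pareto:l: all u is covered}, this will establish that $\mathcal{C}$ is a valid exact cover of $\mathcal{U}$ by $\mathcal{X}$, completing the ``only if'' direction and hence the full reduction from $\XC$ to the complement of p-categoricity$\angs{\scs^0}$.
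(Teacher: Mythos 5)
Your proof is correct and follows essentially the same route as the paper's: invoke the definition of $\mathcal{C}$, propagate via Lemma~\ref{pareto:l:whole set left column} to get both ${R^0}(C_u,u)$ and ${R^0}(C^\prime_u,u)$ in $K$, and contradict consistency. The only difference is that you spell out explicitly which FD ($B\rightarrow A$) is violated, which the paper leaves implicit.
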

	\begin{proof}
		Let $C , C^\prime \in \mathcal{C}$ where $C \ne C^\prime$.
		Assume, by way of contradiction, that there exists $u\in C \cap C^\prime$.
		By $\mathcal{C}$'s definition, 
		${R^0}(C_c,c)\in K$ for some $c\in C$. 
		Lemma~\ref{pareto:l:whole set left column}
		implies that for all $c\in C$, it holds that ${R^0}(C_c,c)\in K$.
		Similarly, for all $c^\prime \in C^\prime$, we have that ${R^0}({C^\prime}_{c^\prime},c^\prime)\in K$.
		Since $u\in C \cap C^\prime$, both facts  ${R^0}(C_u,u)$ and ${R^0}({C^\prime}_u,u)$ are in $K$. This is a contradiction to $K$'s consistency.
	\end{proof}
	Finally, we conclude the following.
	\begin{lemma}
		\label{pareto:l: exact cover}
		$ \mathcal{C}$ is an exact cover of $\mathcal{U}$
	\end{lemma}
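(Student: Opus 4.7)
The plan is to combine the two immediately preceding lemmas in a very short argument: an exact cover of $\mathcal{U}$ by $\mathcal{X}$ is by definition a subcollection of $\mathcal{X}$ whose elements are pairwise disjoint and whose union is $\mathcal{U}$, and the two lemmas deliver exactly these two properties for $\mathcal{C}$.

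First I would note that, by construction, $\mathcal{C}\subseteq\mathcal{X}$: a set $C$ enters $\mathcal{C}$ only when $R^0(C_c,c)\in K$ for some $c\in C$, and in particular $C$ must already be a member of $\mathcal{X}$. Then Lemma~\ref{pareto:l: all u is covered} states that every $u\in\mathcal{U}$ lies in some $C\in\mathcal{C}$, so $\bigcup_{C\in\mathcal{C}} C = \mathcal{U}$, yielding the covering property. Finally Lemma~\ref{pareto:l: disjoint cover} guarantees that any two distinct $C,C'\in\mathcal{C}$ satisfy $C\cap C' = \emptyset$, yielding pairwise disjointness. Together these two facts are exactly the definition of an exact cover, which establishes the lemma.

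To wrap up the ``only if'' direction of the entire reduction, I would then observe that $K$ was assumed to be an arbitrary p-repair different from the canonical c-repair $J_0$, and from such a $K$ we have exhibited an exact cover $\mathcal{C}$ of $\mathcal{U}$ by $\mathcal{X}$. Combined with the ``if'' direction already proven (where a given exact cover was turned into a p-repair $K\neq J_0$), this completes the equivalence between existence of a solution to $\XC$ on input $(\mathcal{X},\mathcal{U})$ and existence of a second p-repair of $(I,\succ)$, hence the coNP-hardness of p-categoricity$\angs{\scs^0}$.

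I do not anticipate any obstacle here: all the hard combinatorial work has been absorbed into the preceding chain of lemmas (in particular Lemmas~\ref{pareto:l:notmiddle R2(Xx,f_u)}--\ref{pareto:l:whole set left column}, which were needed to pin down the structure of an arbitrary p-repair $K\neq J_0$). The present lemma is purely a bookkeeping step assembling those structural facts into the formal statement that $\mathcal{C}$ witnesses a solution to $\XC$.
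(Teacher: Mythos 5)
Your proof is correct and matches the paper's: the paper likewise dispatches this lemma in one line by citing Lemma~\ref{pareto:l: all u is covered} for the covering property and Lemma~\ref{pareto:l: disjoint cover} for pairwise disjointness. The extra remarks about $\mathcal{C}\subseteq\mathcal{X}$ and the wrap-up of the reduction are harmless additions but not part of the paper's argument for this particular lemma.
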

	\begin{proof}
		Follows from Lemmas~\ref{pareto:l: all u is covered} and~\ref{pareto:l: disjoint cover}.
	\end{proof}
	
	Given $K$, a p-repair different from $\Jz$, we showed that there exists a solution to $\XC$, (i.e., an exact cover of $\mathcal{U}$). This completes the ``only if'' direction.
	
	\subsubsection{Hardness of p-categoricity$\angs{\scs^6}$}
	
	We construct a reduction from CNF satisfiability  to p-categoricity$\angs{\scs^6}$. 
	The input to CNF is a formula $\psi$ with the free variables 
	$x_1,\ldots,x_n$, such that  $\psi$
	has the form $c_0 \wedge  \cdots \wedge c_m$ where each $c_j$ is a clause. Each clause is a conjunction of variables from the set 
	$\{x_i, \neg x_i : i=1,\ldots,n\}$.
	The goal is to determine 
	whether there is a true assignment
	$\tau: \{x_1,\ldots,x_n\} \rightarrow \{0,1\}$ that satisfies $\psi$. 
	Given such an input, 
	we will construct the input 
	$(I,\succ )$ for
	p-categoricity$\angs{\scs^6}$.
	For each $i=1,\ldots, n$ and $j=0,\ldots, m$, $I$ contains  the following facts:
	\begin{itemize}
		\item
		$R^6(\odot, x_i, 0)$
		\item
		$R^6(\odot,x_i,1)$
		\item
		$R^6(\otimes ,c_j,c_j)$
	\end{itemize}
	The priority relation $\succ$ is defined as follows:
	\begin{itemize}
		\item
		$R^6(\otimes, c_j,c_j) \succ R^6(\odot, x_i, 0)$ if $x_i$ appears in clause $c_j$,
		\item
		$R^6(\otimes, c_j,c_j) \succ R^6(\odot , x_i, 1)$ if $\neg x_i$ appears in clause $c_j$, and
		\item $R^6(\otimes, c_j,c_j) \succ R^6(\odot, x_i,b)$, where
		$b\in\{0,1\}$, if neither $x_i$ nor $\neg x_i$ appear in clause
		$c_j$.
	\end{itemize}
	Our construction is illustrated in Figure~\ref{fig:paretohard6}
	for the $CNF$ formula: 
	$\psi = (x_1\vee x_2 \vee \neg x_4)
	\wedge
	(\neg x_2 \vee x_3 \vee \neg x_4)
	\wedge
	(\neg x_1 \vee \neg x_3 \vee x_4)
	$.
	Observe that the subinstance 
	$J$ that consists of the facts $R^6(\otimes, c_j , c_j)$ for all $j$, is the only c-repair.
	To complete the proof, we will show that $\psi$ is satisfiable if and only if $I$ has a p-repair different from $J$.

	\begin{figure}[t]
		\centering
		\input{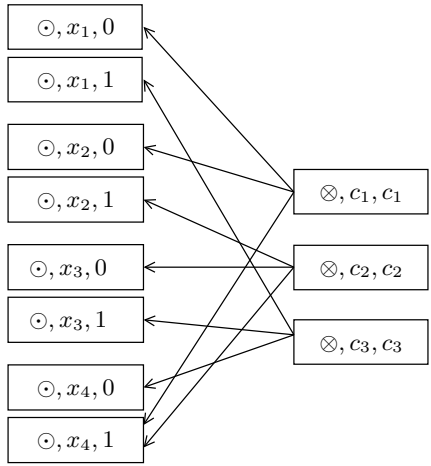}
		\caption{\label{fig:paretohard6}Illustration of the reduction 
		from $CNF$ to p-categoricity$\angs{\scs^6}$}
	\end{figure}

	\partitle{The ``if'' direction}
	Assume $\psi$ is satisfiable. That is, there exists an assignment
	$\tau:\{x_1,\ldots, x_n\}\rightarrow \{0,1\}$ that satisfies $\psi$.
	We claim that the subinstance $K$ that consists of the facts
	$R^6(\odot, x_i, \tau(x_i))$ for all $i$ is a p-repair (that is
	different from $J$).  $K$ is consistent since $\tau$ is an assignment
	(i.e., each $x_i$ has exactly one value and the constraint
	$B\rightarrow C \in \Delta^6$ is satisfied). For the same reason, $K$
	is maximal (facts of the form $R^6(\otimes, c_j, c_j)$ cannot be added
	to $K$ because of the constraint $\emptyset \rightarrow A$).  It is
	left to show that $K$ does not have a Pareto improvement. Assume, by
	way of contradiction, that it does. That is, there exists a fact
	$f \in I \setminus K$ such that $f\succ g$ for every $g\in K$. Note
	that follows from $\succ$'s definition, $f$ must be of the form
	$R^6(\otimes,c_j,c_j)$.  Nevertheless, this implies the clause $c_j$
	is not satisfied by $\tau$ which leads us to the conclusion that
	$\psi$ is not satisfied by $\tau$.
	\partitle{The ``only if'' direction}
	Assume there is a p-repair $K$ different from $J$.  Since $K$ is
	different from $J$, it must contain a fact $R^6(\odot, x_i, b_i)$ for
	some $i$.  Since $K$ is consistent and the FD
	$\emptyset \rightarrow A$ is in $\Delta^6$, it holds that for all
	facts $f$ in $K$, we have that $f[A]=\odot$.  $K$ is maximal and
	consistent and thus induces a true assignment $\tau$, defined by
	$\tau(x_i)\eqdef b_i$.  Assume, by way of contradiction, that $\tau$
	does not satisfy $\psi$. That is, there exists a clause $c_j$ that is
	not satisfied.  By $\succ$'s definition,
	$R^6(\otimes, c_j,c_j) \succ R^6(\odot, x_i, b_i)$ for each $x_i$
	that appears in $c_j$ (with or without negation). Moreover,
	$R^6(\otimes, c_j,c_j) \succ R^6(\odot,x_l, b_l)$ for each $x_l$ that
	does not appear in $c_j$ (with or without negation).  Therefore
	$R^6(\otimes, c_j,c_j) \succ f$ for every $f\in K$, which implies
	that $K$ has a Pareto improvement that contains the fact
	$R^6(\otimes, c_j,c_j)$. Hence, we get a contradiction.

	\subsection{Proof of Lemma~\ref{lemma:fw-from-s0}}
	\begin{replemma}{\ref{lemma:fw-from-s0}}
		\lemmafwfromszero
	\end{replemma}
	\begin{proof}
		Denote the single relation symbol in $\relset$ by $R$.
		It holds that $\Delta$ is equivalent to a pair of keys and therefore we can denote it by 
		$\{X \rightarrow \inds{R} , Y \rightarrow \inds{R}\}$ where $X,Y$ are subsets of $\inds{R}$. 
		Let $f = R^0(a,b)$.
		We define a fact-wise reduction ,$\Pi : R^0 \rightarrow R$, using the constant $\odot\in  \consts$. That is, denote $\Pi(f) = (d_1,\ldots,d_n)$ where for all $i=1,\ldots,n$
		\[
		d_i \eqdef
		\begin{cases}
		a & i\in \mbox{ $X \setminus Y$}\\ 
		b& i\in \mbox{$Y \setminus X$}\\
		\odot & \mbox{otherwise}.
		\end{cases}
		\] 
		In order to prove that $\Pi$ is indeed a fact-wise reduction, we should show it is well-defined, preserves consistency and inconsistency and is injective.
		
		It is straightforward to see that $\Pi$ is injective.  Note that since
		$\Delta$ is not equivalent to a single FD, it holds that
		$X\setminus Y \ne \emptyset$ and $Y\setminus X \ne
		\emptyset$.
		Therefore, there exist $i$ and $j$ such that $d_i = a$ and $d_j=b$.
		To show that it preserves consistency, we should show that for every
		two facts $f$ and $f\p$, the set $\{f,f\p\}$ is consistent w.r.t
		$\Delta^0$ if and only if $\{\Pi(f),\Pi(f\p)\}$ is consistent w.r.t
		$\Delta$.
		
		\partitle{The ``if'' direction}
		Assume that $\{f,f\p\}$  is consistent w.r.t $\Delta^0$, we contend that $\{\Pi(f),\Pi(f\p)\}$ is consistent w.r.t $\Delta$.
		Since  $\{f,f\p\}$  is consistent w.r.t $\Delta^0$, if $f$ and $f\p$ agree on $A$ then they must also agree on $B$ and if they do not agree on $A$ then they also do not agree on $B$. 
		Thus, either $a=a\p,b =b\p$ or 
		$a \ne a\p,b \ne b\p$.
		
		If $a=a\p$ and $b =b\p$ then by $\Pi$'s definition, it holds that $\Pi(f)=\Pi(f\p)$ and thus  $\{\Pi(f),\Pi(f\p)\}= \{\Pi(f)\}$ is  consistent w.r.t $\Delta$.
		
		Let $\Pi(f) = (d_1,\ldots, d_n)$ and let $\Pi(f\p) = (d\p_1,\ldots, d\p_n )$.
		If $a\ne a\p$ and $b  \ne b\p$ then by $\Pi$'s definition and since $X \setminus Y$ and $Y \setminus X$ are not empty it holds that
		there exists $i\in X \setminus Y$ such that 
		$d_i\ne  d\p_i$ and $j\in Y \setminus X$ such that  $d_j \ne d\p_j$. 
		Thus $\Pi(f)$ and $\Pi(f\p)$ do not agree on $X$ nor on $Y$. That is, $\{\Pi(f),\Pi(f\p)\}$ is consistent with respect to $\Delta $.
		
		\partitle{The ``only if'' direction}
		Assume $\{f,f\p\}$  is inconsistent w.r.t $\Delta^0$, we contend that $\{\Pi(f),\Pi(f\p)\}$ is inconsistent w.r.t $\Delta$.
		Since   $\{f,f\p\}$  is inconsistent w.r.t $\Delta^0$, if $f$ and $f\p$ agree on $A$ then they must disagree on $B$ and if they disagree on $A$ then they must agree on $B$. Thus, either $a=a\p$, $b \ne b\p$ or 
		$a \ne a\p $, $b = b\p$.
		Both cases are symmetric and thus we will prove the claim only for the case where  $a=a\p$ and $b \ne b\p$.
		Let $\Pi(f) = (d_1,\ldots,d_n)$ and $\Pi(f\p)=(d\p_1,\ldots,d\p_n)$.
		By $\Pi$'s definition,
		$d_i = d\p_i$ for $i\in X \setminus Y$ and $d_j \neq d\p_j$ for $j\in Y \setminus X$. Moreover, for all other $k$, it holds that $d_k = d\p_k = \odot$.
		That is, the facts $\Pi(f)$ and $\Pi(f\p)$ agree on $X$ and disagree on $Y$.
		Hence, $\{\Pi(f),\Pi(f\p)\}$ is inconsistent w.r.t $\Delta$
	\end{proof}

	\subsection{Proof of Lemma~\ref{lemma:fw-from-s0-to-s1-5}}
	\begin{replemma}{\ref{lemma:fw-from-s0-to-s1-5}}
		\lemmafwfromszerotosonetofive
	\end{replemma}
	\begin{proof}
		Recall our schemas
		$\scs^i$, for $i=1,\dots,5$, where each $\scs^i$ is the schema
		$(\relset^i,\depset^i)$, and $\relset^i$ is the singleton
		$\set{R^i}$. The specification of the $\scs^i$ is as follows.
		
		\begin{enumerate}
			\item[0.] $R^0/2$ and $\depset^0=\set{A\rightarrow B,B\rightarrow A}$
			\item[1.] $R^1/3$ and $\depset^1=\set{AB\rightarrow C,BC\rightarrow A,AC \rightarrow B}$
			\item[2.] $R^2/3$ and $\depset^2=\set{A\rightarrow B, B\rightarrow A}$ 
			\item[3.] $R^3/3$ and $\depset^3=\set{AB\rightarrow C, C\rightarrow B}$
			\item[4.] $R^4/3$ and $\depset^4=\set{A\rightarrow B, B\rightarrow C}$
			\item[5.] $R^5/3$ and $\depset^5=\set{A\rightarrow C , B\rightarrow C}$
		\end{enumerate} 
		
		Let $f =R^0 (a,b)$ be a fact over $\relset^0 / 2$.
		For all $i=1,\ldots, 5$ we define the fact-wise reduction $\Pi^i: R^0 \rightarrow R^i$, using  $\otimes \in \consts$, by:
		\begin{enumerate}
			\item
			$\Pi^1(f) \eqdef R^1(a,b,\otimes)$,
			\item
			$\Pi^2(f) \eqdef  R^2(a,b,\otimes )$,
			\item
			$\Pi^3(f) \eqdef  R^3(\otimes,a,b),$
			\item
			$\Pi^4(f) \eqdef  R^4(a,b,a),$
			\item
			$\Pi^5(f) \eqdef  R^5(a,b,\langle a,b \rangle ).$
		\end{enumerate}
		Regarding the fact-wise reduction $\Pi^5$, note that 
		$\langle a,b \rangle = \langle a\p,b\p \rangle$ if and only if $a=a\p$ and $b=b\p$.
		Note that for each $i=1,\ldots, 5$, we have that $\Pi^i$ is computable in polynomial time. Moreover, it is straightforward to see that for each $i=1,\ldots, 5$, it holds that $\Pi^i$ is injective.
		Thus, it is left to show that for each $i=1,\ldots, 5$, the fact-wise reduction $\Pi^i$ preserves consistency and inconsistency.
		That is, given two facts 
		$f = R^0(a,b)$ and $f\p = R^0(a\p,b\p)$, 
		we show that for each $i=1,\ldots, 5$, it holds that $\{\Pi^i(f),\Pi^i(f\p)\}$ is consistent w.r.t $\Delta^i$, if and only if $\{f,f\p\}$ is consistent w.r.t $\Delta^0$.
		
		We prove the above for $i=5$.
		\partitle{The ``if'' direction}
		Let $f = R^0(a,b)$ and $f\p = R^0(a\p,b\p)$. 
		Assume that $\{f,f\p\}$ is inconsistent with respect to $\Delta^0$. We contend that
		$\{\Pi^5(f),\Pi^5(f\p)\}$ is inconsistent w.r.t $\Delta^1$.
		Note that since $\{f,f\p\}$ is inconsistent with respect to $\Delta^0$, $f$ and $f\p$ must agree on one attribute and disagree on the second.
		Both cases are symmetric and thus it suffices to show that if  
		$f = R^0(a,b)$ and $f\p = R^0(a\p,b\p)$ agree on $A$ and disagree on $B$ then $\{\Pi^5(f),\Pi^5(f\p)\}$ is inconsistent w.r.t $\Delta^1$.
		Indeed, $f = R^0(a,b)$ and $f\p = R^0(a\p,b\p)$ agree on $A$ which implies $a=a\p$, and disagree on $B$ which implies $b\ne b\p$.
		Thus, $\Pi^5(f)=R^1(a,b,\langle a,b \rangle )$ and  
		$\Pi^5(f\p)=R^1(a\p,b\p,\langle a\p,b\p \rangle)$ agree on $A$ but do not agree on $C$. That is, $\{\Pi^5(f),\Pi^5(f\p)\}$ is inconsistent w.r.t $\Delta^1$.
		
		\partitle{The ``only if'' direction}
		Let $f = R^0(a,b)$ and $f\p = R^0(a\p,b\p)$ and assume that  $\{f,f\p\}$ is consistent with respect to $\Delta^0$.
		We contend that  $\{\Pi^5(f),\Pi^5(f\p)\}$ is consistent w.r.t $\Delta^1$.
		If
		$f$ and $f\p$ agree on $A$ (i.e., $a = a\p$), since they are consistent w.r.t $\Delta^0$ they must agree also on $B$ (i.e., $b=b\p$).
		Hence $f=f\p$ and therefore $\Pi^5(f)=\Pi^5(f\p)$.
		We conclude that $\{\Pi^5(f),\Pi^5(f\p)\}$ is consistent w.r.t $\Delta^1$.
		If
		$f$ and $f\p$ do not agree on $A$ (i.e., $a \ne a\p$), since they are consistent w.r.t $\Delta^0$ they must also disagree on $B$ (i.e., $b \ne b\p$).
		Thus,  $\Pi^5(f)$ and $\Pi^5(f\p)$ do not agree on $A$ nor on $B$. That is, $\Delta^1$ holds and thus $\{\Pi^5(f),\Pi^5(f\p)\}$ is consistent w.r.t $\Delta^1$.\\
		
		Similarly, this can be shown for each $i=1,\ldots,4$ (which are simpler cases than $i=5$).
	\end{proof}

\def\refcompletion{\ref{sec:c}}
\section{Proofs for Section~\refcompletion}

In the current section we prove the correctness of the $\algname{CCategoricity}$ algorithm, introduced in Section~\ref{sec:c}. In particular, we prove Theorem~\ref{thm:ccategoricity-correct}. For convenience, we repeat the theorem here.

\begin{reptheorem}{\ref{thm:ccategoricity-correct}}
	\theoremccategoricitycorrect
\end{reptheorem}

We will divide our proof into two parts.
First we will prove that the algorithm is sound (i.e., if $J$ is consistent, then $I$ has precisely one c-repair).
Later, we will prove that the algorithm is complete (i.e., if $I$ has precisely one c-repair, then $J$ is consistent).
Before that, we need a basic lemma that will be used in both parts of the proof.

\subsection{Basic Lemma}

Let $(I,\graph,\succ)$ be an inconsistent prioritizing instance over a signature $\relset$. We start by proving the following lemma.

\begin{lemma}\label{lemma:ccat-findcrepair-from-consistent-j}
	Suppose that $P_1,\dots,P_t$ are the positive strata constructed by executing $\algname{CCategoricity}(I,\graph,\succ)$. For every $k\in\set{1,\dots,t}$, if $J_k=P_1\cup\dots\cup P_k$ is consistent, then there exists an execution of the $\algname{FindCRep}$ algorithm on $(I,\graph,\succ)$, such that at the beginning of some iteration of that execution the following hold:
	\begin{itemize}
		\item The set of facts included in $J$ is $P_1\cup\dots\cup P_k$.
		\item Every fact $f\in P_{k+1}$ belongs to $\max_\succ(I)$.
	\end{itemize}
\end{lemma}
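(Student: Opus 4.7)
The plan is to prove this by induction on $k$, establishing the invariant that, if $J_k$ is consistent, then some execution of $\algname{FindCRep}(I,\graph,\succ)$ reaches a state at the beginning of an iteration where the variable $J$ equals $J_k$ and every $f \in P_{k+1}$ (when $P_{k+1}$ exists) is $\succ$-maximal in the current instance of $\algname{FindCRep}$. The base case $k=0$ is immediate: before any iteration, $J = \emptyset = J_0$, and each $f \in P_1 = \max_{\succs}(I)$ is also in $\max_\succ(I)$ because $\succ \subseteq \succs$.

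For the inductive step, suppose the invariant holds for $k$ and that $J_{k+1}$ is consistent. Starting from the state guaranteed by the inductive hypothesis, I would extend the execution in two phases. Phase~1 processes $P_{k+1}$: pick its facts one by one, each remaining $\succ$-maximal (maximality of a fact is preserved under removing other facts) and each getting added because $J \cup \{f\} \subseteq J_{k+1}$ is consistent; after Phase~1 we have $J = J_{k+1}$.

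Phase~2 further modifies the current instance without changing $J$, in order to make every $f' \in P_{k+2}$ be $\succ$-maximal. The loop is: while some $f' \in P_{k+2}$ fails this, take a witness $g \succ f'$ in the current instance and walk upward along a $\succ$-chain (finite, by acyclicity of $\succ$) to a $\succ$-maximal element $g^*$ with $g^* \succs f'$. Then pick $g^*$. This is safe because, first, $g^* \notin J = J_{k+1}$ since $g^*$ is still in the current instance; and second, $g^* \notin I_{k+2}$, because otherwise $g^* \succs f'$ would contradict $f' \in P_{k+2} = \max_{\succs}(I_{k+2})$. Combining these with the identity $I \setminus J_{k+1} = (N_1 \cup \dots \cup N_{k+1}) \cup I_{k+2}$, we conclude that $g^* \in N_j$ for some $j \le k+1$. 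By the definition of $N_j$, there is a hyperedge $e \ni g^*$ with $e \setminus \{g^*\} \subseteq J_j \subseteq J$, so $J \cup \{g^*\}$ is inconsistent and $g^*$ is rejected, leaving $J$ untouched while the current instance shrinks by one fact. The loop must terminate (the current instance is finite), and at termination the invariant for $k+1$ holds.

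The main obstacle is Phase~2: the $\succ$-maximal element $g^*$ must be chosen to lie outside $P_{k+2}$ (so picking it does not grow $J$ past $J_{k+1}$) yet inside $N_1 \cup \dots \cup N_{k+1}$ (so that picking it actually leads to rejection). Both properties follow from the single observation that $g^* \succs f'$ for some $f' \in P_{k+2}$, which, via the decomposition $I \setminus J_{k+1} = (N_1 \cup \dots \cup N_{k+1}) \cup I_{k+2}$ and the $\succs$-maximality of $P_{k+2}$ inside $I_{k+2}$, pushes $g^*$ out of $I_{k+2}$ and into the $N_j$-layers.
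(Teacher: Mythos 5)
Your proof is correct and follows essentially the same strategy as the paper's: build the $\algname{FindCRep}$ execution by adding the positive strata in order, and whenever a fact of the next stratum is blocked, walk up a $\succ$-chain to a currently maximal fact, show via the $\succs$-maximality of the positive strata that this fact lies in some negative stratum $N_j$ with $j$ small enough that its witnessing hyperedge is already contained in $J$, and select it so that it is rejected. Your inductive packaging and the explicit decomposition $I\setminus J_{k+1}=(N_1\cup\dots\cup N_{k+1})\cup I_{k+2}$ make the argument a bit tidier than the paper's case analysis, but the underlying idea is the same.
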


\begin{proof}
	Let us start the execution of the $\algname{FindCRep}$ algorithm on $(I,\graph,\succ)$ as follows. First, we will select all of the facts from $P_1$ and add them to $J$ (one by one), then we will select all of the facts from $P_2$ and add them to $J$ and so on. If we can add all of the facts from $P_1\cup\dots\cup P_k$ to $J$ during this process, and at the end all of the facts in $P_{k+1}$ belong to $\max_\succ(I)$, then there exists an execution of the $\algname{FindCRep}$ algorithm that satisfies all the conditions and that will conclude our proof.
	
	Otherwise, one of the following holds.
	\begin{itemize}
		\item One of the facts in $P_1\cup\dots\cup P_k$ cannot be added to $J$.
		\item After adding all of the facts from $P_1\cup\dots\cup P_k$ (and only these facts) to $J$, one of the facts in $P_{k+1}$ does not belong to $\max_\succ(I)$.
	\end{itemize}
	If the first case holds, let $P_i$ be the first positive stratum such that a fact $f\in P_i$ cannot be added to $J$ after adding all of the facts in $P_1\cup\dots\cup P_{i-1}$ and maybe some of the facts in $P_i$ to $J$. Note that $P_1\cup\dots\cup P_k$ does not contain a hyperedge, thus if $f$ cannot be added to $J$, it holds that $f$ does not belong to $\max_\succ(I)$. That is, there exists at least one fact $g$, such that $g\succ f$. Moreover, there exists at least one finite sequence, $h_1\succ\dots\succ h_m\succ f$, of facts in $I$ (with $m\ge 1$), such that $h_1\in \max_\succ(I)$. Note that for each such sequence of facts, it holds that $h\succs f$ for every fact $h$ in the sequence.
	In this case, one of the following holds for each fact $h$:
	\begin{itemize}
		\item $h$ belongs to $P_j$ or $N_j$ for some $j<i$.
		\item $h$ belongs to $P_j$ or $N_j$ for some $j\ge i$.
	\end{itemize}
	We assumed that we were able to add all the facts from $P_1\cup\dots\cup P_{i-1}$ to $J$, thus $h\in\set{h_1,\dots,h_m}$ cannot belong to $P_j$ for some $j<i$. Furthermore, by the definition of $P_i$, after removing all of the facts from $P_1\cup\dots\cup P_{i-1}$ and $N_1\cup\dots\cup N_{i-1}$, every fact $f\in P_i$ belongs to $\max_\succs(I)$ (and consequently to $\max_\succ(I)$). This cannot be the case if there exists a fact $h\succs f$ that belongs to $P_j$ or $N_j$ for some $j\ge i$. Thus, the only possibility left is that every fact $h\in\set{h_1,\dots,h_m}$ belongs to $N_j$ for some $j<i$.
	
	Since for every such sequence of facts, $h_1$ belongs to $\max_\succ(I)$, it can be selected in line~3 of the algorithm at the next iteration. By the definition of $N_j$, there exists a hyperedge that is contained in $P_1\cup\dots\cup P_j\cup\set{h_1}$. We know that all the facts in $P_1\cup\dots\cup P_j$ were added to $J$, thus $h_1$ will be excluded from $J$. After selecting fact $h_1$ from each sequence and removing it from $I$, the next fact, $h_2$, in each sequence belongs to $\max_\succ(I)$. The previous arguments hold for $h_2$ as well, thus $h_2$ can be selected  in line~3 of the algorithm at the next iteration. Note that if a fact $h$ belongs to more than one sequence, it will be removed after all the previous facts in each one of these sequences are removed. We can continue with this process until it holds that $f\in \max_\succ(I)$, and then add it to $J$, in contradiction to our assumption. 
	
	If the second case holds, at each of the next iterations of the algorithm (and before adding facts that do not belong to $P_1\cup\dots\cup P_k$ to $J$), there exists a fact $f\in P_{k+1}$ that does not belong to $\max_\succ(I)$. Similarly to the previous part, if $f$ does not belong to $\max_\succ(I)$, there exist at least one sequence $h_1\succ\dots\succ h_m\succ f$ of facts, such that $h\succs f$ for every fact in the sequence and each fact $h$ belongs to $N_j$ for some $j<{k+1}$. We again can select all of the facts in each sequence as the maximal elements in line~3 of the algorithm by topological order until $f\in \max_\succ(I)$. This is a contradiction to our assumption, thus $f$ does belong to $\max_\succ(I)$ after adding all of the facts in $P_1\cup\dots\cup P_k$ to $J$ and removing all of the corresponding facts in $N_1\cup\dots\cup N_k$ from $I$. Note that in this case, we will be able to select $f$ in line~3 of the algorithm at the next iteration; however, we may not be able to add $f$ to $J$, since $J_{k+1}$ is not consistent.
\end{proof}

\begin{example}
	Consider the $\algname{CCategoricity}$ execution on the inconsistent prioritizing instance $(I,\graph,\succ)$ from our followers running example, illustrated in Figure~\ref{fig:follows-exec}. In this example, $P_1\cup P_2$ is consistent. We can start building the repair $J$, using the $\algname{FindCRep}$ algorithm as follows. (We recall that we denote inclusion in $J$ by plus and exclusion from $J$ by minus.)
	\[ +f_{11},+f_{34},-f_{12},-f_{21},-f_{31},+f_{22},+f_{35}\]
	As Lemma~\ref{lemma:ccat-findcrepair-from-consistent-j} states, at the current iteration of the algorithm, all the facts in $P_1\cup P_2$ belong to $J$. Moreover, the facts in $P_3$ (that is, $f_{23}$ and $f_{32}$) belong to $\max_\succ(I)$, as the second part of the lemma states.
	
	Note that in our example, there exists a different execution of the $\algname{FindCRep}$ algorithm.
	\begin{gather*}
		+f_{11},-f_{12},-f_{21},-f_{31},+f_{22},\\
		+f_{23},+f_{32},-f_{24},+f_{34},+f_{35}
	\end{gather*}
	In this case, there does not exist an iteration of the algorithm that satisfies the conditions from Lemma~\ref{lemma:ccat-findcrepair-from-consistent-j}. Thus, different executions of the algorithm may be possible; however, there always exists an execution of the $\algname{FindCRep}$ algorithm that satisfies the conditions from the lemma.
\end{example}

\subsection{Soundness}

Next, we prove that if $J$ is consistent, then $I$ has precisely one c-repair. Let us denote by $t$ the number of iterations of the $\algname{CCategoricity}$ algorithm on the input $(I,\graph,\succ)$. That is, it holds that $J=P_1\cup\dots\cup P_t$ at the end of the algorithm. Since $J$ is consistent, Lemma~\ref{lemma:ccat-findcrepair-from-consistent-j} and the fact that the $\algname{FindCRep}$ algorithm is sound (Theorem~\ref{thm:cgreedy}) imply that there exists a c-repair $K$ that includes all of the facts in $P_1\cup\dots\cup P_t$. The following lemma proves that a c-repair cannot include any fact that belongs to $N_i$ for some $i\in\set{1,\dots\,t}$.

\begin{lemma}\label{lemma:ccat-no-repair-Ni}
	If $J$ is consistent, then no c-repair contains any fact from $N_1\cup\dots\cup N_t$.
\end{lemma}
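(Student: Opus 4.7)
The plan is to argue by contradiction. Suppose that there exists a c-repair $K$ that contains some fact from $N_1 \cup \cdots \cup N_t$. By Theorem~\ref{thm:cgreedy}, $K$ is produced by some execution $\alpha$ of $\algname{FindCRep}(I,\graph,\succ)$. Let $s^*$ be the first step of $\alpha$ at which a fact from $N_1 \cup \cdots \cup N_t$ is added to the solution constructed by $\algname{FindCRep}$; denote that fact by $f^*$, say $f^*\in N_i$, and fix a hyperedge $e$ witnessing $f^*\in N_i$, so that $e\setminus\{f^*\} \subseteq P_1\cup\cdots\cup P_i \subseteq J$ and $(e\setminus\{f^*\}) \succs f^*$.

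First, I would establish by induction on the step of $\alpha$ the following invariant for every step $s\leq s^*$: the facts already added to the constructed solution lie in $J$, and the facts already rejected lie in $N_1\cup\cdots\cup N_t$. The inductive step splits in two cases. If the selected fact is in $J$, then augmenting the current (subset of $J$) constructed solution by it keeps it inside $J$, which is consistent, so the fact is added. If it is in $N_1\cup\cdots\cup N_t$ and the step is strictly before $s^*$, then the minimality of $s^*$ forces it to be rejected. In particular, denoting by $K^{\circ}$ the constructed solution just before step $s^*$, we have $K^{\circ}\subseteq J$.

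To derive the contradiction I would show that $e\setminus\{f^*\} \subseteq K^{\circ}$. This yields $e\subseteq K^{\circ}\cup\{f^*\}$, making $K^{\circ}\cup\{f^*\}$ inconsistent and contradicting the fact that $f^*$ is added at step $s^*$. For each $g\in e\setminus\{f^*\}$, the easy case is that $g$ has already been processed by step $s^*$: then $g\in J$ together with the invariant yields $g\in K^{\circ}$. The remaining case, which is the crux of the argument, is that $g$ is still present in the residual $I$ at the start of step $s^*$.

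To rule out this remaining case, I would use that the algorithm selects $f^*$ at step $s^*$, so $f^*$ lies in the $\succ$-maxima of the current residual. Choose a $\succ$-path $g=h_0\succ h_1\succ\cdots\succ h_m = f^*$ witnessing $g\succs f^*$. If $m=1$ then $g\succ f^*$ directly, and $g$ being in the residual contradicts $\succ$-maximality of $f^*$. Otherwise, not all $h_j$ can remain in the residual at step $s^*$, for the same reason applied to $h_{m-1}$, so let $j^*\in\{1,\dots,m-1\}$ be the smallest index with $h_{j^*}$ no longer in the residual at step $s^*$. Then $h_{j^*-1}$ is still in the residual at step $s^*$, and hence also in the (larger) residual at the earlier step at which $h_{j^*}$ was selected; but $h_{j^*-1}\succ h_{j^*}$ then contradicts $h_{j^*}$ being $\succ$-maximal at its own selection step. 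This upstream propagation of the $\succ$-maximality of $f^*$ along the $\succs$-chain is the technical heart of the argument, and the place where I expect the main difficulty to lie in writing the proof cleanly.
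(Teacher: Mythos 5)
Your proof is correct and follows essentially the same route as the paper's: both consider an execution of $\algname{FindCRep}$ producing the offending c-repair, isolate the first negative-stratum fact added, and use the witnessing hyperedge together with the $\succ$-maximality of selected facts (propagated upstream along the $\succs$-chains) and the consistency of $J$ to derive a contradiction. The only difference is organizational --- the paper case-splits at the end and, when some fact of the hyperedge was rejected, exhibits a second hyperedge contained in $J$, whereas you front-load the consistency of $J$ into an invariant guaranteeing that every $J$-fact processed before the critical step is added --- which is a slightly cleaner packaging of the same argument.
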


\begin{proof}
	Let us assume, by way of contradiction, that there exists a c-repair $K'$ that includes at least one fact from $N_1\cup\dots\cup N_t$. By Theorem~\ref{thm:cgreedy}, $\algname{FindCRep}$ is complete, thus there exists an execution of the algorithm that produces $K'$. Consider an execution of $\algname{FindCRep}$ that produces $K'$. In that execution, consider the first time that a fact from a negative stratum is added to $K'$; let $g$ be that fact. That is, $g\in N_i$ for some $i\in\set{1,\dots,t}$. By the definition of $N_i$, there exists a hyperedge $e=\set{f_1,\dots,f_m,g}$ that is contained in $P_1\cup\dots\cup P_i\cup\set{g}$, such that for every other fact $f\in e$ it holds that $f\succs g$. If we are able to choose fact $g$ for the repair at some iteration of the algorithm, it necessarily belongs to $\max_\succ(I)$ at this iteration. Therefore, all of the other facts in $e$ are no longer included in $I$. 
	
	This may be the case if all of the other facts in the hyperedge were already added to $K'$. However, adding $g$ to $K'$ will result in a hyperedge, in contradiction to the fact the $K'$ is a repair. Therefore, at least one of the facts in the hyperedge was removed from $I$ without being added to $K'$; let $f$ be such a fact. That is, there exists another hyperedge $e'$ that includes $f$, such that all of the other facts in this hyperedge were added to $K'$ before $f$ was removed from $I$. All of these facts, including $f$, belong to $P_1\cup\dots\cup P_t$, since we assumed that $g$ is the first fact from some $N_i$ that was chosen for the repair. Hence, we found a hyperedge, $e'$, that is contained in $P_1\cup\dots\cup P_t$, in contradiction to the fact that $P_1\cup\dots\cup P_t$ is consistent. Thus, a fact $g\in N_i$ cannot be added to any repair of $I$.
\end{proof}

Lemma~\ref{lemma:ccat-no-repair-Ni} implies that every c-repair is contained in $J$. Moreover, as said above, there exists a c-repair that includes all of the facts in $J$. Thus, $J$ is the only c-repair of $I$ and this concludes the proof of soundness of the $\algname{CCategoricity}$ algorithm.

\subsection{Completeness}

Finally, we prove that if $I$ has precisely one c-repair, then $J$ is consistent. Throughout this section we fix $(I,\graph,\succ)$ and assume that there is exactly one c-repair, which we denote by $K$. Let $J_i$ denote the subinstance $P_1\cup\dots\cup P_i$. We prove by induction on $k$ that after the $k$th iteration of $\algname{CCategoricity}$, the instance $J_k=P_1\cup\dots \cup P_k$ is consistent. Then, we will conclude that $J$ is consistent when the algorithm reaches line~9, and consequently, the algorithm returns true as we expect. 

The basis of the induction, $k=0$, is proved by observing that $J_0$ is an empty set, thus it does not include any hyperedge. For the inductive step, we need to prove that if $J_k$ is consistent then $J_{k+1}$ is also consistent. So, suppose that $J_k$ is consistent. Let us assume, by way of contradiction, that $J_{k+1}$ is inconsistent, that is, $J_{k+1}$ contains a hyperedge. We next prove the following lemma.

\begin{lemma}\label{lemma:ccat-next-p-no-conflicts}
	There exists a fact $f\in P_{k+1}$ and a hyperedge $e$ such that the following hold:
	\begin{itemize}
		\item $f\in e$.
		\item $(e\setminus\set{f})\subseteq P_1\cup\dots\cup P_k$.
	\end{itemize}
\end{lemma}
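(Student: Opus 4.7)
The plan is to use the standing hypothesis of the completeness argument—that there is a unique c-repair, which I call $K$—together with Lemma~\ref{lemma:ccat-findcrepair-from-consistent-j} and Theorem~\ref{thm:cgreedy}. The overall idea is to locate a fact $f \in P_{k+1}$ that is not in $K$ and then use an execution of $\algname{FindCRep}$ to extract the required hyperedge around $f$.

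First I will show that $J_k \subseteq K$. Since $J_k$ is consistent by the inductive hypothesis, Lemma~\ref{lemma:ccat-findcrepair-from-consistent-j} produces an $\algname{FindCRep}$ execution whose current $J$ equals $J_k$ at the start of some iteration, with every fact of $P_{k+1}$ belonging to $\max_\succ(I)$ at that moment. Extending this execution to termination yields a c-repair, which by Theorem~\ref{thm:cgreedy} and uniqueness of $K$ must be $K$ itself; since $J$ only grows during the execution, $J_k \subseteq K$. Combined with the assumption that $J_{k+1} = J_k \cup P_{k+1}$ is inconsistent while $K$ is consistent, this forces $P_{k+1} \not\subseteq K$, so I can fix some $f \in P_{k+1} \setminus K$.

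Next, I will return to the same iteration of that execution—where $J = J_k$ and $f \in \max_\succ(I)$—and pick precisely $f$ in line~3 of $\algname{FindCRep}$. If $J_k \cup \{f\}$ were consistent, line~6 would add $f$ to $J$, and extending the execution to termination would produce a c-repair containing $f$; by Theorem~\ref{thm:cgreedy} and uniqueness, this c-repair would have to equal $K$, contradicting $f \notin K$. Hence $J_k \cup \{f\}$ is inconsistent, so $\graph$ contains a hyperedge $e \subseteq J_k \cup \{f\}$. The consistency of $J_k$ forces $f \in e$, and consequently $e \setminus \{f\} \subseteq J_k = P_1 \cup \dots \cup P_k$, which is exactly the conclusion of the lemma. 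No step poses a serious obstacle: the two invocations of Lemma~\ref{lemma:ccat-findcrepair-from-consistent-j} and Theorem~\ref{thm:cgreedy} do all the heavy lifting, and the argument is essentially the observation that any deviation from $K$ at iteration $k+1$ of $\algname{CCategoricity}$ must be \emph{forced} by a hyperedge already completed inside $J_k$.
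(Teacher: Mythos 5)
Your proof is correct, but it takes a genuinely different route from the paper's. The paper picks a hyperedge $e\subseteq P_1\cup\dots\cup P_{k+1}$ whose intersection with $P_{k+1}$ is \emph{minimal}, assumes that intersection has size $m>1$, and then uses Lemma~\ref{lemma:ccat-findcrepair-from-consistent-j} to run $\algname{FindCRep}$ twice---once greedily adding $f_1$ and once adding some $f_j$ that the first run excluded---thereby manufacturing two distinct c-repairs and contradicting uniqueness. You instead argue globally through the unique c-repair $K$: extending the execution of Lemma~\ref{lemma:ccat-findcrepair-from-consistent-j} shows $J_k\subseteq K$, so the inconsistency of $J_{k+1}$ forces some $f\in P_{k+1}\setminus K$, and then the only way $\algname{FindCRep}$ can fail to add a fact that is currently in $\max_\succ(I)$ is that $J_k\cup\set{f}$ already contains a hyperedge $e$; consistency of $J_k$ puts $f$ in $e$ and the rest of $e$ in $P_1\cup\dots\cup P_k$. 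Both arguments lean on the same two external facts (Lemma~\ref{lemma:ccat-findcrepair-from-consistent-j} and Theorem~\ref{thm:cgreedy}), but yours avoids the minimal-intersection device and the construction of a second repair, and it actually yields slightly more: \emph{every} fact of $P_{k+1}\setminus K$ (a nonempty set) is blocked by such a hyperedge, whereas the paper only exhibits one. The paper's version, in exchange, never needs to name $K$ or establish $J_k\subseteq K$, and its two-repair construction foreshadows the structure of the subsequent Lemmas~\ref{lemma:ccat-crepair-no-f1} and~\ref{lemma:ccat-crepair-f1}. Your argument is complete as written; the only step worth stating explicitly is that facts are never removed from $J$ during $\algname{FindCRep}$, which is what justifies both $J_k\subseteq K$ and the claim that adding $f$ at that iteration would place it in the final output.
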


\begin{proof}
	Let $e$ be a hyperedge that is contained in $P_1\cup\dots\cup P_{k+1}$ and has a minimal intersection with $P_{k+1}$. Let $\set{f_1,\dots,f_m}$ be the set $e\cap P_{k+1}$. Then $m>0$ since $J_k$ does not contain a hyperedge. To prove the lemma, we need to show that $m=1$. Suppose, by way of contradiction, that $m>1$. Since $J_k$ is consistent, Lemma~\ref{lemma:ccat-findcrepair-from-consistent-j} implies that we can start building a c-repair using the $\algname{FindCRep}$ algorithm by first choosing all of the facts in $P_1\cup\dots\cup P_k$. Moreover, after choosing all of these facts, there exists an iteration $i$ in which each fact in $P_{k+1}$ belongs to $\max_\succ(I)$. Since $\algname{FindCRep}$ always produces a c-repair (Theorem~\ref{thm:cgreedy}), we can now choose the fact $f_1$, which will result in a c-repair $J_1$. This holds true due to our assumption that $m>1$ and $m$ is minimal (hence, adding a signle fact to a set of facts that currently includes only facts from $P_1\cup\dots\cup P_k$ does not result in the containment of a hyperedge).
	
	The c-repair $J_1$ cannot contain all of the facts in $\set{f_1,\dots,f_m}$, since a repair cannot contain a hyperedge. Let us assume that $f_j\in\set{f_1,\dots,f_m}$ is not in $J_1$. If we choose $f_j$ instead of $f_1$ at the $i$th iteration, the $\algname{FindCRep}$ algorithm will produce a c-repair $J_2$ that includes $f_j$. Again, we can choose $f_j$ due to our assumption that $m>1$ and $m$ is minimal. That is, we have two distinct c-repairs, $J_1$ and $J_2$, in contradiction to our assumption that $I$ has precisely one c-repair.
\end{proof}

Since we assumed that $I$ has exactly one c-repair and $P_1\cup\dots\cup P_k$ is consistent, Lemma~\ref{lemma:ccat-next-p-no-conflicts} implies that there exists a hyperedge $e=\set{f_1,\dots,f_m}$ such that precisely one of the $f_i$ belongs to $P_{k+1}$, while the other facts belong to $P_1\cup\dots\cup P_k$. Without loss of generality, we can assume that $f_1\in P_{k+1}$. We next prove the existence of two distinct c-repairs:
\begin{itemize}
	\item A c-repair that includes all of the facts in $e\setminus\set{f_1}$ and does not include $f_1$. 
	\item A c-repair that includes $f_1$. 
\end{itemize}

\begin{lemma}\label{lemma:ccat-crepair-no-f1}
	There exists a c-repair that does not include $f_1$.
\end{lemma}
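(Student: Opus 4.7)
The plan is to prove Lemma~\ref{lemma:ccat-crepair-no-f1} by constructing an explicit execution of the $\algname{FindCRep}$ algorithm whose output excludes $f_1$, and then invoking Theorem~\ref{thm:cgreedy} (completeness of $\algname{FindCRep}$) to conclude that this output is a c-repair.

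First I would invoke the inductive hypothesis, which says that $J_k = P_1 \cup \dots \cup P_k$ is consistent. This lets me apply Lemma~\ref{lemma:ccat-findcrepair-from-consistent-j}, which supplies an execution of $\algname{FindCRep}(I,\graph,\succ)$ and an iteration during that execution at which the partially built repair $J$ equals exactly $P_1 \cup \dots \cup P_k$, and at which every fact in $P_{k+1}$ belongs to $\max_\succ(I)$. In particular, $f_1 \in \max_\succ(I)$ at this iteration.

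Next I would extend this execution by one step: have the algorithm pick $f_1$ in line~3. Then $f_1$ is removed from $I$ in line~4, and the consistency test in line~5 fails, because $e \setminus \{f_1\} \subseteq P_1 \cup \dots \cup P_k = J$, so $J \cup \{f_1\}$ contains the full hyperedge $e$. Hence $f_1$ is \emph{not} added to $J$. I would then let the execution proceed arbitrarily (e.g., following the nondeterministic choices available) until $\max_\succ(I)$ becomes empty. By Theorem~\ref{thm:cgreedy}, the resulting subinstance is a c-repair; since $f_1$ was explicitly removed and never re-added, this c-repair does not contain $f_1$.

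There is no real obstacle here beyond verifying that the construction respects the algorithm's rules: $f_1$ is genuinely choosable at the relevant iteration (because Lemma~\ref{lemma:ccat-findcrepair-from-consistent-j} guarantees $f_1 \in \max_\succ(I)$), and the exclusion of $f_1$ from $J$ follows immediately from the consistency check and the fact that the other facts of $e$ are already in $J$. The remaining content of the completeness argument (producing a second c-repair that \emph{does} contain $f_1$, and deriving the contradiction with the uniqueness of $K$) is handled by the companion lemma and the subsequent discussion, so at this stage I only need the clean statement that a c-repair avoiding $f_1$ exists.
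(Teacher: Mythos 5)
Your proposal is correct and follows essentially the same route as the paper: both invoke Lemma~\ref{lemma:ccat-findcrepair-from-consistent-j} to obtain a $\algname{FindCRep}$ execution whose partial repair is exactly $P_1\cup\dots\cup P_k$, and then observe that since $e\setminus\set{f_1}$ already lies in that set, $f_1$ can never enter the resulting c-repair. The only cosmetic difference is that you explicitly schedule $f_1$ to be chosen and rejected at line~5, whereas the paper simply notes that the final c-repair contains $f_2,\dots,f_m$ and hence, being consistent, excludes $f_1$.
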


\begin{proof}
	Lemma~\ref{lemma:ccat-findcrepair-from-consistent-j} implies that it is possible to build a c-repair using $\algname{FindCRep}$ by first choosing all of the facts in $P_1\cup\dots\cup P_k$. As all of the facts in $\set{f_2,\dots,f_m}$ belong to $P_1\cup\dots\cup P_k$, all of them will be chosen in this process as well. Since the $\algname{FindCRep}$ algorithm is sound (Theorem~\ref{thm:cgreedy}), this specific execution of the algorithm will result in a c-repair $J$ of $I$ that includes all of the facts $f_2,\dots,f_m$. The fact $f_1$ cannot be included in $J$, since adding it to $J$ will result in a repair that contains a hyperedge, which is impossible by definition.
\end{proof}

To complete the proof of completeness, we prove that there exists another c-repair of $I$ that includes $f_1$. In order to do so, we again take advantage of the algorithm $\algname{FindCRep}$, and prove the following lemma.

\begin{lemma}\label{lemma:ccat-crepair-f1}
	There exists a c-repair that includes $f_1$.
\end{lemma}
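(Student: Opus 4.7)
The plan is to construct an execution of $\algname{FindCRep}$ that produces a c-repair containing $f_1$; together with Lemma~\ref{lemma:ccat-crepair-no-f1}, which gives a c-repair avoiding $f_1$, this yields two distinct c-repairs, contradicts the uniqueness assumption, and thereby establishes that $J_{k+1}$ must in fact be consistent.

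First I would extract from $f_1\in P_{k+1}$ (and hence $f_1\notin N_j$ for any $j\leq k$) the key structural property: for every hyperedge $e'$ with $f_1\in e'$ and $e'\setminus\{f_1\}\subseteq P_1\cup\dots\cup P_k$, the relation $(e'\setminus\{f_1\})\succs f_1$ cannot hold. Next, I would define $R\eqdef\{g\in I : g\succs f_1\}$ and observe that $R\subseteq P_1\cup\dots\cup P_k\cup N_1\cup\dots\cup N_k$, since any $g\in R$ surviving past iteration $k$ would contradict $f_1\in P_{k+1}$. The execution of $\algname{FindCRep}$ I would design proceeds in three phases: first process the facts of $R$ in topological $\succ$-order (predecessors before successors), then process $f_1$, and finally the remaining facts. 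This order is legal because every direct $\succ$-predecessor of a fact in $R$ is again in $R$, so the chosen fact is always in $\max_\succ(I)$ of the current $I$.

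The main obstacle, and the heart of the proof, is to show that after the first phase the built set $J$ equals $R\cap(P_1\cup\dots\cup P_k)$. I would establish this by strong induction on the topological position of $g\in R$: if $g\in R\cap P_j$, then any hyperedge $e_g\subseteq J\cup\{g\}$ would be contained in $J_k=P_1\cup\dots\cup P_k$, contradicting the outer inductive hypothesis that $J_k$ is consistent, so $g$ is added; if $g\in R\cap N_j$, then the hyperedge $e_g$ witnessing $g\in N_j$ satisfies $e_g\setminus\{g\}\subseteq R\cap(P_1\cup\dots\cup P_j)$ (each such fact is $\succs g\succs f_1$, hence in $R$), and by the induction these facts are already in $J$ at the moment $g$ is processed, so $g$ is excluded.

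With this invariant in hand, the last step is routine: when $f_1$ is reached, any hyperedge $e'$ making $J\cup\{f_1\}$ inconsistent must contain $f_1$ (otherwise $e'\subseteq J\subseteq J_k$, impossible) and satisfy $e'\setminus\{f_1\}\subseteq R\cap(P_1\cup\dots\cup P_k)$. But this forces $f_1\in N_k$ by the structural property extracted above, contradicting $f_1\in P_{k+1}$. Hence $f_1$ is added to $J$, and continuing the execution produces, by Theorem~\ref{thm:cgreedy}, a c-repair containing $f_1$---distinct from the one given by Lemma~\ref{lemma:ccat-crepair-no-f1}, completing the desired contradiction with uniqueness of the c-repair.
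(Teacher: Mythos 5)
Your proof is correct, and it reaches the same goal as the paper's proof---exhibiting an execution of $\algname{FindCRep}$ that is forced to admit $f_1$---but it organizes that execution differently. The paper's first phase adds a larger, less structured set of positive-stratum facts (those $g$ with $g\succs f_1$ \emph{or} not adjacent to $f_1$, plus anything that cannot be added anyway), and must then handle, in a separate case analysis, chains of negative-stratum facts that temporarily block $f_1$ from being $\succ$-maximal. You instead process exactly the up-set $R=\set{g : g\succs f_1}$ in a topological order of $\succ$; the closure of $R$ under $\succ$-predecessors makes maximality of each chosen fact (and eventually of $f_1$) automatic, and your invariant that phase one ends with $J=R\cap(P_1\cup\dots\cup P_k)$ is established by a single induction that treats positive-stratum facts (added, since $J\cup\set{g}\subseteq J_k$ is consistent) and negative-stratum facts (excluded, since their witnessing hyperedge $e_g$ satisfies $e_g\setminus\set{g}\subseteq R\cap(P_1\cup\dots\cup P_j)$ and is already in $J$) uniformly. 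Both arguments ultimately rest on the same two consequences of $f_1\in P_{k+1}$: every $\succs$-predecessor of $f_1$ is eliminated by iteration $k$, and $f_1$ belongs to no $N_j$, so no hyperedge through $f_1$ can be completed by facts of $J$. Your version buys a crisper invariant and avoids the paper's second case (the blocking-chain argument), at the cost of having to verify that the topological order on $R$ is a legal schedule; the only cosmetic slip is the final ``forces $f_1\in N_k$,'' which should read $f_1\in N_j$ for some $j\le k$, exactly as your earlier statement of the structural property already has it.
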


\begin{proof}
	In order to prove the lemma, we start building the corresponding c-repair, using the $\algname{FindCRep}$ algorithm, by first selecting in line~3 of the algorithm only facts from $P_1\cup\dots\cup P_k$ that either cannot be added to $J$ or satisfy at least one of the following conditions:
	\begin{itemize}
		\item $g\succs f_1$.
		\item $f_1$ and $g$ are not neighbors in $\graph$.
	\end{itemize}
	Note that not all of the facts in $P_1\cup\dots\cup P_k$ that satisfy at least one of the conditions can be selected in this process. A fact $g\in P_2$, for example, may be left out of $J$ if a fact $f\in P_1$ is not selected because it does not satisfy any of the conditions and it holds that $f\succ g$. In this case, $g$ will not belong to $\max_\succ(I)$ until $f$ is selected by the algorithm. However, since $J_k$ is consistent, all of the facts from $P_1\cup\dots\cup P_k$ that can be selected in line~3 of the algorithm during this process can also be added to $J$. 
	
	We will now prove that after selecting all of these facts, the algorithm can add the fact $f_1$ to $J$ next (that is, before adding any other fact to $J$). This will eventually result in a c-repair that includes $f_1$ and will conclude our proof. Let us assume, by way of contradiction, that we cannot add $f_1$ to $J$ next. That is, one of the following holds:
	\begin{itemize}
		\item There exists a hyperedge that contains $f_1$ such that all of the other facts in the hyperedge have already been added to $J$.
		\item We must add another fact to $J$ before it holds that $f_1\in\max_\succ(I)$.
	\end{itemize}
	
	If the first case holds, then there exists a hyperedge $e$, such that $f_1\in e$ and all of the other facts in the hyperedge were added to $J$ in the previous iterations of the algorithm. Note that all of the facts that have already been added to $J$, including $e\setminus\set{f_1}$, belong to $P_1\cup\dots\cup P_k$. Moreover, it holds that $h\succs f_1$ for every fact $h\in e\setminus\set{f_1}$, since a fact $h\in e\setminus\set{f_1}$ that does not hold $h\succs f_1$, does not satisfy any of the conditions and could not have been added to $J$. Hence, we found a hyperedge $e$, such that $f_1\in e$, and for every other fact $h\in e$ it holds that $h\in P_1\cup\dots\cup P_k$ and $h\succs f_1$. By the definition of negative stratum, it should hold that $f_1\in N_i$ for some $i\in\set{1,\dots\,{k+1}}$ (the exact value of $i$ depends on the other facts in this hyperedge), in contradiction to the fact that $f_1\in P_{k+1}$. Thus, the first case is impossible.
	
	If the second case holds, then at one of the next iterations, only facts that can be added to $J$ belong to $\max_\succ(I)$, while $f_1$ does not belong to $\max_\succ(I)$ yet. In this case, the $\algname{FindCRep}$ algorithm must add another fact to $J$ before adding $f_1$ to $J$. Since $f_1$ does not belong to $\max_\succ(I)$, there exists at least one fact $g$ such that $g\succ f_1$. Moreover, there exists at least one finite sequence, $h_1\succ\dots\succ h_m\succ f$, of facts in $I$ (with $m\ge 1$), such that $h_1\in \max_\succ(I)$. Note that for each such sequence of facts, it holds that $h\succs f$ for every fact $h$ in the sequence.
	In this case, one of the following holds for each fact $h$:
	\begin{itemize}
		\item $h$ belongs to $P_j$ or $N_j$ for some $j<{k+1}$.
		\item $h$ belongs to $P_j$ or $N_j$ for some $j\ge {k+1}$.
	\end{itemize}
	By the definition of $P_{k+1}$, after removing all of the facts from $P_1\cup\dots\cup P_k$ and $N_1\cup\dots\cup N_k$, every fact $f\in P_{k+1}$ belongs to $\max_\succs(I)$. This cannot be the case if there exists a fact $h\succs f$ that belongs to $P_j$ or $N_j$ for some $j\ge {k+1}$. Thus, each fact $h$ either belongs to $P_j$ or $N_j$ for some $j<{k+1}$. If $h_1$ belongs to some $P_j$, then $h_1$ was not added to $J$ since it does not satisfy any of the conditions. This holds true since $h_1\in \max_\succ(I)$, thus it can be selected in line~3 of the algorithm. That is, there exists a hyperedge $e\subseteq P_1\cup\dots\cup P_{k+1}$, such that $\set{f_1,h_1}\subseteq e$ and it does not hold that $h_1\succs f_1$. This is a contradiction to the fact that $h_1\succs f$, thus this cannot be the case. 
	
	The only possibility left is that $h_1\in N_j$ for some $j<{k+1}$. Since it also holds that $h_1\in\max_\succ(I)$, the fact $h_1$ can be selected in line~3 of the algorithm next. Note that by the definition of $N_j$, there exists a hyperedge that is contained in $P_1\cup\dots\cup P_j\cup\set{h_1}$, such that for every other fact $h'$ in the hyperedge it holds that $h'\succs h_1$. Since $h_1\in\max_\succ(I)$ none of these facts still belongs to $I$, thus they have already been added to $J$, and $h_1$ will not be added to $J$, but only removed from $I$. We can continue with this process and choose all of the facts in each one of the corrsponding sequences by topological order, until $f_1\in \max_\succ(I)$. (If a fact $h$ belongs to more than one sequence, it will be removed after all the previous facts in each one of these sequences are removed.) 
	
	Then, we can add $f_1$ to $J$, since every hyperedge that contains $f_1$ and is included in $P_1\cup\dots\cup P_k\cup\set{f_1}$ also contains at least one fact $g$ that does not satisfy $g\succs f$ (otherwise, $f_1$ would belong to some $N_j$). That is, each hyperedge that contains $f_1$ contains at least one fact that does not satisfy any of the conditions and was not added to $J$, and adding $f_1$ to $J$ will not close any hyperedge. This is a contradiction to our assumption, thus $f_1$ can be chosen for the repair next. As said above, this concludes our proof.
\end{proof}

From Lemma~\ref{lemma:ccat-crepair-no-f1} and Lemma~\ref{lemma:ccat-crepair-f1} we can conclude that there exist two distinct c-repairs of $I$. This is a contradiction to our assumption that $I$ has precisely one c-repair. Hence, $J_{k+1}$ is necessarily consistent, and this concludes our proof of completeness.
\def\refglobal{\ref{sec:g}}
\newcommand{\gcp}{g-categoricity   problem }
\section{Proofs for Section~\refglobal}

\subsection{Proof of Theorem~\ref {thm:piptwo}}

\begin{reptheorem}{\ref{thm:piptwo}}
	\thmpiptwo
\end{reptheorem}
\begin{proof}
	To show $\piptwo$-hardness, we construct a reduction from $\qsattwo$
	to the problem of g-categoricity$\angs{\scs^6}$.  The input to
	$\qsattwo$ consists of a CNF formula $\psi(\tup{x},\tup{y})$ where
	$\tup{x}$ and $\tup{y}$ are disjoint sequences of variables.  The
	goal is to determine whether for every assignment to $\tup{x}$ there
	exists an assignment to $\tup{y}$ such that the two satisfy $\psi$.
	We denote $\tup{x}$ by $x_1,\ldots, x_n$, $\tup{y}$ by
	$y_1,\ldots, y_k$ and $\psi = c_1 \wedge \cdots \wedge c_m$.  The
	input $(I,\succ)$ is constructed from $\psi$ by adding to it the
	following facts:
	\begin{itemize}
		\item
		$R^6(0,x_i, 0)$ and $R^6(0,x_i,1)$ for each variable $x_i$,
		\item
		$R^6(1,x_i, 0)$ and $R^6(1, x_i,1)$ for each variable $x_i$,
		\item
		$R^6(1, y_l, 0)$ and $R^6(1, y_l,1)$ for each variable $y_l$,
		\item
		$R^6(0,c_j,c_j)$ for each clause $c_j$  
		\item
		$R^6(2, 0,0)$ which we denote by $f_0$.
	\end{itemize}
	The priority is defined by:
	\begin{itemize}
		\item
		$R^6(1,x_i, b) \succ R^6(0,x_i, b)$ for all $b=0,1$ and for all variables $x_i$,
		\item
		$R^6(1,x_i, 1)\succ R^6(0,c_j, c_j)$ if $x_i$ appears in clause $c_j$,
		\item
		$R^6(1,x_i, 0)\succ R^6(0,c_j, c_j)$ if $\neg x_i$ appears in clause $c_j$ and
		\item
		$f_0\succ R^6(1,w,b)$ for all variables $w$ and $b=0,1$.
	\end{itemize}

	\begin{figure}[t]
		\centering
		\input{globalPip2.pspdftex}
		\caption{\label{fig:globalPip2}Illustration of the reduction from $\qsattwo$ to g-categoricity$\angs{\scs^6}$}
	\end{figure}

	Our construction is illustrated in Figure~\ref{fig:globalPip2}
	for
	the following input to  
	$\qsattwo$: 
	$\tup{x} = x,w$,
	$\tup{y}=y,z$
	and 
	$\psi = (w\vee y \vee z)
	\wedge
	(x \vee \neg w \vee y)
	\wedge
	(\neg x \vee \neg w \vee \neg z)
	$.
	
	Observe that the subinstance $\{f_0\}$ is a c-repair and thus it is
	also a g-repair.  To complete the proof we will show that $\{f_0\}$
	is the only g-repair if and only if $\psi$ is a ``yes'' instance;
	that is, for every assignment to $\tup x$ there exists an assignment
	to $\tup y$ such that the two satisfy $\psi$.

	\partitle{The ``if'' direction:}
	Assume that $\psi$ is a ``yes'' instance and let $J$ be a g-repair of
	$I$. We contend that $J$ is exactly $\{f_0\}$.  Assume, by way of
	contradiction that $J\ne \{f_0\}$.  Since $J$ is consistent and since
	$\Delta^6$ contains the FD $\emptyset \rightarrow A$, it holds that
	all of the facts in $J$ agree on $A$.  By our assumption that
	$J\ne \{f_0\}$, there are two cases:
	\begin{enumerate}
		\item 
		For each fact $f\in J$, it holds that $f[A]=1$. 
		By $\succ$'s definition, we have that $\{f_0\}$ is a global improvement of $J$ which is in contradiction to $J$ being a g-repair.
		\item For each fact $f\in J$, it holds that $f[A] = 0$.  We state that
		for all $j$, the fact $R^6(0,c_j,c_j) \in J$.  Assume, by way of
		contradiction, there exists $j$ for which
		$R^6(0,c_j,c_j)\not \in J$.  Since $J$ is maximal it contains a fact
		$f$ that is inconsistent with $R^6(0,c_j,c_j)$.  Since for each fact
		$f\in J$, it holds that $f[A] = 0$, there is no such fact and as a
		consequence no such $j$.  Hence for all $j$, we have that
		$R^6(0,c_j,c_j) \in J$.  Moreover, since $J$ is maximal, it must
		contain a fact $R^6(0,x_i,b)$ for each $x_i \in \tup{x}$.  Note that
		$\Delta^6$ contains the FD $B\rightarrow C $. This insures that $J$
		cannot contain both of the facts $R^6(0,x_i,0)$ and $R^6(0,x_i,1)$.
		This implies that $J$ encodes an assignment $\tau_{\tup{x}}$ for the
		variables $x_i\in \tup{x}$.  This assignment is given by
		\[
		\tau_{\tup{x}}(x_i) \eqdef
		\begin{cases}
		0 & \mbox{$R^6(0,x_i,0)\in J$}\\ 
		1& \mbox{$R^6(0,x_i,1)\in J$}
		\end{cases}
		\] 
		Since $\psi$ is a ``yes'' instance, there exists an assignment
		$\tau_{\tup{y}}$ for the variables $y_l\in \tup{y}$ that together with
		$\tau_{\tup{x}}$ satisfies $\psi$.  Let $K$ be the subinstance of $I$
		that consists of the facts $R^6(1,x_i,\tau_{\tup{x}}(x_i))$ for all
		$i$ and $R^6(1,y_l,\tau_{\tup{y}}(y_l))$ for all $l$. Note that $K$ is
		consistent.  We contend that $K$ is a global improvement of $J$.
		Since $J$ and $K$ are disjoint, it suffices to show that for every
		fact $f\in J$ there is a fact $f^\prime \in K$ such that
		$f^\prime \succ f$.  Let $f\in J$. If $f$ is of the form
		$R^6(0,x_i,b)$ then we choose $f^\prime = R^6(1,x_i,b)$.  If $f$ is of
		the form $(0,c_j,c_j)$, then we choose $f^\prime=R^6(1,w,b)$ where $w$
		is the variable of a literal that satisfies $c_j$ under the union of
		$\tau_{\tup{x}}$ and $\tau_{\tup{y}}$.  We conclude that $K$ is a
		global improvement of $J$, That is, $J$ is not a g-repair in
		contradiction to our assumption.
	\end{enumerate}

	\partitle{The ``only if'' direction:}
	Assume that $\{f_0\}$ is the only g-repair of $I$. We contend that for
	every assignment to $\tup{x}$ there exists an assignment to $\tup{y}$
	such that the two satisfy $\psi$.  Let $\tau_{\tup{x}}$ be an
	assignment for $\tup{x}$ and let $J$ be a consistent subinstance of
	$I$ that consists of the facts $R^6(0,c_j,c_j)$ for $ j=1,\ldots,k$
	and $R^6(0,x_i,\tau_{\tup{x}}(x_i))$ where $ x_i\in \tup{x}$.  Since
	$\{f_0\}$ is the only g-repair, it holds that $J$ has a global
	improvement. Let us denote such a global improvement by $K$.  Assume,
	without loss of generality, that $K$ is maximal (if not, we can extend
	$K$ with additional facts).  By $\succ$'s definition, $K$ must consist
	of facts of the form $R^6(1,w,b)$.  Since the FD
	$\emptyset \rightarrow A$ is in $\Delta^6$, for each fact $f\in K$, it
	holds that $f[A]=1$. Since $K$ is maximal and since $B \rightarrow C$
	is in $\Delta^6$, we have that $K$ encodes an assignment $\tau$ for
	$\tup{x}$ and $\tup{y}$.  This assignment is given by
	\[
	\tau(w) \eqdef
	\begin{cases}
	0 & \mbox{$R^6(1,w,0)\in K$}\\ 
	1& \mbox{$R^6(1,w,1)\in K$}
	\end{cases}
	\] 
	Since $K$ is a global improvement of $J$, it must contain the fact
	$R^6(1,x_i,b_i)$ whenever $R^6(0,x_i,b_i)$ is in $J$ (this is true
	since no other fact has a priority over $R^6(0,x_i,b_i)$).  Therefore
	$\tau$ extends $\tau_{\tup{x}}$.  Finally we observe that every $c_j$
	is satisfied by $\tau$. A satisfying literal is one that corresponds
	to a fact $R^6(1,w,b)$ that satisfies $R^6(1,w,b)\succ c_j$.  We
	conclude that $\psi$ is a ``yes'' instance as claimed.
\end{proof}

\subsection{Proof of Theorem~\ref {thm:piptwoextend}}

\begin{reptheorem}{\ref{thm:piptwoextend}}
	\thmpiptwoextend
\end{reptheorem}
\begin{proof}
	Recall the schema $\scs^6 = (\relset, \depset)$ where $\relset$ consists of a single ternary relation $R^6/3$ and $\Delta^6 = \{\emptyset \rightarrow A, B\rightarrow C\}$.
	We define a fact-wise reduction $\Pi:R^6 \rightarrow R$, using the constants $\odot, \oplus \in \consts$.
	Let $f = R^6(a,b,c)$.
	We define $\Pi $ by
	$\Pi (f) = R^6(d_1,\ldots, d_{n}) $ where for all $i=1,\ldots,n$
	\[
	d_i \eqdef
	\begin{cases}
	\odot & \mbox{$i\in X$} \\ 
	a& \mbox{$i\in Y \setminus X$}\\ 
	b& \mbox{$i\in W \setminus (X\cup Y\cup Z)$} \\
	c& \mbox{$i\in Z \setminus (X\cup Y\cup W)$}\\
	\oplus& \mbox{otherwise}
	\end{cases}
	\]
	It is left to show that $\Pi$ is a fact-wise reduction.
	To do so, we prove that $\Pi$ is well defined, is injective and preserves consistency and inconsistency.
	
	\partitle{$\mathbf{\Pi}$ is well defined}
	It suffices to show that each $d_i$ is well-defined.
	We show that the sets in the definition of $d_i$ are pairwise disjoint.
	Indeed, $X$ is disjoint from the sets $Y\setminus X $, 
	$W \setminus  (X\cup Y\cup Z)$ and $Z \setminus (X\cup Y\cup W)$.
	Moreover, $Y \setminus X$ is a subset of $Y$ and therefore it is disjoint from $W \setminus  (X\cup Y\cup Z)$ and $Z \setminus (X\cup Y\cup W)$.
	Clearly,  $W \setminus  (X\cup Y\cup Z)$ and $Z \setminus (X\cup Y\cup W)$ are disjoint.
	Hence, each $d_i$ is well-defined.
	
	\partitle{$\mathbf{\Pi}$ is injective}
	Let $f,f\p \in R^6$ where $f=R^6(a,b,c)$ and $f\p = R^6(a\p,b\p,c\p)$.
	Assume that $\Pi (f) = \Pi (f\p)$.  Let us denote
	$\Pi (f)=(d_1,\ldots, d_n)$ and $\Pi (f\p)=(d\p_1,\ldots, d\p_n)$.
	Note that $Y\setminus X $ is not empty since the FD $X\rightarrow Y$
	is nontrivial.  Moreover, $W \setminus (X\cup Y\cup Z)$ and
	$Z \setminus (X\cup Y\cup W)$ are not empty since each of $W$ and $Z$
	contains an attribute that is in none of the other three sets.
	Therefore, there are $i$, $j$ and $k$ such that $d_i = a$, $d_j = b$
	and $d_k = c$.  Hence, $\Pi (f) = \Pi (f\p)$ implies that
	$d_i = d\p_i$, $d_j = d\p_j$ and $d_k = d\p_k$. Therefore we obtain
	$a=a\p$, $b=b\p$ and $c=c\p$ which implies $f=f\p$.
	
	\partitle{$\mathbf{\Pi}$ preserves consistency}
	Let $f=R^6(a,b,c)$ and $f\p = R^6(a\p,b\p,c\p)$.
	We contend that  $\{f,f\p\}$ is consistent w.r.t 
	$\Delta^6$ if and only if  $\{\Pi(f),\Pi(f\p)\}$ is consistent w.r.t $\Delta$.
	\paragraph{The ``if'' direction:}
	Assume $\{f,f\p\}$ is consistent w.r.t 
	$\Delta^6$. We prove that  $\{\Pi(f),\Pi(f\p)\}$ is consistent w.r.t $\Delta$.
	Note that $\Pi(f)$ and $\Pi(f\p)$ agree on $X$ since for each $i\in X$ we have that $d_i =\odot$, regardless of the input. Since $\{f,f\p\}$ is consistent w.r.t 
	$\Delta^6$, it holds that $a=a\p $.
	By the definition of $\Pi$ and since $a=a\p $, we have that  $\Pi(f)$ and $\Pi(f\p)$ agree on $Y$. Hence,  $\{\Pi(f),\Pi(f\p)\}$ satisfies the constraint $X\rightarrow Y$.
	Assume that $\Pi(f)$ and $\Pi(f\p)$ agree on $W$.
	By the definition of $\Pi$ , since $W \setminus (X\cup Y \cup Z)$ is not empty, it holds that $b=b\p $.
	Since  $\{f,f\p\}$ is consistent w.r.t 
	$\Delta^6$, the fact that  $b=b\p $ implies that also $c=c\p $ (due to the constraint $B\rightarrow C$).
	Hence,  $\Pi(f)=\Pi(f\p)$. This implies and that $\{\Pi(f),\Pi(f\p)\} = \{\Pi(f)\}$ is consistent w.r.t $\Delta$.
	\paragraph{The ``only if'' direction:}
	Assume $\{f,f\p\}$ is inconsistent w.r.t 
	$\Delta^6$. We prove that  $\{\Pi(f),\Pi(f\p)\}$ is inconsistent w.r.t $\Delta$.
	There are two cases
	\begin{itemize}
		\item
		$f$ and $f\p $ do not agree on $A$.
		It holds, by $\Pi$'s definition, that $\Pi(f)$ and $\Pi(f\p )$ agree on $X$. 
		Nevertheless, since $f$ and $f\p$ do not agree on $A$, we have that $a\ne a\p $.
		Hence $\Pi(f)$ and $\Pi(f\p )$ do agree on $Y$. That is,  the constraint $X\rightarrow Y$ is not satisfied, which leads us to the conclusion that  $\{\Pi(f),\Pi(f\p)\}$ is inconsistent w.r.t $\Delta$.  
		\item
		$f$ and $f\p $ agree on $A$.
		Since  $\{f,f\p\}$ is inconsistent w.r.t 
		$\Delta^6$, we have that $f$ and $f\p $ agree on $B$ (i.e., $b=b\p $) but disagree on $C$ (i.e., $c=c\p $).
		Note that  $\Pi(f)$ and $\Pi(f\p )$ agree on $W$ since $a=a\p $ and $b =b\p $. Nevertheless, they do not agree on $Z$ since $c\ne c\p $ and the set $Z\setminus (X\cup Y \cup W)$ is not empty. 
		That is,  the constraint $W\rightarrow Z$ is not satisfied which leads us to the conclusion that  $\{\Pi(f),\Pi(f\p)\}$ is inconsistent w.r.t $\Delta$.  
	\end{itemize}
\end{proof}

\end{document}